\documentclass[11pt]{article}
\usepackage[hmargin=2.35cm,vmargin=2.35cm]{geometry}
\usepackage{url}
\usepackage{amssymb,latexsym,amsmath,amsthm,psfrag,array,cite,comment,bigints,mathrsfs,bbm,setspace}
\usepackage{amsfonts}
\usepackage{array}
\usepackage{csquotes}
\usepackage{graphicx, caption,array}
\usepackage[colorlinks,linkcolor={blue},urlcolor ={blue},citecolor={blue}]{hyperref}
\usepackage{authblk}
\usepackage{booktabs}
\usepackage[longnamesfirst]{natbib}
\usepackage{graphics}
\usepackage{pdfpages}
\usepackage{mathtools}
\usepackage{booktabs,colortbl,tabularx}
\usepackage{algorithm2e}
\usepackage{graphicx,subcaption,float}
\usepackage{adjustbox}
\usepackage{xr}
\usepackage{lscape}
\usepackage{algorithm}
\usepackage{algorithmic}
\AtBeginEnvironment{align}{
  \setlength{\abovedisplayskip}{3pt} % space above the equation
  \setlength{\belowdisplayskip}{3pt} % space below the equation
}
\AtBeginEnvironment{align*}{
  \setlength{\abovedisplayskip}{3pt} % space above the equation
  \setlength{\belowdisplayskip}{3pt} % space below the equation
}

\AtBeginEnvironment{eqnarray}{
  \setlength{\abovedisplayskip}{3pt} % space above the equation
  \setlength{\belowdisplayskip}{3pt} % space below the equation
}
\AtBeginEnvironment{eqnarray*}{
  \setlength{\abovedisplayskip}{3pt} % space above the equation
  \setlength{\belowdisplayskip}{3pt} % space below the equation
}
\usepackage{titlesec}
\titlespacing*{\section}
  {0pt}{8pt plus 2pt minus 2pt}{6pt plus 2pt minus 2pt}  % {左邊距}{標題與上方段落之間的距離（前距），基本為 10pt，可伸展或收縮 2pt。}{標題與下方內文之間的距離（後距），基本為 6pt，可伸展或收縮。}
\titlespacing*{\subsection}
  {0pt}{8pt plus 2pt minus 2pt}{4pt plus 2pt minus 2pt}
\usepackage{placeins}
\DeclareMathOperator*{\argmin}{arg\,min}

\DeclareMathOperator*{\median}{median}
\SetAlgoCaptionSeparator{}
\newtheorem{thm}{Theorem}

\newtheorem{aplemma}{Lemma}[section]

\theoremstyle{definition}

\newtheorem{rem}{Remark}
\numberwithin{equation}{section}
\newcommand{\E}{\mathrm{E}}
\newcommand{\vvec}{\mathrm{vec}}
\newcommand{\var}{\mathrm{var}}

\begin{document}

\title{Unsupervised Learning Under a General Semiparametric Clusterwise Elliptical Distribution: Efficient Estimation, Optimal Clustering, and Consistent Cluster Selection
\footnotetext{Keywords: Clusterwise elliptical distribution, Optimal clustering, Pseudo-maximum likelihood estimation, Semiparametric efficiency, Semiparametric information criterion, Separation penalty estimation}}

\author[1]{Jen-Chieh Teng}
\author[2]{Sheng-Hsin Fan}
\author[3]{Chin-Tsang Chiang}
\author[4]{Ming-Yueh Huang}
\author[5]{Alvin Lim}

\affil[1]{Data Science Degree Program, National Taiwan University, Taipei, Taiwan}
\affil[2]{Department of Mathematics, National Taiwan University, Taipei, Taiwan} 
\affil[3]{Institute of Applied Mathematical Sciences, National Taiwan University, Taipei, Taiwan} 
\affil[4]{Institute of Statistical Science, Academia Sinica, Taipei, Taiwan} 
\affil[5]{Goizueta Business School, Emory University, Atlanta, GA, USA} 

%\thanks{}%
\date{\today}%
\maketitle
 
% ----------------------------------------------------------------
\begin{abstract}
We introduce a general semiparametric clusterwise elliptical distribution to assess how latent cluster structure shapes continuous outcomes. Using a subjectwise representation, we first estimate cluster-specific mean vectors and a cluster-invariant scatter matrix by minimizing a weighted sum of squares criterion augmented with a separation penalty; we provide an initialization scheme and a computational algorithm with guaranteed convergence. This initial estimator consistently recovers the true clusters and seeds a second phase that alternates pseudo-maximum likelihood (or pseudo-maximum marginal likelihood) estimation with cluster reassignment, yielding asymptotic semiparametric efficiency and an optimal clustering that asymptotically maximizes the probability of correct membership. We also propose a semiparametric information criterion for selecting the number of clusters. Monte Carlo simulations and empirical applications demonstrate strong finite-sample performance and practical value.
\end{abstract}
% ----------------------------------------------------------------
\begin{spacing}{1.9}
\begin{section}{Introduction}

Cluster analysis is a core tool across disciplines---from bioinformatics \citep{liu2008statistical} to marketing \citep{gomes2023customer}---for uncovering latent subpopulations when class labels are unavailable. Also called segmentation in marketing research, clustering is increasingly used not only to decide the number of groups and group observations by similarity but also to relate latent cluster structure to observed variables. Researchers have employed both distribution-free methods \citep[e.g.,][]{mclachlan19829,mclachlan1988mixture} and distribution-based methods \citep[e.g.,][]{gordon1999classification,hastie2009elements} to explore such relationships. 

Two motivating applications illustrate these needs. In personalized marketing, firms increasingly build customer representations from implicit behavior---product views, purchase sequences, and RFM (Recency, Frequency, Monetary) summaries---and then segment customers to enable targeted treatment assignment. The default $k$-means algorithm \citep{macqueen1967some} effectively assumes spherical, equally sized clusters, an assumption often violated in heterogeneous retail data. Relaxing this geometry is crucial to recover life-stage-like segments (e.g., singles vs. families) that are difficult to infer from sparse or missing demographics, and to move beyond heuristic targeting toward automated, data-driven personalization. In healthcare, patient stratification with the Pima Indian Diabetes dataset offers a complementary view. While supervised models can achieve high accuracy with careful preprocessing and feature engineering like principal component analysis and random forests \citep{salih2024diabetic}, they depend on labels. An unsupervised approach first discovers latent clinical profiles from biomarkers, which can augment downstream prediction and guide risk-stratified interventions. Given noise, missingness, and heterogeneity---ubiquitous in clinical data---elliptical clusters capture elongated, correlated patterns (e.g., glucose-BMI axes) that spherical models miss. Long-term findings in the Pima population underscore the stakes, with diabetes a key driver of kidney failure and other complications \citep{nelson2021pima}.

This article introduces a semiparametric clusterwise elliptical distribution (SCED) for continuous data and develops an efficient estimation-clustering procedure. The framework covers a broad class of elliptical laws---including clusterwise multivariate normal and multivariate 
$t$ distributions---providing substantially more modeling flexibility than conventional parametric mixtures. Leveraging a subjectwise representation of the latent structure, we construct a weighted least-squares objective with a separation penalty that estimates cluster-specific mean vectors and a cluster-invariant scatter matrix while producing cluster assignments. A subsequent pseudo-maximum likelihood estimator attains the semiparametric efficiency bound, and the refined clustering rule asymptotically maximizes the probability of correct membership. The nonconvex program in the separation penalty estimation procedure is solved via the difference of convex functions programming (DCFP) \citep{an2005dc} combined with the alternating direction method of multipliers algorithm (ADMM) \citep{boyd2011distributed}, with a tailored initialization to improve stability. Relative to pairwise-fusion penalties \citep{chi2015splitting}, the approach reduces computational complexity \citep[cf.][]{tang2021individualized} and strengthens clustering consistency. A semiparametric information criterion (SPIC) is proposed to select the number of clusters.

Our contributions are fourfold: (1) a general SCED framework that subsumes Gaussian and $t$ clusterwise models while avoiding misspecification from fully parametric generators; (2) an estimation-clustering scheme that achieves asymptotic semiparametric efficiency and provides an asymptotically optimal clustering rule; (3) a scalable DCFP+ADMM algorithm with an initialization strategy for the separation penalty estimation procedure; and (4) a semiparametric information criterion for data-adaptive determination of the number of clusters.

This work builds on and connects two major strands of the literature. Distribution-free clustering includes hierarchical and non-hierarchical families. Hierarchical approaches comprise agglomerative procedures \citep{lance1966generalized,lance1967general, jambu1978classification} and divisive schemes \citep{macnaughton1964dissimilarity}. Among non-hierarchical methods, 
$k$-means \citep{macqueen1967some}---with antecedents in \cite{steinhaus1956division} and \cite{forgy1965cluster} and efficient updates by \cite{lloyd1982least}---is canonical. Invariance-based criteria leveraging within- and between-cluster variance under linear transformations were proposed by \cite{friedman1967some}; \cite{marriott1971practical} suggested selecting 
the number of clusters by minimizing the determinant of the within-cluster covariance scaled by the square of the number of clusters. Comprehensive reviews appear in \cite{cormack1971review} and \cite{gordon:1987}. However, these procedures generally provide only partial guarantees; convex relaxations \citep{chi2015splitting} improve tractability but do not fully settle the statistical consistency of the resulting cluster estimators. Distribution-based clustering identifies latent groups via mixtures of cluster-specific distributions. \cite{wolfe1965computer} studied univariate Gaussian mixtures; \cite{day1969estimating} extended to multivariate Gaussians and documented likelihood singularities when covariances are unrestricted. Robustness concerns motivated multivariate $t$ mixtures \citep{mclachlan1998robust}; see \cite{mclachlan2000finite} for a comprehensive treatment. Closer to our approach, subjectwise representations for likelihood-based estimation were developed by \cite{symons1981clustering} and extended via the classification EM of \cite{celeux1992classification}, with covariance reparameterizations (orientation, volume, and shape) by \cite{banfield1993model} and \cite{celeux1995gaussian}. Bayesian model selection for clustering is discussed in \citet{fraley1998many,fraley2002model}. Our semiparametric elliptical specification departs from likelihood-centric mixtures by leaving the density generator unspecified, thereby gaining robustness while preserving interpretability of cluster means and scatter.

The remainder of the article proceeds as follows. Section \ref{sec:sec2} formalizes the SCED model and its subjectwise representation. Section \ref{sec:sec3} develops a novel method for estimation and clustering. Section \ref{sec:sec4} details the DCFP+ADMM algorithm and initialization and outlines the full computational procedure. Section \ref{sec:sec5} reports simulation evidence on finite-sample estimation and clustering accuracy. Section \ref{sec:sec6} presents applications in personalized marketing and patient stratification. Section \ref{sec:sec7} concludes with main findings and future directions.

\end{section}{}

\begin{section}{Clusterwise and Subjectwise Elliptical Distributions}
\label{sec:sec2}

Let $X$ be a $p \times 1$ vector of continuous variables with support $\mathcal{X}$, and let $C$ be a latent cluster variable taking values in support $\mathcal{C} = \{1, \dots, k \}$, where $k$ denotes the number of clusters. We consider a semiparametric clusterwise elliptical distribution (SCED) for the conditional density of $X$ given $C$:
\begin{align}
f(x| c;\theta)=|\Sigma |^{-\frac{1}{2}}f_{p}\big((x-\mu_{c})^{\top}\Sigma^{-1}(x-\mu_{c})\big),\, x\in\mathcal{X}, c\in \mathcal{C},\label{2.1}
\end{align}
where $\theta$ is a column vector comprising the mean vectors $\mu_{1},\dots, \mu_{k}$ and the upper triangular entries of the scatter matrix $\Sigma$, and
$f_{p}$ is an unknown density generator that satisfies the normalization condition
$2\pi^{p/2} \int^{\infty}_{0}r^{p-1}f_{p}(r^2)dr/\Gamma\big(p/2)=1$.
To ensure identifiability, the first diagonal entry of $\Sigma$ is fixed at one.
Given estimators of $f_{p}$ and $\Sigma$, the cluster-invariant variance matrix of $X$ ($\Sigma_{x}$) can be estimated via
$\Sigma_{x}=(\pi^{p/2}  \int^{\infty}_{0}r^{p+1}f_{p}(r^2)dr/\Gamma(p/2+1))\Sigma$.
Empirical evidence indicates that $f_{p}$ is typically monotone and includes, as special cases, the density generators of the multivariate normal and multivariate $t$ distributions. Imposing this restriction, however, does not by itself improve parameter estimation.

Under the SCED, the observed variables $X$ can be expressed as
\begin{align}
X=\sum^{k}_{c=1}I(C=c)\mu_{c}+\Sigma^{\frac{1}{2}}U, \label{2.2}
\end{align}
where $I(\cdot)$ denotes an indicator function and $U$ is a $p\times 1$ spherical random vector. 
In unsupervised settings, standard pseudo-maximum likelihood methods are not directly applicable to the $k$-component mixture density
\begin{align}
f(x;\eta)=\sum^{k}_{c=1}\pi_{c}f(x|c;\theta),\label{2.2.1}
\end{align}
where $\eta=(\theta^{\top},\pi^{\top})^{\top}$, $\pi=(\pi_{1},\dots,\pi_{k-1})^{\top}$, and $\pi_{c}=P(C=c)>0$ for $c\in\mathcal{C}$, with $\sum^{k}_{c=1}\pi_{c}=1$. 
To address this limitation, we adopt a subjectwise representation of model (\ref{2.2}) in the first phase of estimation:
\begin{align}
X_{i}=\beta_{i}+\Sigma^{\frac{1}{2}}U_{i},~ i=1,\dots, n, \text{ with }\beta_{1},\dots,\beta_{n}\in\{\mu_{1},\dots,\mu_{k}\}.\label{2.3}
\end{align}

Given an initial consistent clustering estimator, we develop the pseudo-maximum likelihood and
pseudo-maximum marginal likelihood procedures for estimating the parameter vector $\eta$ while simultaneously refining the clusters to maximize the probability of correct membership. The method reduces the $p$-dimensional density estimation problem to the estimation of a low-dimensional density generator. Specifically, we estimate the density generator $f_{p}$ through the density of the transformed variable
$Y=\sum^{k}_{c=1}I(C=c) Y_{c}$,
\begin{align}
g(y)=\frac{\pi^{\frac{p}{2}}}{\Gamma\big(\frac{p}{2}\big)}(\psi(y))^{\frac{p}{2}-1}\psi^{(1)}(y)
f_{p}\big(\psi(y)\big),\label{2.4}
\end{align}
where $y$ is a realization of $Y$, $Y_{c}=\Psi((X-\mu_c)^{\top}\Sigma^{-1}(X-\mu_c))$, and $\Psi$ is a strictly increasing function with continuous derivative 
$\psi^{(1)}$, the derivative of its inverse $\psi=\Psi^{-1}$. The density $g$ provides an alternative representation of the conditional density 
$f(x| c;\theta)$ in the SCED as
\begin{align}
f(x| c;\theta)=w(y_{c}) g(y_{c}),\label{2.6}
\end{align}
where $y_{c}$ is a realization of $Y_{c}$ and $w(y_{c})=\Gamma(p/2)(\psi(y_{c}))^{1-p/2}/(|\pi\Sigma|^{1/2}\psi^{(1)}(y_{c}))$ for $c\in\mathcal{C}$. When $g$ is estimated via kernel smoothing, its performance may deteriorate when $(X-\mu_c)^{\top}\Sigma^{-1}(X-\mu_c)$ is close to zero or extremely large \citep[see][]{L2005}. 
To mitigate this issue, we apply the transformation
$Y=\sum^{k}_{c=1}I(C=c)\{-d_{0}+[d^{p/2}_{0}+((X-\mu_{c})^{\top}\Sigma^{-1}(X-\mu_{c}))^{p/2}]^{2/p}\}$,
where $d_{0}>0$ is a fixed constant. Although the choice of $\Psi$ influences the leading constant in the asymptotic mean integrated squared
error of the estimator of $f_{p}$, there is no known practical method for selecting $\Psi$ optimally. More importantly, the asymptotic properties of the proposed estimator of $\theta$ is invariant to the  choice of $\Psi$.

\end{section}

\begin{section}{Estimation and Cluster Selection}
\label{sec:sec3}

Let $\mathcal{G} = \{ \mathcal{G}_1, \dots, \mathcal{G}_k \}$ denote a partition of the individual-level index set $\{1, \dots, n\}$, with $\mathcal{G}^{\text{o}}$ representing the set of underlying clusters. In model (\ref{2.2}), let
$\mu$ be the column vector comprising
$\mu_{1},\dots,\mu_{k}$, with $\mu^{\text{o}}$ as its true value. Similarly, in model (\ref{2.3}),
let $\beta$ be the column vector comprising
$\beta_{1},\dots,\beta_{n}$, with the true value $\beta^{\text{o}}$ satisfying $\beta^{\text{o}}_i=\mu^{\text{o}}_{c}$ for all $i\in \mathcal{G}^{\text{o}}_{c}$, 
$i=1,\dots, n$, $c\in\mathcal{C}$. Based on unsupervised data $\{X_{i}\}^{n}_{i=1}$, we develop 
a novel method for estimation and clustering. The assumptions and proofs of the main results are provided in Appendices \ref{sec:Assumptions}--\ref{Thm3.4.pf}.

\begin{subsection}{Separation Penalty Estimation}
\label{sec:sec3.1}

Given a $p\times p$ positive-definite matrix $W$, the parameters $\beta^{\text{o}}$ and $\mu^{\text{o}}$ are estimated by minimizing a weighted sum of squares objective with an $\ell_{1}$-based separation penalty:
\begin{align}
\textsc{SS}_{\text{sp}}(\beta,\mu;\lambda) 
= \frac{1}{2} \sum_{i=1}^n (X_{i}-\beta_{i})^{\top}W(X_{i}-\beta_{i}) + \lambda\sum^{n}_{i=1}\min_{c} \big\|W^{\frac{1}{2}}(\beta_{i} -\mu_{c})\big\|_{1},\label{3.1.1} 
\end{align}
where $\lambda>0$ is a shrinkage parameter and $\|\cdot\|_{1}$ denotes the $\ell_{1}$-norm. The parameter $\lambda$ regulates the within-cluster heterogeneity: larger values promote tighter clustering by encouraging separation, whereas smaller values  allow for greater within-cluster variability. When $W=I_{p}$, the first term in (\ref{3.1.1}) coincides with the classical $k$-means objective. However, the clusters induced by minimizing this criterion may not consistently recover $\mathcal{G}^{\text{o}}$. In our estimation, $W$ is chosen as a consistent estimator of $\Sigma^{-1}_{x}$.

For a given $\lambda$, the separation penalty estimator of $(\beta^{\text{o}},\mu^{\text{o}})$ is defined as
\begin{align}
(\hat{\beta},\hat{\mu}) \in \argmin_{\{\beta,\mu\}} \textsc{SS}_{\text{sp}}(\beta,\mu;\lambda).\label{3.1.2}
\end{align}
The tuning parameter $\lambda$ is set to a minimizer of $ \sum_{i=1}^n (X_{i}-\hat{\beta}_{i})^{\top}(X_{i}-\hat{\beta}_{i})$.
The corresponding clustering estimator $\widehat{\mathcal{G}}= \{\widehat{\mathcal{G}}_1, \dots, \widehat{\mathcal{G}}_{k}\}$ of $\mathcal{G}^{\text{o}}$ is obtained by assigning the $i$th data point to the cluster set $\widehat{\mathcal{G}}_{c}$ according to 
$c\in\argmin_{\{c_1\in\mathcal{C}\}} \| W^{\frac{1}{2}}(\hat{\beta}_{i} - \hat{\mu}_{c_1})\|_1$, $i=1,\dots, n$.
The variance matrix $\Sigma_{x}$ is estimated by
\begin{align}
\widehat{\Sigma}_{x}=\frac{1}{n}\sum^{n}_{i=1}(X_{i}-\hat{\beta}_{i})^{\top}(X_{i}-\hat{\beta}_{i})
\text{ or } \widehat{\Sigma}_{x}=\frac{1}{n}\sum^{n}_{i=1}\sum^{k}_{c=1}I\big(i\in\widehat{\mathcal{G}}_{c}\big)  (X_{i}-\hat{\mu}_{c})^{\top}(X_{i}-\hat{\mu}_{c}).\label{3.1.4}
\end{align}
Although the asymptotic behavior of 
$(\hat{\beta},\hat{\mu})$ is invariant to the choice of $W$,
empirical results show that setting $W=\widehat{\Sigma}^{-1}_{x}$ in (\ref{3.1.1}) yields better agreement between the clustering estimator and the underlying clusters than using $W=I_{p}$. 

Define the oracle estimator of $\beta^{\text{o}}$ as $\hat{\beta}^{\text{or}} = \sum_{c = 1}^{k} (I(1 \in \mathcal{G}^{\text{o}}_{c}), \dots, I(n \in \mathcal{G}^{\text{o}}_{c}))^{\top} \otimes \hat{\mu}^{\text{or}}_{c}$, where $\otimes$ denotes the Kronecker product, and define $\hat{\mu}^{\text{or}}=(\hat{\mu}^{\text{or}}_{1},\dots,\hat{\mu}^{\text{or}}_{k})^{\top}$ as a minimizer of
$ \sum_{i=1}^n \sum^{k}_{c=1}I(C_{i}=c)(X_{i}-\mu_{c})^{\top}(X_{i}-\mu_{c})$.
 The oracle property of $(\hat{\beta}^{\lambda},\hat{\mu}^{\lambda})$ holds under the conditions on the spherical random vector $U$ in model (\ref{2.2}), the parameter spaces $\mathcal{B}$ and $\mathcal{U}$ of $\beta$ and $\mu$, respectively, and the regularization parameter $\lambda$.
  \begin{thm} 
\label{Thm3.1}
Under assumptions {A1}--{A3},
\begin{align*}
    P\big(\hat{\beta}= \hat{\beta}^{\text{or}}\big) {\longrightarrow} 1 \text{ and }P\big(\hat{\mu}= \hat{\mu}^{\text{or}}\big) {\longrightarrow} 1\text{ as } n {\longrightarrow} \infty. 
   \end{align*} 
\begin{proof}
See Appendix \ref{Thm3.1pf}.
\end{proof}
\end{thm}
\noindent Theorem \ref{Thm3.1} implies that the probability of exact recovery, $\widehat{\mathcal{G}}=\mathcal{G}^{\text{o}}$, converges to one as the sample size increases. 
From the asymptotic normality of $\hat{\mu}^{\text{or}}$,
\begin{align}
\sqrt{n}\big(\hat{\mu} - \mu^{\text{o}}\big)\stackrel{d}{\longrightarrow} N\big(0,\text{diag}(\pi_{1}\Sigma_{x},\dots, \pi_{k}\Sigma_{x})\big) \text{ as } n \longrightarrow \infty. \label{3.1.7}
\end{align}

\end{subsection}

\begin{subsection}{Semiparametric Efficient Estimation}
\label{sec:sec3.2}
 
For notational convenience, define 
$\widehat{Y}_{i}=\sum^{k}_{c=1}I(i\in \widehat{\mathcal{G}}_{c})Y_{ic}$, $i=1,\dots, n$.
To estimate the density $g(y)$ of $Y$, we adopt the reflection technique of \cite{cowling1996pseudodata} and \cite{zhang1999improved} to reduce
boundary bias near zero. The resulting boundary-corrected kernel is
 \begin{align*}
 K_{h}(v_{1},v_{2})=\frac{1}{h}K\Big(\frac{v_{1}-v_{2}}{h}\Big)+\frac{1}{h}K\Big(\frac{-v_{1}-v_{2}}{h}\Big),
\end{align*}
where $K$ is a symmetric, compactly supported, and twice continuously differentiable second-order kernel with its second derivative satisfying
$\int K^{(2)}(u)du=0$, and $h>0$ denotes the bandwidth.

Given a fixed value of $\theta$, we estimate $g(y)$ using the kernel estimator
\begin{align}
\hat{g}_{h}(y;\theta )=\frac{1}{n}\sum^{n}_{i=1}K_{h}\big(\widehat{Y}_{i},y\big), \label{3.2.1}
\end{align}
and construct a corresponding plug-in estimator of 
$f(x| c;\theta)$ as
\begin{align}
\hat{f}_{h}(x| c;\theta)= w(y_{c})\hat{g}_{h}(y_{c};\theta), c\in\mathcal{C}. \label{3.2.2}
\end{align}
By replacing $f(x| c;\theta)$ with $\hat{f}_{h}(x| c;\theta)$ and $I\big(i\in\mathcal{G}^{\text{o}}_{c}\big)$ with $I\big(i\in\widehat{\mathcal{G}}_{c}\big)$ in the log-likelihood function
\begin{align}
\ell(\eta)=\sum^{n}_{i=1}\sum^{k}_{c=1}I\big(i\in\mathcal{G}^{\text{o}}_{c}\big)\log \big(\pi_{c}f(X_{i}| c;\theta)\big)\stackrel{\triangle}{=}\sum^{n}_{i=1}\sum^{k}_{c=1}I\big(i\in\mathcal{G}^{\text{o}}_{c}\big)\log f(X_{i}, c;\eta),\label{3.2.3}
\end{align} 
we obtain the log-pseudo-likelihood function
\begin{align}
p\ell_{1}(\eta)=\sum^{n}_{i=1}\sum^{k}_{c=1}I\big(i\in\widehat{\mathcal{G}}_{c}\big)\log \big(\pi_{c}\hat{f}_{h}(X_{i}| c;\theta)\big)
\stackrel{\triangle}{=}\sum^{n}_{i=1}\sum^{k}_{c=1}I\big(i\in\widehat{\mathcal{G}}_{c}\big)\log  \hat{f}_{h}(X_{i}, c;\eta). \label{3.2.4}
\end{align}
The maximizer $\tilde{\eta}$ of $p\ell_{1}(\eta)$, for a fixed bandwidth $h$, is defined as the pseudo-maximum likelihood estimator. The corresponding estimator of $\pi$ is explicitly given by
$\tilde{\pi}=(|\widehat{\mathcal{G}}_{1}|,\cdots, |\widehat{\mathcal{G}}_{k-1}|)^{\top}/n$.
To initialize the maximization of $p\ell_{1}(\eta)$ with respect to $\theta$, we use the separation penalty estimator $\hat{\mu}$ and the scatter matrix estimator $\widehat{\Sigma}=\widehat{\Sigma}_{x}/\widehat{\sigma}^{2}_{x}$, where $\widehat{\sigma}^{2}_{x}$ denotes the first diagonal element of  
$\widehat{\Sigma}_{x}$. The bandwidth $h$ is selected as $\hat{h}= n^{3/80}\tilde{h}$, where $\tilde{h}$ minimizes the cross-validation criterion of \cite{bowman1984alternative},
\begin{align}
\textsc{CV}_{1}(h) = \frac{1}{n}\sum\limits^{n}_{i=1}\bigg[\int \big(\hat{g}^{\,-i}_{h}\big(y;\hat{\theta}\,\big)\big)^{2}dy-2\hat{g}^{\,-i}_{h}\big(\widehat{Y}_{i};\hat{\theta}\,\big)\bigg],\label{3.2.5}
\end{align}
with $\hat{g}^{\,-i}_{h}(y;\theta)$ denoting the leave-one-out version of the estimator in (\ref{3.2.1}). The conventional choice $\tilde{h}=O_{p}(n^{-1/5})$ is unsuitable in the current setting, as it violates assumption {A4}, which is required for the $\sqrt{n}$-consistency of $\tilde{\eta}$.

Let the true value $\eta^{\text{o}}$ of $\eta$ be an interior point of the parameter space $\mathcal{H}$.
The proposed pseudo-maximum likelihood estimator $\tilde{\eta}$ achieves the same asymptotic behavior as the estimator constructed from fully observed data $\{(X_{i},C_{i})\}^{n}_{i=1}$.
\begin{thm} 
\label{Thm3.2}
Under assumptions {A4}--{A8},
\begin{align*}
\tilde{\eta}\stackrel{p}{\longrightarrow}\eta^{\text{o}}\text{ and } \sqrt{n}\big(\tilde{\eta}-\eta^{\text{o}}\big)\stackrel{d}{\longrightarrow} N\big(0,V_{1}^{-1}\big) \text{ as } n {\longrightarrow} \infty,
\end{align*} 
where $V_{1}=\var\big(\sum^{k}_{c=1}I(C=c)f^{[1]}(X,c;\eta^{\text{o}})/f^{[0]}(X,c;\eta^{\text{o}})\big)$ and $f^{[m]}(x,c;\eta)$ is defined in (\ref{A.0.3}).
\begin{proof}
See Appendix \ref{Thm3.2-3.pf}.
\end{proof}
\end{thm}
\noindent As shown in \cite{chiang2024general}, $V_{1}$ is invariant to the specification of $\Psi$. Following the approach of \cite{BKRW1998Efficient}, we show in Appendix \ref{speff.pf} that $V_{1}^{-1}$ is the semiparametric efficiency bound for the present model.

In the context of unsupervised learning under parametric clusterwise elliptical distributions, the underlying density of $X$ is given by the $k$-component mixture density in (\ref{2.2.1}).
As an alternative to $p\ell_{1}(\eta)$ in (\ref{3.2.4}), we further consider the log-pseudo-marginal-likelihood function
\vspace{0.01in}
\begin{align}
p\ell_{2}(\eta)= \sum^{n}_{i=1}\log \bigg(\sum^{k}_{c=1}\pi_{c}\hat{f}_{\hat{h}}(X_{i}| c;\theta)\bigg)\stackrel{\triangle}{=}\sum^{n}_{i=1}\log \hat{f}_{\hat{h}}(X_{i};\eta).\label{3.2.7}
\end{align}
Under the same conditions as in Theorem \ref{Thm3.2}, with assumption {A8} replaced by the requirement that
$V_{2}=\var(\sum^{k}_{c=1} f^{[1]}(X,c;\eta^{\text{o}})/\sum^{k}_{c=1}f(X,c;\eta^{\text{o}}))$ is positive definite,
the pseudo-maximum marginal likelihood estimator $\check{\eta}$ is consistent and asymptotically normal.

\begin{thm} 
\label{Thm3.3}
Under assumptions {A4}--{A7} and {A9},
\begin{align*}
\check{\eta}\stackrel{p}{\longrightarrow}\eta^{\text{o}}\text{ and } \sqrt{n}\big(\check{\eta}-\eta^{\text{o}}\big)\stackrel{d}{\longrightarrow} N\big(0,V_{2}^{-1}\big) \text{ as } n {\longrightarrow} \infty.
\end{align*}
\begin{proof}
 See Appendix \ref{Thm3.2-3.pf}.
\end{proof}
\end{thm}
\noindent Our research findings underscore the importance of considering application contexts when choosing between $\tilde{\eta}$ and $\check{\eta}$.

\end{subsection}

\begin{subsection}{Optimal Clustering and Refined Estimation}
\label{sec:sec3.3}

From the SCED and the corresponding cluster probabilities, the posterior probability that a data point with observation $x$ belongs to Cluster $c$ is given by
$\pi(c|x;\eta^{\text{o}})=f(x,c;\eta^{\text{o}})/f(x;\eta^{\text{o}})$, $c\in\mathcal{C}$. Let $C^{\text{o}}$ denote the cluster membership associated with a new observation $X_{0}$. The Bayes classifier $\widetilde{C}$ assigns
$X_{0}$ to the cluster with the highest posterior probability:
\begin{align}
\widetilde{C}=c_{0} \text{ if } \pi(c_{0}|X_{0};\eta^{\text{o}})=\max_{\{c\in\mathcal{C}\}} \pi(c|X_{0};\eta^{\text{o}}).\label{3.3.4}
\end{align}
For any classifier $\bar{C}$ based on $X_{0}$, the probability of correct membership is
\begin{align*}
P(\bar{C}=C^{\text{o}})=&E\bigg[\sum^{k}_{c_{0}=1}E[I(\bar{C}=c_{0})I(C^{\text{o}}=c_{0})|X_{0}]\bigg] 
= E\bigg[\sum^{k}_{c_{0}=1}I(\bar{C}=c_{0})\pi(c_{0}|X_{0};\eta^{\text{o}})\bigg].
\end{align*}
By construction, for any $c_{1},c_{2}\in\mathcal{C}$ with $\widetilde{C}=c_{1}$ and $\bar{C}=c_{2}$,  
$\pi (c_{1} |X_{0};\eta^{\text{o}})-\pi(c_{2} | X_{0};\eta^{\text{o}})\geq 0$.
It follows that $\widetilde{C}$ maximizes the probability of correct membership.

By replacing $f(x,c;\eta^{\text{o}})$ with $\hat{f}_{\tilde{h}}(x,c;\tilde{\eta})$ and $f(x;\eta^{\text{o}})$
with $\hat{f}_{\tilde{h}}(x;\tilde{\eta})$, $\pi(c|x;\eta^{\text{o}})$ is estimated by
\begin{align}
\tilde{\pi}\big(c|x;\tilde{\eta}\big)=\frac{\hat{f}_{\tilde{h}}\big(x,c;\tilde{\eta}\big)}{\hat{f}_{\tilde{h}}(x;\tilde{\eta})}, c\in\mathcal{C}.\label{3.3.2}
\end{align}
Using these posterior cluster probability estimators, the separation-penalty-based clustering estimator $\widehat{\mathcal{G}}$ is refined to $\widetilde{\mathcal{G}}$ via the optimal clustering rule:
\begin{align}
\text{Assigning the $i$th data point to } \widetilde{\mathcal{G}}_{c_{0}}\text{ such that } \tilde{\pi}\big(c_{0}|X_{i};\tilde{\eta}\big)=\max_{\{c\in\mathcal{C}\}}\tilde{\pi}\big(c|X_{i};\tilde{\eta}\big),\, i=1,\dots, n.\label{3.3.3}
\end{align}
In practice, $\tilde{\eta}$ in (\ref{3.3.2}) and (\ref{3.3.3}) may be replaced by $\check{\eta}$.
Given $\widetilde{\mathcal{G}}$, the refined pseudo-maximum likelihood and pseudo-maximum marginal likelihood estimators of $\eta^{\text{o}}$ are obtained by maximizing the corresponding log-pseudo-likelihood functions:
\begin{align}
p\ell_{1r}(\eta)=&\sum^{n}_{i=1}\sum^{k}_{c=1}I(i\in\widetilde{\mathcal{G}}_{c})\log \big(\pi_{c}\tilde{f}_{\hat{h}^{*}}(X_{i}| c;\theta)\big)\stackrel{\triangle}{=}
\sum^{n}_{i=1}\sum^{k}_{c=1}I(i\in\widetilde{\mathcal{G}}_{c})\log \tilde{f}_{\hat{h}^{*}}(X_{i},c;\eta), \label{3.3.5}\\
p\ell_{2r}(\eta)=&\sum^{n}_{i=1}\log \bigg(\sum^{k}_{c=1}\pi_{c}\tilde{f}_{\hat{h}^{*}}(X_{i}| c;\theta)\bigg)\stackrel{\triangle}{=}
\sum^{n}_{i=1}\log \tilde{f}_{\hat{h}^{*}}(X_{i};\eta), \label{3.3.6}
\end{align}
where $\tilde{f}_{h}(x| c;\theta)= w(y_{c})\tilde{g}_{h}(y_{c};\theta)$, $c\in\mathcal{C}$,
$\tilde{g}_{h}(y;\theta)=\sum^{n}_{i=1}K_{h}(\widetilde{Y}_{i},y)/n$, $\widetilde{Y}_{i}=\sum^{k}_{c=1}I(i\in\widetilde{\mathcal{G}}_{c})Y_{ic}$, and $\hat{h}^{*}= n^{3/80}\breve{h}$, with $\breve{h}$ minimizing the cross-validation criterion
\begin{align}
\textsc{CV}_{2}(h) = \frac{1}{n}\sum\limits^{n}_{i=1}\bigg[\int \big(\tilde{g}^{-i}_{h}\big(y;\widetilde{\theta}\,\big)\big)^{2}dy-2\tilde{g}^{-i}_{h}\big(\widetilde{Y}_{i};\tilde{\theta}\,\big)\bigg].\label{3.3.7}
\end{align}
This procedure is iterated until convergence.

\end{subsection}
 
\begin{subsection}{Selecting the Number of Clusters}
\label{sec:sec3.4}

Assume that $\lim_{n\rightarrow\infty}P\big(\widetilde{\mathcal{G}}=\mathcal{G}^{\text{o}}\big)=1$ for every fixed $k \geq 1$.
Let $k^*$ denote the true number of clusters, and define $\mathcal{G}^{*\text{o}} = \{\mathcal{G}^{*\text{o}}_{1}, \dots, \mathcal{G}^{*\text{o}}_{k^{*}}\}$ and 
$\eta^{*\text{o}}=(\eta^{*\text{o}}_{1}, \dots, \eta^{*\text{o}}_{k^{*}})^{\top}$, where $\mathcal{G}^{*\text{o}}_{\ell}=\mathcal{G}^{\text{o}}_{\ell}$ and $\eta^{*\text{o}}_{\ell}=\eta^{\text{o}}_{\ell}$ for $\ell=1,\dots,k^{*}$.
Let $C^{*}$ be the latent cluster variable associated with $\mathcal{G}^{*}$, supported on
$\mathcal{C}^{*} = \{1, \dots, k^{*}\}$.
Furthermore, let $\breve{\eta}$ and $\breve{\eta}^{*}$ denote the refined pseudo-maximum marginal likelihood estimators of $\eta^{\text{o}}$ and 
$\eta^{*\text{o}}$, respectively, and let $\widetilde{\mathcal{G}}^{*}$ denote
the refined clustering estimator of ${\mathcal{G}}^{*\text{o}}$.

Define $\mathcal{C}_{\mathcal{G}}=\big\{ \mathcal{G}^{\text{o}}: \mathcal{G}_{\ell}^{*\text{o}} = \bigcup_{\{\ell_1:\mathcal{G}^{\text{o}}_{\ell_1} \subseteq \mathcal{G}_{\ell}^{*\text{o}},\theta^{\text{o}}_{\ell_1}=\theta^{*\text{o}}_{\ell}\}}\mathcal{G}^{\text{o}}_{\ell_1}~\forall \ell\in\mathcal{C}^{*} \big\}$ and
$p\ell(k)=\sum^{n}_{i=1}\log \tilde{f}^{-i}_{\breve{h}}(X_{i};\breve{\eta})$. Building on the arguments of Theorem \ref{Thm3.2} and the associated technical lemma, we show that
\begin{align}
&\frac{1}{n}  p\ell(k^{*})- \frac{1}{n} p\ell(k)= \left\{
\begin{array}{ll}
O_p\big(n^{-\frac{4}{5}}\big)+O_{p}\big(n^{-1}\big)
& \mathcal{G}^{\text{o}} \in \mathcal{C}_{\mathcal{G}}, \\
b^{2}(k) +o_p(1)
& \text{otherwise,}
\end{array}\right. \label{3.4.3}
\end{align}
where $b(k)>0$ is a constant. The leading term $O_{p}(n^{-4/5})$ in the asymptotic expansion of $p\ell(k)$ suggests a penalty of order $k \log n/n^{4/5}$ for estimating the densities $(f(x|1;\theta^{\text{o}}),\dots, f(x|k;\theta^{\text{o}}))$ given $\theta^{\text{o}}$. The subsequent term $O_{p}(n^{-1})$ corresponds a correction of order $(k(p+1)+p(p+1)/2-2)\log n/n$, reflecting the estimation of $\eta^{\text{o}}$.
These considerations motivate the following semiparametric information criterion for selecting the number of clusters:
\begin{align}
\text{SPIC}(k)=-\frac{1}{n}p\ell(k)+\frac{k\log n}{2n^{\frac{4}{5}}}+\frac{k(p+1) \log n}{2n}.\label{3.4.5}
\end{align}
The theorem below establishes that the minimizer $\breve{k}$ of $\text{SPIC}$ consistently estimates $k^{*}$.
\begin{thm}
 \label{Thm3.4}
Under assumptions {A4}--{A9}, 
\begin{align*}
 \breve{k}\stackrel{p}{\longrightarrow} k^{*} \text{ as } n {\longrightarrow} \infty. 
\end{align*}
 \begin{proof}
See Appendix \ref{Thm3.4.pf}.
\end{proof}
 \end{thm}
 
\begin{rem}
\label{rem1}
An alternative to $p\ell(k)$ in (\ref{3.4.3}) is given by $\sum^{n}_{i=1}\sum^{k}_{c=1}I(i\in\widetilde{\mathcal{G}}_{c})\log  \tilde{f}^{-i}_{\breve{h}}(X_{i}|c;\theta)$, evaluated at the refined pseudo-maximum likelihood estimator of $\theta$. Nonetheless, numerical results indicate that the associated semiparametric information criterion tends to overestimate the number of clusters, even as the sample size grows.
\end{rem} 

\end{subsection}
 
\end{section}

\begin{section}{Estimation Implementation}
\label{sec:sec4}

This section introduces an initialization strategy to enhance convergence in the separation penalty estimation procedure, details the implementation algorithm
(with pseudocode in Appendix \ref{pcode}), and outlines the computational procedure of the proposed method.

\begin{subsection}{An Initialization Strategy}
\label{subsec:sec4.1}

In implementing the separation penalty estimation procedure, we first apply $k$-means to $\{X_{i}\}^{n}_{i=1}$ to construct an initial partition $\bar{\mathcal{G}}^{\ell} = \{\bar{\mathcal{G}}^{\ell}_{1}, \dots, \bar{\mathcal{G}}^{\ell}_{\ell} \}$ of $\{1, \dots, n \}$ for each $\ell=2,\dots, \bar{k}$, where $\bar{k}= \lfloor \sqrt{n/\log n} \rfloor$. Given this clustering estimator, we compute the estimator  
\begin{align}
\bar{\mu}^{\ell}=\bigg(\frac{1}{n}\sum^{n}_{i=1}I\big(i\in\bar{\mathcal{G}}^{\ell}_{1}\big)X^{\top}_{i},\dots, \frac{1}{n}\sum^{n}_{i=1}I\big(i\in\bar{\mathcal{G}}^{\ell}_{\ell}\big)X^{\top}_{i} \bigg)^{\top}\label{4.1.1}
\end{align} 
by minimizing the following within-cluster sum of squares with $\mathcal{G}^{\ell}=\bar{\mathcal{G}}^{\ell}$:
\begin{align}
\textsc{SS}(\mu^{\ell};\mathcal{G}^{\ell})
= \frac{1}{2} \sum_{i=1}^n \sum^{\ell}_{c=1}I\big(i\in\mathcal{G}^{\ell}_{c}\big)\big(X_{i}-\mu^{\ell}_{c}\big)^{\top}\big(X_{i}-\mu^{\ell}_{c}\big). 
\label{4.1.2}
\end{align}
The resulting estimator of the subject-specific parameter vector is then given by
\begin{align}
\bar{\beta}^{\ell}= \sum_{c = 1}^{\ell} \big(I\big(1 \in \bar{\mathcal{G}}^{\ell}_{c}\big), \dots, I\big(n \in \bar{\mathcal{G}}^{\ell}_{c}\big) \big)^{\top} \otimes \bar{\mu}^{\ell}_{c}.\label{4.1.3}
\end{align} 

Although $\bar{\mu}^{\ell}$ and $\bar{\beta}^{\ell}$, $\ell=2,\dots,\bar{k}$, may be used as initial values, they do not provide a direct advantage within the considered semiparametric framework.
To enhance the consistency of $\bar{\mathcal{G}}^{\ell}$ and the accuracy of the corresponding estimator $\bar{\mu}^{\ell}$, we refine the clustering of data points based on 
\begin{align}
\textsc{SD}_{ic}=(X_{i}-\bar{\mu}^{\ell}_{c})^{\top}\bar{\Sigma}^{-1}_{x}(X_{i}-\bar{\mu}^{\ell}_{c}), i=1,\dots,n, c=1,\dots,\ell, \label{4.1.4}
\end{align}
with $\bar{\Sigma}_{x}=\sum^{n}_{i=1}\sum^{\ell}_{c=1} I(1 \in \bar{\mathcal{G}}^{\ell}_{c})(X_{i}-\bar{\mu}^{\ell}_{c})(X_{i}-\bar{\mu}^{\ell}_{c})^{\top}/n$.
Each data point is then reassigned to the cluster attaining the minimum squared distance:
\begin{align}
\text{Assign the $i$th data point to } \check{\mathcal{G}}^{\ell}_{c_{0}} \text{ such that } \textsc{SD}_{ic_{0}}=\min_{\{1\leq c \leq \ell \}} \,\textsc{SD}_{ic}, i=1,\dots, n.
\label{4.1.6}
\end{align}
Replacing $\bar{\mathcal{G}}^{\ell}$ with
 $\check{\mathcal{G}}^{\ell}$ in (\ref{4.1.1}) and (\ref{4.1.3}) yields the updated estimators $\check{\mu}^{\ell}$ and $\check{\beta}^{\ell}$. Similarly, replacing $\bar{\mathcal{G}}^{\ell}$ and $\bar{\mu}^{\ell}$ with $\check{\mathcal{G}}^{\ell}$ and $\check{\mu}^{\ell}$, respectively, in (\ref{4.1.4}) yields the updated estimator $\check{\Sigma}_{x}$. 
This refinement process is iterated by alternating between (\ref{4.1.4}) and (\ref{4.1.6}) until either a local minimum of $\textsc{SS}(\mu^{\ell}; \check{\mathcal{G}}^{\ell})$ is attained or a specified convergence criterion is met.
\end{subsection}

\begin{subsection}{Computational Algorithm for the Separation Penalty Estimation}
\label{subsec:sec4.2}
For a fixed $\lambda$, let $(\hat{\beta}^{(m)},\hat{\mu}^{(m)})$ denote 
the solution at iteration $m\geq 0$, with $\hat{\beta}^{(0)}$ selected from $\{\check{\beta}^{\ell}: \ell=k,\dots,\bar{k}\}$ and $\hat{\mu}^{(0)}$ set to $\check{\mu}^{k}$. The sequence of shrinkage parameters $\lambda_{1} < \dots < \lambda_{J}$ is selected over $[\lambda_{1}, \lambda_{J}]$, where $\lambda_{1} = \min_{\{i,c\in\mathcal{C}: \hat{\beta}^{(0)}_{i} \neq \hat{\mu}^{(0)}_{c}\}}\|W^{1/2}(\hat{\beta}^{(0)}_{i}-\hat{\mu}^{(0)}_{c})\|_{1}$ and 
$\lambda_{J}=\max_{\{i,c\in\mathcal{C}\}}\|W^{1/2}(\hat{\beta}^{(0)}_{i}-\hat{\mu}^{(0)}_{c})\|_{1}$.
  
To simplify the computation of $(\hat{\beta}^{(m+1)},\hat{\mu}^{(m+1)})$ in $\textsc{SS}_{\text{sp}}(\beta,\mu;\lambda)$ for a given $\lambda$, we introduce an auxiliary parameter vector $\delta_{i}=(\delta_{i1},\dots,\delta_{1k},\dots,\delta_{n1},\dots,\delta_{nk})^{\top}$ to reformulate the minimization problem as
\begin{align}
 &\text{Minimize }\textsc{SS}\big(\beta,\mu,\delta\big)= \textsc{SS}_{w}(\beta) + \lambda\sum^{n}_{i=1}\min_{c} \big\|\delta_{ic}\big\|_{1}\nonumber \\
&\text{subject to } W^{\frac{1}{2}}\big(\beta_{i}-\mu_c \big) = \delta_{ic}, i = 1, \dots,n, c \in \mathcal{C}.
\end{align}
Since the objective function is separable in $(\beta,\mu)$ and $\delta$,
we employ the ADMM of \cite{boyd2011distributed}.
The corresponding augmented Lagrangian function is 
\begin{align}
\textsc{SS}\big(\beta,\mu,\delta\big)+ \sum_{i = 1}^{n} \sum_{c = 1}^{k} \nu_{ic}^{\top} \big(W^{\frac{1}{2}}(\beta_i -\mu_c) - \delta_{ic}\big) + \frac{\nu_0}{2} \sum_{i = 1}^{n} \sum_{c = 1}^{k} \big\| W^{\frac{1}{2}}(\beta_i -\mu_c )- \delta_{ic} \big\|^2,
\end{align}
where $\nu_{ic}, i = 1, \dots,n, c \in \mathcal{C}$, are Lagrange multipliers, the penalty parameter $\nu_0$ is set to 1, and $\|\cdot\|$ denotes the Frobenius norm.

The separation penalty estimation procedure is implemented via the following iterative algorithm:
 \begin{align}
\hspace{-0.8in}\beta_{i}^{(s+1)} =&\frac{1}{k+1} \bigg[ X_i + \sum_{c=1}^{k} \big(\hat{\mu}_c^{(m)} + W^{-\frac{1}{2}} \delta_{i c}^{(s)} - W^{-\frac{1}{2}}\nu_{i c}^{(s)}\big)  + W^{-1} \lambda\partial_{\beta}^{\top} \textsc{S}\big(\hat{\beta}^{(m)}, \hat{\mu}^{(m)}\big)\bigg], \label{4.2.6}
\end{align}
\begin{align}
\hat{\mu}^{(m+1)}_{c}= & W^{-\frac{1}{2}}  \Bigg( \median_{\big\{i: c \in \argmin\limits_{c_1} \big\| W^{\frac{1}{2}} \big(\hat{\beta}^{(m+1)}_i - \hat{\mu}^{(m)}_{c_1}\big)  \big\|_1 \big\} } \big(W^{\frac{1}{2}} \hat{\beta}^{(m+1)}_{i}\big)_j \Bigg), c \in \mathcal{C}.\label{4.2.9}\\
\hat{\delta}_{i c}^{(s+1)} = &diag\Bigg(\max\Bigg\{0, 1 -\frac{\lambda}{\big|\big(W^{\frac{1}{2}}\big(\hat{\beta}_{i}^{(m+1)} - \hat{\mu}_{c}^{(m+1)}\big) +  \hat{\nu}_{i c}^{(m)}\big)_j\big| } \Bigg\} \Bigg)  \big( W^{\frac{1}{2}}\big(\hat{\beta}_{i}^{(m+1)} - \hat{\mu}_{c}^{(m+1)} \big)+  \hat{\nu}_{i c}^{(m)}\big), \label{4.2.7}\\
  \hat{\nu}_{i c}^{(m+1)} = &\hat{\nu}_{i c}^{(m)} +  \big(W^{\frac{1}{2}}\big(\hat{\beta}_{i}^{(m+1)} - \hat{\mu}_{c}^{(m+1)} \big) - \hat{\delta}_{i c}^{(m+1)}\big), s\geq 0, \label{4.2.8}
\end{align}
where $\hat{\delta}_{i c}^{(0)} = W^{\frac{1}{2}} \big(\hat{\beta}_{i}^{(0)} - \hat{\mu}_{c}^{(0)}\big)$ and $\hat{\nu}_{i c}^{(0)} = 0$, $i=1,\dots, n$, $c\in\mathcal{C}$.

\end{subsection}

\begin{subsection}{Computational Procedure}
\label{subsec:sec4.3}
 
The proposed method is described with the following algorithm:
\begin{algorithm}[H]
\caption{Computational procedure}
\begin{algorithmic}[1]
\STATE Construct an initial partition $\bar{\mathcal{G}}^{\ell}$ of $\{1, \dots, n\}$ by applying $k$-means to $\{X_i\}_{i=1}^{n}$ for each $\ell \in \{2, \dots, \bar{k}\}$.
\STATE Update $\big(\bar{\mathcal{G}}^{\ell}, \bar{\mu}^{\ell}\big)$ to $\big(\check{\mathcal{G}}^{\ell}, \check{\mu}^{\ell}\big)$ by iteratively applying the rule in (\ref{4.1.6}) and minimizing $\textsc{SS}(\mu^{\ell}; \check{\mathcal{G}}^{\ell})$ with respect to $\mu^{\ell}$ for each $\ell \in \{2, \dots, \bar{k}\}$.
\STATE Obtain the separation penalty estimates $\big(\widehat{\mathcal{G}}, \hat{\mu}\big)$ by applying the algorithm in Section \ref{subsec:sec4.2}, initialized by the estimates $\big\{\big(\check{\mathcal{G}}^{\ell}, \check{\mu}^{\ell}\big): \ell = k, \dots, \bar{k}\big\}$.  
\STATE Maximize $p\ell_{1}(\eta)$ in (\ref{3.2.4}) or $p\ell_{2}(\eta)$ in (\ref{3.2.7}) to obtain the pseudo-maximum likelihood estimate
$\tilde{\eta}$ or the pseudo-maximum marginal likelihood estimate $\check{\eta}$, with $\hat{h}$ selected via the minimizer $\tilde{h}$ of $\textsc{CV}_{1}(h)$ in (\ref{3.2.5}).  
\STATE Compute the posterior cluster probability estimates in (\ref{3.3.2}) and update $\widehat{\mathcal{G}}$ to $\widetilde{\mathcal{G}}$ using the optimal clustering rule in (\ref{3.3.3}).  
\STATE Maximize $p\ell_{1r}(\eta)$ in (\ref{3.3.5}) or $p\ell_{2r}(\eta)$ in (\ref{3.3.6}) to obtain the refined estimate $\breve{\eta}$, with $\hat{h}^{*}$ selected via the minimizer $\tilde{h}^{*}$ of $\textsc{CV}_{2}(h)$ in (\ref{3.3.7}).\STATE Repeat Steps 3--6 for $k\geq 1$, and select the number of clusters by minimizing $\text{SPIC}$ in (\ref{3.4.5}).  
\end{algorithmic}
\end{algorithm}

\end{subsection}

\end{section}

\begin{section}{Monte Carlo Simulations}
\label{sec:sec5}

We conducted comprehensive simulations to assess the compared methods across various sample sizes, specifically $n =250, 500, 750$, and 1000. Each configuration was replicated 500 times to produce stable and reliable results. Supplementary tables are provided in Appendix \ref{table-figure}.

\begin{subsection}{Simulation Design and Performance Metrics}
\label{subsec:sec5.1}

The data were generated from clusterwise elliptical distributions under three scenarios: $(p,k)\in\{(6,2),(10,2),(10,3)\}$. The cluster membership probabilities were set to
$(\pi_{1},\pi_{2})=(0.6,0.4)$ for $k=2$, and $(\pi_1,\pi_{2}, \pi_3)=(0.4,0.3,0.3)$ for $k=3$.
The cluster-specific mean vectors were given by
$\big\{0_{p},(1.5,0,\dots,1.5,0)^{\top}\big\}$ for $k=2$ and $\big\{0_{p},1.5 1_{p},(1.5,0,\dots,1.5,0)^{\top}\big\}$ for $k=3$,
where $0_{p}$ and $1_{p}$ denote $p\times 1$ vectors of zeros and ones, respectively.
The cluster-invariant variance matrix was specified as 
$\Sigma_{x}=\sigma^{2}(0.175I_{p}+0.0751_{p}1^{\top}_{p})$, where $I_{p}$ is the $p\times p$ identity matrix.
The scalar parameter $\sigma\in\{1,1.2,1.4,1.6\}$ controls the degree of separation between clusters, with smaller values corresponding to more distinct clusters and larger values inducing greater overlap. Two density generators were considered:
\begin{align*}
\text{M1. }&f_{p}(y)=\gamma_{\text{o}} y^5\Big(\frac{45}{4}-y\Big)^{\frac{1}{4}}I\Big(0\leq y\leq \frac{45}{4}\Big),\\
\text{M2. }& f_{p}(y) = \frac{1}{(2\pi)^{\frac{p}{2}}}\exp\Big(-\frac{y}{2}\Big)I(y>0),\hspace{0.7cm}
\end{align*}
where $
\gamma_{\text{o}}=\Gamma(p/2))/[2\pi^{p/2}\int_0^{3\sqrt{5}/2} y^{p+9}(45/4-y^2)^{1/4}dy]$.

We computed the Rand index (RI) \cite[]{rand1971objective} between the underlying clusters and each clustering estimate to quantify the agreement of the recovered cluster assignments. The performance of an estimator $\widehat{\alpha}_{g}$ of a generic parameter vector $\alpha^{\text{o}}$ was assessed using the normalized root squared error (RSE), defined as
$\sqrt{(\widehat{\alpha}_{g}-\alpha^{\text{o}})^{\top}(\widehat{\alpha}_{g}-\alpha^{\text{o}})/|\alpha^{\text{o}} |}$. 

\end{subsection}

\begin{subsection}{Assessment of Estimation and Clustering} \label{subsec:sec5.2}

The clustering methods under comparison included $k$-means, the initialization strategy (IS) in Section \ref{subsec:sec4.1}, the separation penalty (SP) estimation in Section \ref{sec:sec3.1}, and the optimal clustering (OC) in Section \ref{sec:sec3.3}. For the OC method, we examined two variants: one based on a clusterwise multivariate normal distribution, denoted by $\text{OC}_{\text{n}}$, and another based on a clusterwise multivariate $t$-distribution, denoted by $\text{OC}_{\text{t}}$. 
Tables \ref{tab:RI_p6} and \ref{tab:RI_S} present the clustering performance across methods. The IS method achieves substantially higher RI values than
$k$-means, with the discrepancy widening as the sample size $n$ and the scalar parameter $\sigma$ increase. The SP method slightly outperforms the IS method when $n=125$ and performs comparably in the other settings. Under model {M1}, the OC method outperforms the SP and $\text{OC}_{\text{n}}$ methods, achieving substantially higher RI values for all $(p,k)$ combinations when $\sigma\geq 1.2$. Notably, the $\text{OC}_{\text{n}}$ method generally outperforms the $\text{OC}_{\text{t}}$ method. Under model {M2}, the $\text{OC}$ method 
yields higher RI values than the SP method for $(p,k)=(6,2)$ when $\sigma\geq 1.2$ and for $(p,k)=(10,2)$ when $\sigma\geq 1.4$, and is otherwise similar or slightly superior. The $\text{OC}_{\text{n}}$ method generally outperforms the $\text{OC}$ method across all $\sigma$ values when $(p,k) = (10,2)$ and $n \leq 250$, and performs comparably or marginally better in other scenarios. 
Overall, the decline in RI values across models, variable dimensions, and numbers of clusters is primarily attributable to increasing $\sigma$. In contrast, RI values increase with $n$, particularly for $n\leq 750$ or $n\leq 1000$.
\begin{table}[htbp]
  \centering
  \caption{Means of 500 RI values (scaled by $10^2$) of the underlying clusters and clustering estimates from various methods under model {M1}.}
    \begin{adjustbox}{max width=0.95\linewidth}
    \begin{tabular}{cccccccccccccccccccccc}
    \toprule
        \multicolumn{2}{c}{$(p,k)$}&\multicolumn{6}{c}{(6,2)} & &\multicolumn{6}{c}{(10,2)}&&\multicolumn{6}{c}{(10,3)} \\
        \cmidrule(rl){1-2} \cmidrule(rl){3-8}  \cmidrule(rl){10-15} \cmidrule(rl){17-22} 
   $\sigma$      &  $n$     & $k$-means & IS    & SP       & $\text{OC}$  & $\text{OC}_{\text{n}}$ & $\text{OC}_{\text{t}}$ & & $k$-means & IS    & SP       & $\text{OC}$  & $\text{OC}_{\text{n}}$ & $\text{OC}_{\text{t}}$ && $k$-means & IS    & SP       & $\text{OC}$  & $\text{OC}_{\text{n}}$ & $\text{OC}_{\text{t}}$  \\
    \midrule
     {1} & 125   & 97.24 & 99.83 & 99.84 & 99.97 & 99.90 & 99.89 &       & 98.29 & 99.98 & 99.98 & 99.99 & 99.99 & 99.99 &       & 98.04 & 99.36 & 99.36 & 99.39 & 99.41 & 99.35 \\
            & 250   & 97.54 & 99.95 & 99.96 & 99.98 & 99.94 & 99.94 &       & 98.42 & 100.00 & 100.00 & 100.00 & 100.00 & 100.00 &       & 98.45 & 99.62 & 99.61 & 99.68 & 99.57 & 99.61 \\
            & 500   & 97.63 & 99.97 & 99.97 & 99.99 & 99.96 & 99.96 &       & 98.38 & 99.99 & 99.99 & 100.00 & 99.99 & 99.99 &       & 98.52 & 99.68 & 99.68 & 99.78 & 99.70 & 99.67 \\
            & 750   & 97.67 & 99.97 & 99.97 & 99.98 & 99.96 & 99.96 &       & 98.41 & 100.00 & 100.00 & 100.00 & 100.00 & 100.00 &       & 98.57 & 99.69 & 99.69 & 99.80 & 99.69 & 99.69 \\
            & 1000  & 97.63 & 99.97 & 99.97 & 99.99 & 99.97 & 99.96 &       & 98.40 & 100.00 & 100.00 & 100.00 & 100.00 & 100.00 &       & 98.58 & 99.70 & 99.70 & 99.81 & 99.70 & 99.70 \\
            \midrule
     {1.2} & 125   & 88.85 & 96.07 & 96.13 & 98.04 & 96.54 & 96.50 &       & 92.99 & 99.10 & 99.16 & 99.39 & 99.20 & 99.23 &       & 92.55 & 96.30 & 96.31 & 96.70 & 96.54 & 96.36 \\
            & 250   & 89.47 & 97.63 & 97.64 & 98.66 & 97.79 & 97.75 &       & 93.22 & 99.58 & 99.58 & 99.72 & 99.64 & 99.63 &       & 93.74 & 97.79 & 97.78 & 98.40 & 97.86 & 97.75 \\
            & 500   & 89.85 & 97.94 & 97.94 & 98.87 & 98.01 & 97.96 &       & 93.25 & 99.67 & 99.67 & 99.77 & 99.68 & 99.67 &       & 94.06 & 98.04 & 98.03 & 98.81 & 98.05 & 98.00 \\
            & 750   & 90.09 & 98.01 & 98.01 & 98.91 & 98.06 & 98.04 &       & 93.42 & 99.69 & 99.69 & 99.80 & 99.71 & 99.69 &       & 94.16 & 98.05 & 98.05 & 98.83 & 98.06 & 98.05 \\
            & 1000  & 90.13 & 98.02 & 98.02 & 98.90 & 98.06 & 98.04 &       & 93.48 & 99.70 & 99.70 & 99.79 & 99.71 & 99.71 &       & 94.28 & 98.16 & 98.16 & 98.92 & 98.16 & 98.14 \\
             \midrule
     {1.4} & 125   & 79.59 & 85.72 & 85.75 & 92.08 & 86.23 & 86.10 &       & 84.86 & 94.27 & 94.35 & 95.43 & 94.75 & 95.01 &       & 85.53 & 88.80 & 88.81 & 89.57 & 89.37 & 88.84 \\
            & 250   & 80.09 & 90.49 & 90.49 & 96.29 & 91.23 & 91.03 &       & 85.38 & 97.61 & 97.63 & 98.53 & 97.82 & 97.77 &       & 86.54 & 92.89 & 92.89 & 94.73 & 92.89 & 92.85 \\
            & 500   & 80.42 & 92.54 & 92.54 & 96.84 & 93.20 & 93.06 &       & 85.56 & 97.97 & 97.97 & 98.82 & 98.06 & 98.02 &       & 87.08 & 94.44 & 94.44 & 96.70 & 94.54 & 94.33 \\
            & 750   & 80.52 & 92.86 & 92.86 & 96.93 & 93.39 & 93.28 &       & 85.79 & 98.06 & 98.06 & 98.92 & 98.14 & 98.11 &       & 87.48 & 94.68 & 94.67 & 96.95 & 94.71 & 94.64 \\
            & 1000  & 80.60 & 93.01 & 93.01 & 96.99 & 93.52 & 93.42 &       & 85.88 & 98.10 & 98.10 & 98.93 & 98.19 & 98.15 &       & 87.66 & 94.81 & 94.81 & 97.06 & 94.85 & 94.82 \\
             \midrule
     {1.6} & 125   & 71.74 & 74.66 & 74.65 & 82.07 & 74.93 & 74.60 &       & 77.04 & 83.85 & 83.89 & 85.79 & 84.40 & 84.46 &       & 78.59 & 80.26 & 80.28 & 80.74 & 80.55 & 79.96 \\
            & 250   & 72.21 & 78.39 & 78.39 & 91.24 & 79.29 & 78.84 &       & 77.43 & 91.33 & 91.34 & 95.32 & 92.15 & 92.14 &       & 79.34 & 83.39 & 83.39 & 85.80 & 83.04 & 82.88 \\
            & 500   & 72.36 & 83.72 & 83.72 & 94.29 & 84.84 & 84.36 &       & 77.74 & 94.60 & 94.61 & 97.07 & 94.99 & 94.87 &       & 79.51 & 86.66 & 86.66 & 90.45 & 87.05 & 86.25 \\
            & 750   & 72.44 & 85.59 & 85.59 & 94.55 & 86.77 & 86.32 &       & 77.76 & 94.88 & 94.88 & 97.30 & 95.16 & 95.12 &       & 79.68 & 88.17 & 88.17 & 92.41 & 87.99 & 87.91 \\
            & 1000  & 72.58 & 86.64 & 86.64 & 94.95 & 87.69 & 87.35 &       & 77.82 & 95.07 & 95.07 & 97.38 & 95.29 & 95.23 &       & 79.77 & 88.99 & 88.99 & 93.42 & 88.89 & 88.79 \\
          \bottomrule
    \end{tabular}%
   \end{adjustbox}
  \label{tab:RI_p6}%
\end{table}%

The estimation procedures for the cluster-specific mean vectors and cluster-invariant variance matrix encompassed $k$-means, IS, SP, pseudo-maximum likelihood (PML), pseudo-maximum marginal likelihood (PMML), and their refined versions, RPML and RPMML. We also considered maximum marginal likelihood methods for a mixture of normal and a mixture of $t$ distributions, denoted by $\text{MML}_{\text{n}}$ and $\text{MML}_{\text{t}}$, respectively. 
Tables \ref{tab:RMSE_location_p6}--\ref{tab:RMSE_scale_p6} and \ref{tab:RMSE_location_S}--\ref{tab:RMSE_scale_S} report the average RSEs over 500 replications for the estimated cluster-specific mean vectors and cluster-invariant variance matrix. The IS estimator achieves lower RSEs for the mean vectors than the $k$-means estimator, marginally so when $\sigma = 1$, and increasingly so as $\sigma \geq 1.2$. The discrepancy becomes more pronounced with larger $n$ when $\sigma \geq 1.4$, and with increasing $\sigma$ across all $n$. A similar pattern holds for the estimated cluster-invariant variance matrix: the IS estimator yields lower or substantially lower RSEs. The gap widens with increasing $n$ (for fixed $\sigma$) and increasing $\sigma$ (for fixed $n$). The trends in the RSEs of the SP and IS estimators align with their corresponding RI values. Tables \ref{tab:RMSE_location_p6} and \ref{tab:RMSE_location_S} further show that, under model {M1}, the PML estimator tends to outperform the PMML estimator, except when $\sigma \leq 1.4$ with $n =125$ or $n \leq 250$, and when $\sigma = 1.6$ regardless of $n$. Relative to the SP estimator, the PML estimator attains lower or substantially lower RSEs when $k = 2$, and comparable or marginally lower RSEs when $k = 3$. Under model {M2}, the RSEs of the PML and SP estimators are comparable to or marginally higher than the RSE of the PMML estimator for the mean vectors.
Tables \ref{tab:RMSE_scale_p6} and \ref{tab:RMSE_scale_S} indicate that, 
under model {M1}, the PML estimator yields lower or substantially lower RSEs than the SP estimator, is comparable to or slightly better than the PMML estimator for $k = 2$, but performs worse than the PMML estimator for $k = 3$. Under model {M2}, the RSEs of the PMML and SP estimators are generally similar to or slightly better than that of the PML estimator. 
 An exception arises when $\sigma = 1.6$, where the PML estimator exhibits notably higher RSEs than the PMML estimator. The only case in which the PML estimator outperforms the PMML estimator occurs when $(p, k) = (6, 2)$ and $\sigma \leq 1.2$, where the improvement is marginal.

Under model M1, the RSE of the RPML estimator for the mean vectors is comparable to, and often slightly below, that of PML; under model M2, their performance is comparable. The same pattern holds for PMML and RPMML. For the variance (scatter) matrix, RPMML and PMML yield similar RSEs. RPML matches PML for $k=2$ under M2 and achieves slightly lower RSEs in the other settings.
Relative to the parametric benchmarks, Tables \ref{tab:RMSE_location_p6} and \ref{tab:RMSE_location_S} show that under M1 the RPMML estimator attains marginally lower mean-vector RSEs than $\text{MML}_{\text{n}}$ and $\text{MML}_{\text{t}}$, whereas under M2 its RSEs are comparable. For the variance matrix, Tables \ref{tab:RMSE_scale_p6} and \ref{tab:RMSE_scale_S} indicate that RPMML improves upon $\text{MML}_{\text{n}}$ and $\text{MML}_{\text{t}}$ under M1 (often substantially), while under M2 $\text{MML}_{\text{n}}$ performs on par with RPMML and $\text{MML}_{\text{t}}$.
Across all scenarios, RSEs decrease monotonically as $\sigma$ decreases and $n$ increases. Under both M1 and M2, the average RSEs of RPML and RPMML approach the asymptotically semiparametric efficient (ASPE) benchmark as $n$ grows; under M2, the average RSE of $\text{MML}_{\text{n}}$ likewise approaches the asymptotically efficient (AE) benchmark.

%\begin{landscape}
\begin{table}[htbp]
%\hspace{0.15\textwidth}
 \begin{minipage}[c][0.7\textheight][t]{0.47\textwidth}
 \centering
   \caption{Means of 500 RSEs (scaled by $10^2$) of the proposed, competing, and oracle estimates of the cluster-specific mean vectors under model {M1}.}
   \begin{adjustbox}{max width=1.025\linewidth}
    \begin{tabular}{ccccccccccccc}
    \toprule
    $(p,k)$ &  $\sigma$     &    $n$   & $k$-means & IS    & SP    & PML   & PMML    & RPML   & RPMML   & $\text{MML}_{\text{n}}$ & $\text{MML}_{\text{t}}$ & ASPE  \\
    \midrule
       (6,2) & 1     & 125   & 1.11  & 1.01  & 1.01  & 0.38  & 0.39  & 0.35  & 0.37  & 1.01  & 1.01  & 0.37 \\
          &       & 250   & 0.78  & 0.71  & 0.71  & 0.20  & 0.21  & 0.20  & 0.22  & 0.71  & 0.71  & 0.19 \\
          &       & 500   & 0.55  & 0.50  & 0.50  & 0.12  & 0.13  & 0.11  & 0.12  & 0.50  & 0.50  & 0.11 \\
          &       & 750   & 0.45  & 0.40  & 0.40  & 0.09  & 0.10  & 0.08  & 0.09  & 0.40  & 0.40  & 0.08 \\
          &       & 1000  & 0.40  & 0.35  & 0.35  & 0.07  & 0.08  & 0.06  & 0.07  & 0.34  & 0.35  & 0.06 \\
           \cmidrule(rl){2-13}
          & 1.2   & 125   & 1.82  & 1.40  & 1.40  & 0.51  & 0.53  & 0.47  & 0.49  & 1.31  & 1.32  & 0.44 \\
          &       & 250   & 1.43  & 0.87  & 0.87  & 0.26  & 0.27  & 0.25  & 0.27  & 0.86  & 0.86  & 0.24 \\
          &       & 500   & 1.15  & 0.61  & 0.61  & 0.14  & 0.15  & 0.14  & 0.16  & 0.63  & 0.63  & 0.13 \\
          &       & 750   & 1.01  & 0.48  & 0.48  & 0.11  & 0.12  & 0.10  & 0.11  & 0.51  & 0.51  & 0.10 \\
          &       & 1000  & 0.99  & 0.43  & 0.43  & 0.09  & 0.09  & 0.08  & 0.08  & 0.42  & 0.43  & 0.07 \\
           \cmidrule(rl){2-13}
          & 1.4   & 125   & 2.88  & 2.28  & 2.28  & 0.96  & 0.98  & 0.95  & 0.97  & 1.86  & 1.94  & 0.53 \\
          &       & 250   & 2.57  & 1.31  & 1.31  & 0.33  & 0.34  & 0.30  & 0.33  & 1.12  & 1.14  & 0.28 \\
          &       & 500   & 2.33  & 0.82  & 0.82  & 0.18  & 0.19  & 0.16  & 0.18  & 0.78  & 0.80  & 0.16 \\
          &       & 750   & 2.27  & 0.66  & 0.66  & 0.13  & 0.14  & 0.12  & 0.14  & 0.65  & 0.66  & 0.11 \\
          &       & 1000  & 2.26  & 0.55  & 0.55  & 0.10  & 0.11  & 0.09  & 0.10  & 0.53  & 0.54  & 0.09 \\
           \cmidrule(rl){2-13}
          & 1.6   & 125   & 4.06  & 3.52  & 3.52  & 1.92  & 1.92  & 1.87  & 1.88  & 3.07  & 3.29  & 0.60 \\
          &       & 250   & 3.86  & 2.67  & 2.67  & 0.70  & 0.70  & 0.69  & 0.70  & 1.80  & 2.16  & 0.32 \\
          &       & 500   & 3.67  & 1.56  & 1.56  & 0.27  & 0.28  & 0.25  & 0.28  & 1.04  & 1.14  & 0.18 \\
          &       & 750   & 3.58  & 1.16  & 1.16  & 0.21  & 0.22  & 0.19  & 0.21  & 0.83  & 0.88  & 0.14 \\
          &       & 1000  & 3.58  & 0.85  & 0.85  & 0.13  & 0.14  & 0.12  & 0.13  & 0.67  & 0.70  & 0.10 \\
           \bottomrule \toprule
    (10,2) & 1     & 125   & 0.61  & 0.59  & 0.59  & 0.36  & 0.37  & 0.38  & 0.37  & 0.59  & 0.59  & 0.36 \\
          &       & 250   & 0.45  & 0.42  & 0.42  & 0.21  & 0.21  & 0.20  & 0.21  & 0.42  & 0.42  & 0.20 \\
          &       & 500   & 0.32  & 0.30  & 0.30  & 0.11  & 0.12  & 0.11  & 0.11  & 0.30  & 0.30  & 0.11 \\
          &       & 750   & 0.26  & 0.24  & 0.24  & 0.08  & 0.08  & 0.07  & 0.08  & 0.24  & 0.24  & 0.07 \\
          &       & 1000  & 0.23  & 0.21  & 0.21  & 0.06  & 0.06  & 0.06  & 0.06  & 0.21  & 0.21  & 0.06 \\
          \cmidrule(rl){2-13}
          & 1.2   & 125   & 0.90  & 0.72  & 0.72  & 0.45  & 0.46  & 0.46  & 0.46  & 0.72  & 0.72  & 0.45 \\
          &       & 250   & 0.69  & 0.51  & 0.51  & 0.27  & 0.27  & 0.25  & 0.25  & 0.51  & 0.51  & 0.25 \\
          &       & 500   & 0.54  & 0.37  & 0.37  & 0.13  & 0.14  & 0.14  & 0.14  & 0.37  & 0.37  & 0.13 \\
          &       & 750   & 0.46  & 0.29  & 0.29  & 0.10  & 0.10  & 0.09  & 0.09  & 0.29  & 0.29  & 0.09 \\
          &       & 1000  & 0.43  & 0.25  & 0.25  & 0.08  & 0.08  & 0.08  & 0.08  & 0.26  & 0.26  & 0.07 \\
          \cmidrule(rl){2-13}
          & 1.4   & 125   & 1.45  & 1.02  & 1.02  & 0.70  & 0.70  & 0.68  & 0.67  & 0.88  & 0.89  & 0.55 \\
          &       & 250   & 1.20  & 0.61  & 0.61  & 0.32  & 0.33  & 0.30  & 0.30  & 0.62  & 0.62  & 0.29 \\
          &       & 500   & 1.04  & 0.44  & 0.44  & 0.17  & 0.17  & 0.16  & 0.16  & 0.43  & 0.43  & 0.16 \\
          &       & 750   & 0.97  & 0.35  & 0.35  & 0.13  & 0.13  & 0.12  & 0.11  & 0.36  & 0.36  & 0.11 \\
          &       & 1000  & 0.95  & 0.30  & 0.30  & 0.10  & 0.10  & 0.10  & 0.09  & 0.30  & 0.30  & 0.09 \\
          \cmidrule(rl){2-13}
          & 1.6   & 125   & 2.15  & 1.68  & 1.67  & 1.33  & 1.32  & 1.30  & 1.32  & 1.34  & 1.35  & 0.63 \\
          &       & 250   & 1.92  & 0.92  & 0.92  & 0.45  & 0.45  & 0.44  & 0.45  & 0.78  & 0.81  & 0.35 \\
          &       & 500   & 1.73  & 0.52  & 0.52  & 0.21  & 0.21  & 0.19  & 0.21  & 0.51  & 0.51  & 0.19 \\
          &       & 750   & 1.73  & 0.43  & 0.43  & 0.15  & 0.16  & 0.15  & 0.16  & 0.42  & 0.42  & 0.13 \\
          &       & 1000  & 1.68  & 0.37  & 0.37  & 0.12  & 0.13  & 0.12  & 0.13  & 0.36  & 0.37  & 0.10 \\
  \bottomrule \toprule
(10,3) & 1     & 125   & 1.13  & 1.09  & 1.09  & 1.09  & 1.10  & 1.09  & 1.10  & 1.11  & 1.12  & 0.99 \\
          &       & 250   & 0.75  & 0.69  & 0.69  & 0.69  & 0.69  & 0.68  & 0.69  & 0.69  & 0.73  & 0.68 \\
          &       & 500   & 0.55  & 0.52  & 0.52  & 0.52  & 0.52  & 0.52  & 0.52  & 0.52  & 0.50  & 0.52 \\
          &       & 750   & 0.43  & 0.41  & 0.41  & 0.40  & 0.41  & 0.40  & 0.41  & 0.41  & 0.41  & 0.40 \\
          &       & 1000  & 0.38  & 0.36  & 0.36  & 0.36  & 0.36  & 0.36  & 0.36  & 0.36  & 0.36  & 0.36 \\
          \cmidrule(rl){2-13}
          & 1.2   & 125   & 1.80  & 1.54  & 1.55  & 1.52  & 1.27  & 1.51  & 1.27  & 1.53  & 1.40  & 1.20 \\
          &       & 250   & 1.12  & 0.89  & 0.89  & 0.87  & 0.86  & 0.87  & 0.86  & 0.92  & 0.90  & 0.82 \\
          &       & 500   & 0.87  & 0.66  & 0.66  & 0.62  & 0.64  & 0.63  & 0.64  & 0.66  & 0.63  & 0.62 \\
          &       & 750   & 0.75  & 0.54  & 0.54  & 0.49  & 0.58  & 0.49  & 0.49  & 0.54  & 0.51  & 0.48 \\
          &       & 1000  & 0.67  & 0.45  & 0.45  & 0.44  & 0.44  & 0.44  & 0.44  & 0.46  & 0.45  & 0.44 \\
          \cmidrule(rl){2-13}
          & 1.4   & 125   & 2.62  & 2.33  & 2.33  & 2.23  & 2.09  & 2.24  & 2.09  & 2.16  & 1.81  & 1.42 \\
          &       & 250   & 2.01  & 1.33  & 1.33  & 1.25  & 1.11  & 1.25  & 1.11  & 1.22  & 1.14  & 0.99 \\
          &       & 500   & 1.83  & 0.83  & 0.83  & 0.75  & 0.77  & 0.75  & 0.77  & 0.83  & 0.82  & 0.73 \\
          &       & 750   & 1.69  & 0.68  & 0.68  & 0.59  & 0.62  & 0.60  & 0.62  & 0.67  & 0.65  & 0.58 \\
          &       & 1000  & 1.65  & 0.59  & 0.59  & 0.54  & 0.55  & 0.54  & 0.55  & 0.59  & 0.57  & 0.53 \\
          \cmidrule(rl){2-13}
          & 1.6   & 125   & 3.91  & 3.56  & 3.55  & 3.52  & 3.13  & 3.49  & 3.13  & 3.43  & 3.55  & 1.69 \\
          &       & 250   & 3.47  & 2.71  & 2.71  & 2.57  & 1.88  & 2.57  & 1.88  & 2.20  & 1.62  & 1.17 \\
          &       & 500   & 3.56  & 1.88  & 1.88  & 1.66  & 1.07  & 1.64  & 1.07  & 1.19  & 1.04  & 0.86 \\
          &       & 750   & 3.58  & 1.47  & 1.47  & 1.15  & 0.81  & 1.15  & 0.81  & 0.92  & 0.85  & 0.71 \\
          &       & 1000  & 3.67  & 1.21  & 1.21  & 0.92  & 0.72  & 0.90  & 0.72  & 0.82  & 0.79  & 0.62 \\
          \bottomrule
    \end{tabular}% 
    \end{adjustbox}
 \label{tab:RMSE_location_p6}%
%\end{table}%
\end{minipage}%
\hspace{0.05\textwidth}
\begin{minipage}[c][0.7\textheight][t]{0.47\textwidth}
%\begin{table}[htbp]
  \centering
   \caption{Means of 500 RSEs (scaled by $10^2$) of proposed, competing, and oracle estimates of the cluster-invariant variance matrix under model {M1}.}
   \begin{adjustbox}{max width=1.05\linewidth}
    \begin{tabular}{ccccccccccccc}
    \toprule
      $(p,k)$ &  $\sigma$     &    $n$   & $k$-means & IS    & SP    & PML   & PMML    & RPML   & RPMML   & $\text{MML}_{\text{n}}$ & $\text{MML}_{\text{t}}$ & ASPE  \\      
      \midrule
      (6,2) & 1     & 125   & 2.27  & 2.14  & 2.14  & 0.96  & 0.97  & 0.96  & 0.97  & 2.16  & 2.52  & 0.96 \\
          &       & 250   & 1.62  & 1.52  & 1.52  & 0.52  & 0.53  & 0.52  & 0.52  & 1.54  & 1.98  & 0.52 \\
          &       & 500   & 1.19  & 1.08  & 1.08  & 0.31  & 0.32  & 0.30  & 0.30  & 1.09  & 1.60  & 0.29 \\
          &       & 750   & 1.01  & 0.88  & 0.88  & 0.24  & 0.25  & 0.22  & 0.23  & 0.90  & 1.45  & 0.22 \\
          &       & 1000  & 0.91  & 0.75  & 0.75  & 0.19  & 0.19  & 0.17  & 0.18  & 0.77  & 1.37  & 0.17 \\
          \cmidrule(rl){2-13}
          & 1.2   & 125   & 4.33  & 3.38  & 3.37  & 1.50  & 1.51  & 1.50  & 1.53  & 3.30  & 3.93  & 1.39 \\
          &       & 250   & 3.54  & 2.28  & 2.28  & 0.79  & 0.80  & 0.76  & 0.81  & 2.34  & 3.09  & 0.74 \\
          &       & 500   & 3.09  & 1.60  & 1.60  & 0.46  & 0.47  & 0.45  & 0.47  & 1.69  & 2.55  & 0.42 \\
          &       & 750   & 2.91  & 1.31  & 1.31  & 0.35  & 0.35  & 0.33  & 0.36  & 1.40  & 2.35  & 0.32 \\
          &       & 1000  & 2.83  & 1.11  & 1.11  & 0.28  & 0.28  & 0.27  & 0.29  & 1.22  & 2.23  & 0.25 \\
          \cmidrule(rl){2-13}
          & 1.4   & 125   & 7.84  & 6.35  & 6.34  & 3.21  & 3.22  & 3.21  & 3.21  & 5.65  & 6.97  & 1.91 \\
          &       & 250   & 7.06  & 3.84  & 3.84  & 1.19  & 1.20  & 1.12  & 1.21  & 3.47  & 4.71  & 1.01 \\
          &       & 500   & 6.64  & 2.47  & 2.47  & 0.63  & 0.64  & 0.61  & 0.65  & 2.48  & 3.78  & 0.58 \\
          &       & 750   & 6.50  & 1.99  & 1.99  & 0.48  & 0.49  & 0.46  & 0.50  & 2.09  & 3.50  & 0.43 \\
          &       & 1000  & 6.44  & 1.71  & 1.71  & 0.40  & 0.40  & 0.38  & 0.41  & 1.84  & 3.33  & 0.34 \\
          \cmidrule(rl){2-13}
          & 1.6   & 125   & 12.50 & 11.38 & 11.37 & 7.13  & 7.11  & 7.05  & 7.03  & 10.41 & 12.26 & 2.47 \\
          &       & 250   & 11.66 & 8.61  & 8.60  & 2.78  & 2.77  & 2.57  & 2.75  & 6.28  & 8.88  & 1.33 \\
          &       & 500   & 11.29 & 5.22  & 5.22  & 1.07  & 1.07  & 1.04  & 1.08  & 3.61  & 5.56  & 0.76 \\
          &       & 750   & 11.14 & 3.85  & 3.85  & 0.83  & 0.83  & 0.80  & 0.84  & 2.95  & 4.91  & 0.57 \\
          &       & 1000  & 11.04 & 3.06  & 3.06  & 0.57  & 0.58  & 0.53  & 0.57  & 2.55  & 4.53  & 0.44 \\
           \bottomrule \toprule
     (10,2) & 1     & 125   & 2.30  & 2.23  & 2.23  & 1.49  & 1.49  & 1.49  & 1.49  & 2.23  & 2.54  & 1.49 \\
          &       & 250   & 1.65  & 1.58  & 1.58  & 0.82  & 0.83  & 0.81  & 0.83  & 1.58  & 1.91  & 0.81 \\
          &       & 500   & 1.20  & 1.11  & 1.11  & 0.46  & 0.46  & 0.45  & 0.46  & 1.11  & 1.49  & 0.45 \\
          &       & 750   & 1.02  & 0.91  & 0.91  & 0.33  & 0.33  & 0.31  & 0.33  & 0.91  & 1.31  & 0.32 \\
          &       & 1000  & 0.90  & 0.78  & 0.78  & 0.26  & 0.26  & 0.25  & 0.26  & 0.78  & 1.22  & 0.26 \\
          \cmidrule(rl){2-13}
          & 1.2   & 125   & 3.93  & 3.26  & 3.25  & 2.19  & 2.19  & 2.20  & 2.19  & 3.24  & 3.70  & 2.15 \\
          &       & 250   & 3.12  & 2.29  & 2.29  & 1.19  & 1.19  & 1.17  & 1.19  & 2.29  & 2.78  & 1.17 \\
          &       & 500   & 2.64  & 1.60  & 1.60  & 0.67  & 0.67  & 0.66  & 0.67  & 1.61  & 2.18  & 0.65 \\
          &       & 750   & 2.45  & 1.32  & 1.32  & 0.48  & 0.48  & 0.46  & 0.48  & 1.33  & 1.92  & 0.46 \\
          &       & 1000  & 2.33  & 1.13  & 1.13  & 0.39  & 0.39  & 0.39  & 0.39  & 1.14  & 1.79  & 0.37 \\
          \cmidrule(rl){2-13}
          & 1.4   & 125   & 6.94  & 5.06  & 5.04  & 3.61  & 3.61  & 3.62  & 3.61  & 4.66  & 5.37  & 2.93 \\
          &       & 250   & 6.02  & 3.20  & 3.20  & 1.66  & 1.67  & 1.63  & 1.66  & 3.18  & 3.89  & 1.59 \\
          &       & 500   & 5.56  & 2.22  & 2.22  & 0.92  & 0.93  & 0.90  & 0.92  & 2.24  & 3.05  & 0.91 \\
          &       & 750   & 5.38  & 1.83  & 1.83  & 0.67  & 0.67  & 0.65  & 0.66  & 1.84  & 2.69  & 0.64 \\
          &       & 1000  & 5.26  & 1.57  & 1.57  & 0.54  & 0.54  & 0.54  & 0.54  & 1.59  & 2.51  & 0.51 \\
          \cmidrule(rl){2-13}
          & 1.6   & 125   & 11.00 & 8.77  & 8.75  & 7.01  & 6.97  & 7.03  & 6.97  & 7.59  & 8.73  & 3.85 \\
          &       & 250   & 10.06 & 5.19  & 5.18  & 2.66  & 2.66  & 2.63  & 2.66  & 4.45  & 5.49  & 2.10 \\
          &       & 500   & 9.56  & 3.06  & 3.06  & 1.25  & 1.26  & 1.20  & 1.26  & 3.01  & 4.10  & 1.18 \\
          &       & 750   & 9.44  & 2.52  & 2.52  & 0.90  & 0.91  & 0.87  & 0.91  & 2.47  & 3.61  & 0.83 \\
          &       & 1000  & 9.33  & 2.15  & 2.15  & 0.73  & 0.74  & 0.73  & 0.74  & 2.13  & 3.38  & 0.67 \\
 \bottomrule \toprule
(10,3) & 1     & 125   & 2.45  & 2.35  & 2.35  & 1.89  & 1.86  & 1.89  & 1.86  & 2.40  & 2.73  & 1.77 \\
          &       & 250   & 1.72  & 1.63  & 1.63  & 1.10  & 1.07  & 1.08  & 1.07  & 1.66  & 1.90  & 1.05 \\
          &       & 500   & 1.25  & 1.12  & 1.12  & 0.75  & 0.61  & 0.60  & 0.61  & 1.14  & 1.33  & 0.77 \\
          &       & 750   & 1.07  & 0.92  & 0.92  & 0.63  & 0.46  & 0.46  & 0.46  & 0.94  & 1.10  & 0.46 \\
          &       & 1000  & 0.95  & 0.79  & 0.79  & 0.56  & 0.38  & 0.38  & 0.38  & 0.81  & 0.96  & 0.38 \\
          \cmidrule(rl){2-13}
          & 1.2   & 125   & 4.59  & 3.76  & 3.76  & 3.18  & 3.02  & 3.19  & 3.02  & 3.87  & 4.25  & 2.54 \\
          &       & 250   & 3.47  & 2.44  & 2.44  & 1.66  & 1.58  & 1.61  & 1.58  & 2.53  & 2.90  & 1.50 \\
          &       & 500   & 2.88  & 1.69  & 1.70  & 1.12  & 0.91  & 0.90  & 0.91  & 1.77  & 2.00  & 0.90 \\
          &       & 750   & 2.71  & 1.43  & 1.44  & 1.01  & 0.73  & 0.70  & 0.73  & 1.49  & 1.66  & 0.70 \\
          &       & 1000  & 2.57  & 1.21  & 1.21  & 0.85  & 0.57  & 0.55  & 0.57  & 1.27  & 1.46  & 0.55 \\
          \cmidrule(rl){2-13}
          & 1.4   & 125   & 8.15  & 6.77  & 6.76  & 6.07  & 5.26  & 6.09  & 5.26  & 6.20  & 6.49  & 3.44 \\
          &       & 250   & 7.08  & 4.16  & 4.16  & 2.91  & 2.41  & 2.88  & 2.41  & 3.81  & 4.20  & 2.05 \\
          &       & 500   & 6.47  & 2.69  & 2.69  & 1.68  & 1.34  & 1.33  & 1.34  & 2.61  & 2.92  & 1.33 \\
          &       & 750   & 6.18  & 2.27  & 2.27  & 1.38  & 1.02  & 0.94  & 1.02  & 2.15  & 2.39  & 0.94 \\
          &       & 1000  & 6.00  & 2.01  & 2.01  & 1.20  & 0.85  & 0.76  & 0.76  & 1.90  & 2.09  & 0.76 \\
          \cmidrule(rl){2-13}
          & 1.6   & 125   & 13.03 & 11.69 & 11.67 & 11.01 & 9.22  & 10.85 & 9.22  & 10.09 & 11.21 & 4.49 \\
          &       & 250   & 12.06 & 8.95  & 8.95  & 7.08  & 4.95  & 6.91  & 4.95  & 6.57  & 6.32  & 2.67 \\
          &       & 500   & 11.70 & 6.35  & 6.35  & 4.12  & 2.43  & 3.98  & 2.43  & 4.02  & 4.07  & 2.01 \\
          &       & 750   & 11.55 & 5.18  & 5.18  & 2.91  & 1.70  & 2.78  & 1.70  & 3.14  & 3.32  & 1.70 \\
          &       & 1000  & 11.46 & 4.44  & 4.44  & 2.18  & 1.43  & 1.79  & 1.43  & 2.70  & 2.87  & 1.43 \\
          \bottomrule
    \end{tabular}% 
        \end{adjustbox}
  \label{tab:RMSE_scale_p6}%

\end{minipage}
\end{table}%
%\end{landscape}

Tables \ref{tab:SPIC} and \ref{tab:SPIC_S} compare the proposed SPIC with mixture of normal and mixture of $t$ distributions Bayesian information criteria, denoted by $\text{BIC}_{\text{n}}$ and $\text{BIC}_{\text{t}}$, respectively, for selecting the number of clusters. Although the Elbow method \citep{thorndike1953belongs, cormack1971review} and the Silhouette method \citep{rousseeuw1987silhouettes} can select the number of clusters---both distribution-free heuristics---they do not guarantee consistent estimation. Accordingly, we do not provide comparisons of these methods with SPIC. Under model {M1}, SPIC outperforms both alternatives, while $\text{BIC}_{\text{n}}$ generally performs better than $\text{BIC}_{\text{t}}$. $\text{BIC}_{\text{t}}$ overestimates the number of clusters for $(p,k)=(6,2)$, even as $n$ increases. Under model {M2}, $\text{BIC}_{\text{n}}$ performs best overall. SPIC yields comparable results but tends to have slightly fewer clusters when $\sigma\geq 1.4$ and $n=125$. In contrast, $\text{BIC}_{\text{t}}$ consistently performs the worst.
Across all settings, SPIC correctly select the number of clusters when $n>125$ or $n>250$, depending on the configuration.

\begin{table}[htbp]
  \centering
   \caption{Means of 500 estimated number of clusters using different information criteria under model {M1}.}
  \begin{adjustbox}{max width=0.47\linewidth}
    \begin{tabular}{ccccccccccccc}
    \toprule
    \multicolumn{2}{c}{$(p,k)$}&\multicolumn{3}{c}{(6,2)} & &\multicolumn{3}{c}{(10,2)}&&\multicolumn{3}{c}{(10,3)} \\
    \cmidrule(rl){1-2} \cmidrule(rl){3-5}  \cmidrule(rl){7-9} \cmidrule(rl){11-13} 
     $\sigma$      &  $n$     & $\text{SPIC}$       & $\text{BIC}_{n}$      & $\text{BIC}_{t}$ &  & $\text{SPIC}$      & $\text{BIC}_{n}$      & $\text{BIC}_{t}$  &  & $\text{SPIC}$      & $\text{BIC}_{n}$      & $\text{BIC}_{t}$       \\  
     \midrule
    1     & 125   & 2.00  & 2.00  & 2.01  &           & 2.00  & 2.01  & 2.00  &           & 3.01  & 3.01  & 2.99 \\
                    & 250   & 2.00  & 2.00  & 2.00  &           & 2.00  & 2.00  & 2.00  &           & 3.00  & 3.00  & 3.00 \\
                    & 500   & 2.01  & 2.00  & 2.01  &           & 2.00  & 2.00  & 2.00  &           & 3.00  & 3.00  & 3.00 \\
                    & 750   & 2.01  & 2.01  & 2.13  &           & 2.00  & 2.00  & 2.00  &           & 3.00  & 3.00  & 3.00 \\
                    & 1000  & 2.01  & 2.02  & 2.54  &           & 2.00  & 2.00  & 2.00  &           & 3.00  & 3.00  & 3.00 \\
                    \midrule
    1.2   & 125   & 2.00  & 1.93  & 1.91  &           & 2.00  & 2.01  & 1.99  &           & 3.01  & 3.02  & 2.97 \\
                    & 250   & 2.00  & 2.00  & 2.00  &           & 2.00  & 2.00  & 2.00  &           & 3.00  & 3.00  & 3.00 \\
                    & 500   & 2.01  & 2.00  & 2.02  &           & 2.00  & 2.00  & 2.00  &           & 3.00  & 3.00  & 3.00 \\
                    & 750   & 2.02  & 2.01  & 2.18  &           & 2.00  & 2.00  & 2.00  &           & 3.00  & 3.00  & 3.00 \\
                    & 1000  & 2.02  & 2.02  & 2.75  &           & 2.00  & 2.00  & 2.00  &           & 3.00  & 3.00  & 3.00 \\
                    \midrule
    1.4   & 125   & 1.98  & 1.38  & 1.37  &           & 2.00  & 1.83  & 1.79  &           & 2.85  & 2.79  & 2.74 \\
                    & 250   & 2.00  & 1.83  & 1.90  &           & 2.00  & 2.00  & 2.00  &           & 3.02  & 3.01  & 3.01 \\
                    & 500   & 2.00  & 1.97  & 2.05  &           & 2.00  & 2.00  & 2.00  &           & 3.00  & 3.00  & 3.00 \\
                    & 750   & 2.00  & 2.01  & 2.23  &           & 2.00  & 2.00  & 2.00  &           & 3.00  & 3.00  & 3.00 \\
                    & 1000  & 2.00  & 2.02  & 2.80  &           & 2.00  & 2.00  & 2.00  &           & 3.00  & 3.00  & 3.00 \\
                    \midrule
    1.6   & 125   & 1.87  & 1.09  & 1.11  &           & 1.82  & 1.26  & 1.24  &           & 2.49  & 2.27  & 2.21 \\
                    & 250   & 2.02  & 1.40  & 1.43  &           & 2.00  & 1.92  & 1.87  &           & 2.97  & 2.80  & 2.74 \\
                    & 500   & 2.01  & 1.97  & 1.78  &           & 2.00  & 2.00  & 2.00  &           & 3.01  & 3.02  & 3.01 \\
                    & 750   & 2.01  & 2.02  & 2.21  &           & 2.00  & 2.00  & 2.00  &           & 3.00  & 3.02  & 3.03 \\
                    & 1000  & 2.00  & 1.99  & 2.78  &           & 2.00  & 2.00  & 2.00  &           & 3.02  & 3.01  & 3.02 \\
          \bottomrule
    \end{tabular}% 
        \end{adjustbox}
  \label{tab:SPIC}%
\end{table}%

\end{subsection}

\end{section}

\begin{section}{Applications} 
\label{sec:sec6}

This section illustrates the practical utility of the proposed methodology through two empirical applications.
The considered continuous variables were standardized to have mean zero and variance one before model fitting, ensuring meaningful comparisons across estimation methods. Supplementary figures and tables are provided in Appendix \ref{table-figure}.

\begin{subsection}{Application in Customer Segmentation}
\label{subsec:Customer}

Customer segmentation enables differentiated marketing by identifying distinct groups of shoppers with coherent purchasing patterns. To illustrate our methodology, we analyze transaction data from a North American supermarket chain spanning 2017-2020. The source consists of two primary tables: (i) a Transaction table (about 1.2 billion rows) with customer ID, transaction ID, date, product ID, quantity, and unit price; and (ii) a Product table ( about 150,000 rows) with description, category, subcategory, brand, manufacturer, and pack size.
For empirical clarity and to avoid pandemic-related distortions, we focus on calendar year 2018---the most complete pre-COVID year in the dataset---and restrict attention to reliably active shoppers. Inclusion criteria are: (a) loyalty-program members; (b) at least 26 transactions in 2018 (multiple transactions within a week counted as a single purchasing occasion); (c) evidence of engagement---at least one transaction every two months; and (d) for the category analysis, participation in product categories where each included customer records at least 26 transactions in 2018. This design balances sample size, purchase regularity, and representativeness while reducing the risk of inadvertently including churned customers.
After filtering, the reduced dataset contains 4,062 customers and 116,426 transactions across nine major product categories. We construct two variable sets per customer: (1) six bimonthly purchase amounts for Jan-Feb, Mar-Apr, May-Jun, Jul-Aug, Sep-Oct, and Nov-Dec; and (2) category-level purchase amounts for Cold Beverages, Fruit, Household, In-Store (prepared foods), Meal Makers, Milk \& Eggs, Snacks, Natural Foods, and Vegetables. The bimonthly profile serves as the clustering feature space; the category totals are held out for interpretation and validation of cluster meaning.

Let $T_{1}$ through $T_{6}$ denote the six bimonthly intervals and let $\text{PA}_{1}$ through $\text{PA}_{6}$ be the corresponding log-transformed purchase amounts. We fit the SCED to $\text{PA}_{1}$, $\text{PA}_{2}$, \dots, $\text{PA}_{6}$. Model selection via the semiparametric information criterion (SPIC) favors 
three clusters; in contrast, normal- and $t$-based BICs prefer more clusters 
but exhibit local minima at three clusters (Table \ref{ab:spic0}). For three clusters, agreement with the OC method measured by the RI is 0.643 for $k$-means, 0.820 for IS, 0.820 for SP, 0.896 for $\text{OC}_{\text{n}}$, and 0.854 for $\text{OC}_{\text{t}}$. Thus, $k$-means yields a substantially different partition---consistent with its spherical-cluster bias---whereas the remaining methods align more closely with OC. Under OC, Clusters 1 to 3 contain 2,513, 1,081, and 468 customers, with mean purchase amounts of 371.20, 199.10, and 277.75, and mean transaction counts of 28.95, 28.30, and 27.98, respectively.

\begin{table}[htbp]
  \centering
  \caption{Information criterion values versus number of clusters.}
  \begin{adjustbox}{max width=0.47\linewidth}
      \begin{tabular}{cccc}
\toprule
    Number of clusters & SPIC  & $\text{BIC}_{\text{n}}$ & $\text{BIC}_{\text{t}}$ \\
\midrule
    1     & 7.913 & 7.971 & 7.910 \\
    2     & 7.898 & 7.953 & 7.858 \\
    3     & \textbf{7.865} & \textbf{7.893} & \textbf{7.822} \\
    4     & 7.895 & 7.927 & 7.841 \\
    5     & 7.890 & 7.879 & \textbf{7.818} \\
    6     & 7.910 & \textbf{7.870} & 7.825 \\
\bottomrule
    \end{tabular}%
    \end{adjustbox}
      \label{ab:spic0}%
\end{table}%

Parameter estimation results (Table~\ref{tab:addlabel0}) compare RPML, RPMML, $\text{MML}_{\text{n}}$, and $\text{MML}_{\text{t}}$. RPML and RPMML deliver broadly similar point estimates, which differ from the fully parametric $\text{MML}_{\text{n}}$	and $\text{MML}_{\text{t}}$. Using 500 within-cluster bootstrap resamples, the bootstrap mean-squared errors for RPML are 0.063 (cluster-specific mean vectors) and 0.078 (cluster-invariant variance matrix), whereas RPMML achieves 0.033 and 0.029, respectively. RPMML therefore provides better parameter estimates than RPML in this application, while retaining the semiparametric robustness of the proposed framework. Let $\ell_{0}(k)=\max_{\eta}\ell(\eta)$,
and define
 \begin{align*}
 D=\frac{1}{n}\big(p\ell\big(\breve{k}\big)-\ell_{0}\big(\breve{k}\big)\big)-\frac{\log n}{n^{\frac{4}{5}}}.
 \end{align*}
By Theorems \ref{Thm3.3} and \ref{Thm3.4}, a parametric clusterwise elliptical model is closest to the true data-generating mechanism within its class if $D \leq 0$ and not the closest otherwise. For the clusterwise multivariate normal and clusterwise multivariate $t$ specifications, we obtain $D=0.021$ and $D=0.088$, respectively---both positive---indicating that neither parametric model adequately approximates the truth. Using the estimated mean vectors with bootstrap standard errors, pairwise comparisons of purchase amounts across the six time intervals show statistically significant differences at the 0.05 level: Cluster 1 exceeds Cluster 2 from $T_1$ to $T_6$; Cluster 1 exceeds Cluster 3 from $T_2$ to $T_6$; and Cluster 3 exceeds Cluster 2 at every time period except $T_5$.
 
Figure \ref{fig:Sindian_mean_D} shows that in $T_5$, Cluster 3 records significantly lower purchase amounts than Clusters 1 and 2 across all nine major product categories. Aside from the In-Store category in $T_1$ and $T_2$, Cluster 2 purchases are significantly lower than Cluster 1’s across the remaining categories. Relative to Cluster 3, Cluster 1 is significantly higher in Household ($T_2$), Milk \& Eggs ($T_4$), and---in $T_6$---In-Store, Milk \& Eggs, Snacks, Natural Foods, and Vegetables. Except for $T_5$, Cluster 2’s purchase amounts are generally comparable to or significantly lower than Cluster 3’s across categories. Table \ref{tab:addlabel} corroborates these patterns for average weekly spend: Cluster 1 exceeds Cluster 2 in every category and exceeds Cluster 3 in all but Cold Beverages. Compared with Cluster 3, Cluster 2 is significantly lower in every category except Cold Beverages, Household, and In-Store. With the exception of Fruit, cross-cluster patterns in average purchase amounts closely mirror those in average weekly purchase amounts.

\begin{table}[htbp]
  \centering
  \caption{Mean vector estimates (bootstrap standard errors) from the RPML, RPMML, 
$\text{MML}_{\text{n}}$, and $\text{MML}_{\text{t}}$ methods, and variance matrix estimates (bootstrap standard errors), with the upper and lower triangles corresponding to RPML and RPMML, respectively.}
\begin{adjustbox}{max width=0.85\linewidth}
    \begin{tabular}{ccccccccccccc}
\toprule
      Method            &  \multicolumn{3}{c}{RPML}  &      \multicolumn{3}{c}{RPMML} &        \multicolumn{3}{c}{$\text{MML}_{\text{n}}$} &      \multicolumn{3}{c}{$\text{MML}_{\text{t}}$}     \\
\cmidrule(rl){1-1} \cmidrule(rl){2-4} \cmidrule(rl){5-7} \cmidrule(rl){8-10} \cmidrule(rl){11-13}
    Variable & $\mu_1$ & $\mu_2$ & $\mu_3$ & $\mu_1$ & $\mu_2$ & $\mu_3$ & $\mu_1$ & $\mu_2$ & $\mu_3$ & $\mu_1$ & $\mu_2$ & $\mu_3$ \\
    \midrule
    $\text{PA}_{1}$ & 0.46  & -1.15 & 0.36  & 0.44  & -1.02 & 0.37  & 0.26  & -1.64 & 0.40   & 0.28  & -1.41 & 0.41 \\
          & (0.012) & (0.019) & (0.025) & (0.015) & (0.027) & (0.023) & (0.026) & (0.179) & (0.071) & (0.041) & (0.203) & (0.087) \\
    $\text{PA}_{2}$ & 0.36  & -0.85 & 0.18  & 0.39  & -0.76 & 0.18  & 0.08  & -0.58 & 0.30   & 0.09  & -0.06 & 0.25 \\
          & (0.013) & (0.033) & (0.039) & (0.016) & (0.037) & (0.032) & (0.048) & (0.160) & (0.088) & (0.054) & (0.131) & (0.072) \\
    $\text{PA}_{3}$ & 0.25  & -0.55 & 0.05  & 0.31  & -0.51 & 0.08  & 0.07  & -0.48 & 0.20   & 0.07  & 0.10   & 0.15 \\
          & (0.018) & (0.038) & (0.029) & (0.020) & (0.026) & (0.030) & (0.028) & (0.086) & (0.069) & (0.031) & (0.063) & (0.069) \\
    $\text{PA}_{4}$ & 0.23  & -0.46 & -0.06 & 0.29  & -0.42 & -0.01 & 0.08  & -0.45 & 0.05  & 0.07  & 0.09  & -0.02 \\
          & (0.018) & (0.037) & (0.034) & (0.022) & (0.023) & (0.041) & (0.032) & (0.101) & (0.075) & (0.033) & (0.085) & (0.080) \\
    $\text{PA}_{5}$ & 0.49  & -0.31 & -1.45 & 0.48  & -0.28 & -1.19 & 0.19  & -0.42 & -2.05 & 0.20   & 0.10   & -1.67 \\
          & (0.008) & (0.028) & (0.033) & (0.024) & (0.023) & (0.026) & (0.035) & (0.091) & (0.143) & (0.047) & (0.093) & (0.173) \\
    $\text{PA}_{6}$ & 0.24  & -0.41 & -0.21 & 0.30   & -0.35 & -0.14 & 0.07  & -0.42 & -0.15 & 0.10   & 0.08  & -0.16 \\
          & (0.017) & (0.029) & (0.032) & (0.024) & (0.027) & (0.028) & (0.032) & (0.074) & (0.112) & (0.035) & (0.072) & (0.086) \\
\bottomrule
    \end{tabular}%
    \end{adjustbox}
      \centering
  \caption*{}
  \begin{adjustbox}{max width=0.4\linewidth}
      \begin{tabular}{cccccccc}
\toprule
       & {\small $\text{PA}_{1}$} &  {\small $\text{PA}_{2}$} &  {\small $\text{PA}_{3}$} &  {\small $\text{PA}_{4}$} &  {\small $\text{PA}_{5}$} &  {\small $\text{PA}_{6}$} &  \\
          \midrule
           & 0.49  & -0.01 & 0.02  & -0.02 & -0.01 & -0.06 & {\small $\text{PA}_{1}$} \\
          & (0.012) & (0.013) & (0.008) & (0.010) & (0.010) & (0.009) &  \\
          &       & 0.72  & 0.26  & 0.17  & 0.14  & 0.07  & {\small $\text{PA}_{2}$} \\
          &       & (0.017) & (0.012) & (0.011) & (0.011) & (0.014) &  \\
    {\small $\text{PA}_{1}$} & 0.49  &       & 0.86  & 0.36  & 0.23  & 0.14  & {\small $\text{PA}_{3}$} \\
          & (0.017) &       & (0.019) & (0.016) & (0.012) & (0.013) &  \\
   {\small $\text{PA}_{2}$} & -0.02 & 0.71  &       & 0.91  & 0.32  & 0.18  & {\small $\text{PA}_{4}$} \\
          & (0.011) & (0.023) &       & (0.022) & (0.012) & (0.016) &  \\
    {\small $\text{PA}_{3}$}& 0.01  & 0.26  & 0.87  &       & 0.53  & 0.24  & {\small $\text{PA}_{5}$} \\
          & (0.012) & (0.023) & (0.027) &       & (0.014) & (0.012) &  \\
    {\small $\text{PA}_{4}$} & -0.02 & 0.16  & 0.36  & 0.90   &       & 0.91  & {\small $\text{PA}_{6}$} \\
          & (0.013) & (0.015) & (0.021) & (0.028) &       & (0.026) &  \\
    {\small $\text{PA}_{5}$} & 0.02  & 0.15  & 0.24  & 0.32  & 0.52  &       &  \\
          & (0.012) & (0.011) & (0.015) & (0.014) & (0.023) &       &  \\
    {\small $\text{PA}_{6}$} & -0.06 & 0.07  & 0.14  & 0.18  & 0.25  & 0.91  &  \\
          & (0.011) & (0.014) & (0.013) & (0.014) & (0.011) & (0.028) &  \\
\bottomrule
    \end{tabular}%
  \label{tab:addlabel0}%
  \end{adjustbox}
  \end{table}%

\end{subsection}

\begin{subsection}{Application in Pima-Indian Diabetes Research} 
\label{subsec:Pima}

The Pima Indian Diabetes dataset compiled by the National Institute of Diabetes and Digestive and Kidney Diseases contains records on 768 adult women of Pima Indian heritage ($age \ge 21$ at baseline) from a longitudinal study of incident diabetes over a 1-5 year follow-up. For each participant, we observe $age$; gravidity status $gs$ (coded 0 if the number of pregnancies $\le 2$, 1 if $>2$); diabetes status $ds$ ($0 =$ non-diabetic, $1 =$ diabetic); and six log-transformed clinical measures: two-hour plasma glucose after an oral glucose tolerance test ($gtt$), diastolic blood pressure ($dbp$), triceps skinfold thickness ($tsft$), two-hour serum insulin ($si$), body mass index ($bmi$), and diabetes pedigree function ($dpf$). To mitigate physiologically implausible values, we excluded patients with extreme biomarker readings or zeros for either $bmi$ or $gtt$, yielding a final analytic sample of 389 participants. Of these, 128 met the WHO diagnostic criteria for diabetes and 261 did not. Among participants with recorded gravidity, 210 had at most two pregnancies and 178 had more than two. The study objective is to identify latent clusters from the biomarker-age profile and examine their associations with $ds$ and $gs$.

We fit the SCED to $gtt,dbp,tsft,si,bmi,dpf,$ and $age$. Model selection using the semiparametric information criterion (SPIC) and two parametric comparators ($\text{BIC}_{\text{n}}$ and $\text{BIC}_{\text{t}}$) consistently favored 
four clusters (Table~\ref{tab:spic}). Agreement among clustering methods was high: RI = 0.855 between $k$-means and OC; 0.972 between IS and OC; 0.972 between SP and OC; 0.972 between $\text{OC}_{\text{n}}$ and OC; and 0.974 between $\text{OC}_{\text{t}}$ and OC. Under the OC solution, cluster sizes were 102, 105, 100 and 81, respectively, for Clusters 1 to 4. Diabetes prevalence was low in Clusters 1 and 3 (0.128 and 0.120) and markedly higher in Clusters 2 and 4 (0.543 and 0.580). The proportions with low gravidity were 0.716, 0.571, 0.660, and 0.139 in Clusters 1 to 4, respectively. Comparing the OC clustering to the partition defined jointly by $(ds,gs)$ yielded RI$ = 0.652$, indicating that while disease/gravidity status is informative, the latent structure is not reducible to these two labels.

\begin{table}[htbp]
  \centering
    \caption{Information criterion values versus number of clusters.}
  \begin{adjustbox}{max width=0.47\linewidth}
    \begin{tabular}{cccc}
      \toprule
      Number of clusters & SPIC & $\mathrm{BIC}_{\text{n}}$ & $\mathrm{BIC}_{\text{t}}$ \\
      \midrule
      1     & 9.192 & 9.178 & 9.151 \\
    2     & 9.154 & 9.174 & 9.141 \\
    3     & 9.139 & 9.139 & 9.076 \\
    4     & \textbf{9.046} & \textbf{9.081} & \textbf{9.041} \\
    5     & 9.120 & 9.133 & 9.090 \\
    6     & 9.207 & 9.183 & 9.141 \\
      \bottomrule
    \end{tabular}
     \label{tab:spic}%
       \end{adjustbox}
\end{table}

Clustering assignments from $\text{OC}_{\text{n}}$ and $\text{OC}_{\text{t}}$ closely matched OC, but the corresponding maximum marginal likelihood estimates differed from the pseudo-maximum and pseudo-maximum marginal estimates (Table \ref{tab:label1}). In bootstrap evaluation, RPMML achieved lower mean-squared error than RPML for both the cluster-specific mean vectors (0.072 vs. 0.087) and the cluster-invariant variance matrix (0.075 vs. 0.086), suggesting improved 
parameter estimation. The inadequacy of purely parametric specifications was further reflected in the goodness-of-fit statistic $D$: $D=0.194$ for the clusterwise normal and $D=0.144$ for the clusterwise $t$, both indicating lack of fit relative to the semiparametric model. Group comparisons (Table~\ref{tab:label2}) show that diabetic patients $(ds=1)$ are significantly older and exhibit higher levels across all six biomarkers than non-diabetics. Participants with higher gravidity $(gs=1)$ are significantly older and have elevated 
$gtt$ and $dbp$ relative to those with $gs=0$. Consistent with these patterns, Clusters 1 and 3 are composed primarily of non-diabetic, low-gravidity participants, whereas Clusters 2 and 4 are dominated by diabetic, high-gravidity participants. Notable nuances include: $bmi$ in Cluster 1 is comparable to Clusters 2 and 4 and exceeds Cluster 3; mean $age$ in Cluster 2 is similar to Clusters 1 and 3 but lower than Cluster 4. Although the RI between OC and the $(ds,gs)$ partition is modest, the estimated cluster mean vectors (Table~\ref{tab:label1}) mirror the group contrasts in Table~\ref{tab:label2} for $(ds, gs) \in \{ (0,0), (0,1), (1,0), (1,1) \}$, yielding coherent clinical interpretations. Table~\ref{tab:label3} further indicates a cluster-specific diabetes effect alongside a largely cluster-invariant gravidity effect on the biomarkers, a pattern that merits additional investigation.

\begin{table}[htbp]
  \centering
  \caption{Mean vector estimates (bootstrap standard errors) from the RPML, RPMML, 
$\text{MML}_{\text{n}}$, and $\text{MML}_{\text{t}}$ methods, and variance matrix estimates (bootstrap standard errors), with the upper and lower triangles corresponding to RPML and RPMML, respectively.}
\begin{adjustbox}{max width=0.85\linewidth}
    \begin{tabular}{ccccccccccccccccccccc}
    \toprule
   Method            &  \multicolumn{4}{c}{RPML}  &      \multicolumn{4}{c}{RPMML} &        \multicolumn{4}{c}{$\text{MML}_{\text{n}}$} &      \multicolumn{4}{c}{$\text{MML}_{\text{t}}$}     \\
   \cmidrule(rl){1-1} \cmidrule(rl){2-5} \cmidrule(rl){6-9} \cmidrule(rl){10-13} \cmidrule(rl){14-17}  
    Variable      & $\mu_1$    & $\mu_2$    & $\mu_3$    & $\mu_4$    & $\mu_1$    & $\mu_2$    & $\mu_3$    & $\mu_4$    & $\mu_1$    & $\mu_2$    & $\mu_3$    & $\mu_4$    & $\mu_1$    & $\mu_2$    & $\mu_3$    & $\mu_4$       \\
    \cmidrule(rl){1-17}
    \textit{gtt}   & -0.87 & 0.73  & -0.29 & 0.47  & -0.62 & 0.50   & -0.27 & 0.47  & -0.29 & 0.25  & -0.27 & 0.40   & -0.35 & 0.31  & -0.30  & 0.41 \\
          & (0.068) & (0.069) & (0.092) & (0.105) & (0.06) & (0.065) & (0.087) & (0.100) & (0.185) & (0.200) & (0.143) & (0.141) & (0.184) & (0.182) & (0.143) & (0.145) \\
    \textit{dpb}   & -0.24 & 0.22  & -0.37 & 0.43  & -0.38 & 0.28  & -0.25 & 0.46  & -0.66 & 0.48  & -0.15 & 0.51  & -0.63 & 0.47  & -0.12 & 0.50 \\
          & (0.100) & (0.112) & (0.107) & (0.100) & (0.088) & (0.099) & (0.099) & (0.084) & (0.224) & (0.212) & (0.171) & (0.098) & (0.204) & (0.186) & (0.168) & (0.101) \\
    \textit{tsft}  & 0.42  & 0.64  & -1.32 & 0.28  & 0.26  & 0.66  & -1.29 & 0.33  & 0.14  & 0.72  & -1.33 & 0.29  & 0.16  & 0.71  & -1.33 & 0.29 \\
          & (0.067) & (0.065) & (0.07) & (0.086) & (0.059) & (0.057) & (0.063) & (0.083) & (0.148) & (0.118) & (0.149) & (0.113) & (0.145) & (0.108) & (0.136) & (0.104) \\
    \textit{si}    & -0.64 & 0.69  & -0.36 & 0.35  & -0.44 & 0.60   & -0.42 & 0.27  & -0.24 & 0.41  & -0.43 & 0.30   & -0.26 & 0.46  & -0.47 & 0.30 \\
          & (0.092) & (0.085) & (0.097) & (0.107) & (0.075) & (0.078) & (0.092) & (0.098) & (0.180) & (0.169) & (0.132) & (0.130) & (0.160) & (0.163) & (0.139) & (0.123) \\
    \textit{bmi}   & 0.29  & 0.55  & -0.96 & 0.17  & 0.20   & 0.49  & -0.92 & 0.18  & 0.06  & 0.59  & -0.92 & 0.14  & 0.10   & 0.56  & -0.96 & 0.16 \\
          & (0.088) & (0.088) & (0.087) & (0.096) & (0.080) & (0.084) & (0.078) & (0.088) & (0.170) & (0.161) & (0.137) & (0.099) & (0.160) & (0.155) & (0.138) & (0.109) \\
    \textit{dpf}   & -0.24 & 0.49  & -0.24 & -0.01 & -0.17 & 0.33  & -0.24 & -0.01 & -0.09 & 0.34  & -0.27 & -0.01 & -0.08 & 0.32  & -0.29 & 0.00 \\
          & (0.105) & (0.099) & (0.103) & (0.129) & (0.104) & (0.093) & (0.092) & (0.115) & (0.194) & (0.216) & (0.157) & (0.142) & (0.170) & (0.195) & (0.156) & (0.141) \\
    \textit{age}   & -0.55 & -0.26 & -0.52 & 1.65  & -0.54 & -0.29 & -0.52 & 1.58  & -0.48 & -0.29 & -0.53 & 1.65  & -0.49 & -0.28 & -0.54 & 1.63 \\
          & (0.053) & (0.055) & (0.054) & (0.074) & (0.042) & (0.050) & (0.049) & (0.072) & (0.077) & (0.101) & (0.073) & (0.124) & (0.078) & (0.105) & (0.063) & (0.138) \\

    \bottomrule
    \end{tabular}%
    \end{adjustbox}  
  \centering
  \caption*{}
  \begin{adjustbox}{max width=0.4\linewidth}
    \renewcommand{\arraystretch}{1.15}
    \begin{tabular}{lccccccccl}
      \toprule
      & \textit{gtt} & \textit{dpb} & \textit{tsft} & \textit{si} & \textit{bmi} & \textit{dpf} & \textit{age} & \\
      \midrule
      & 0.58  & 0.04  & 0.05  & 0.31  & 0.11  & -0.04 & 0.07  & \textit{gtt} \\
          & (0.043) & (0.040) & (0.025) & (0.037) & (0.031) & (0.037) & (0.023) &  \\
          &       & 0.89  & 0.06  & -0.04 & 0.21  & -0.12 & 0.09  & \textit{dpb} \\
          &       & (0.074) & (0.029) & (0.041) & (0.042) & (0.054) & (0.027) &  \\
    \textit{gtt}   & 0.59  &       & 0.37  & 0.02  & 0.21  & -0.04 & 0.00     & \textit{tsft} \\
          & (0.044) &       & (0.041) & (0.032) & (0.034) & (0.034) & (0.022) &  \\
    \textit{dpb}   & 0.02  & 0.88  &       & 0.73  & 0.14  & -0.02 & 0.07  & \textit{si} \\
          & (0.038) & (0.067) &       & (0.070) & (0.041) & (0.048) & (0.024) &  \\
    \textit{tsft}  & 0.05  & 0.05  & 0.38  &       & 0.65  & 0.01  & 0.02  & \textit{bmi} \\
          & (0.023) & (0.029) & (0.041) &       & (0.053) & (0.043) & (0.024) &  \\
    \textit{si}    & 0.30   & -0.05 & 0.01  & 0.71  &       & 0.93  & 0.06  & \textit{dpf} \\
          & (0.038) & (0.041) & (0.032) & (0.072) &       & (0.069) & (0.027) &  \\
    \textit{bmi}   & 0.09  & 0.19  & 0.22  & 0.12  & 0.66  &       & 0.26  & \textit{age} \\
          & (0.030) & (0.039) & (0.035) & (0.041) & (0.055) &       & (0.022) &  \\
    \textit{dpf}   & -0.03 & -0.10  & -0.03 & -0.01 & 0.03  & 0.95  &       &  \\
          & (0.035) & (0.049) & (0.032) & (0.045) & (0.041) & (0.064) &       &  \\
    \textit{age}   & 0.08  & 0.08  & 0.00     & 0.08  & 0.03  & 0.08  & 0.26  &  \\
          & (0.021) & (0.026) & (0.023) & (0.023) & (0.021) & (0.024) & (0.021) &  \\
      \bottomrule
    \end{tabular}
  \end{adjustbox}
     \label{tab:label1}%
    \end{table}

\end{subsection}

\end{section}

\begin{section}{Concluding Remarks and Future Challenges}

\label{sec:sec7}

We study a general SCED for unsupervised learning and develop two estimation procedures: a pseudo-maximum likelihood estimator and a pseudo-maximum marginal likelihood estimator.  The framework further yields an asymptotically optimal clustering rule---maximizing the probability of correct membership---and a semiparametric information criterion for selecting the number of clusters. While pseudo-maximum likelihood estimator is asymptotically more efficient, simulations show that pseudo-maximum marginal likelihood estimator delivers superior finite-sample performance when samples are small, cluster means are weakly separated, or the parameter dimension is high.

The methodology extends naturally to SCEDs with cluster-specific scatter matrices and density generators. Important open questions include characterizing the asymptotics of the proposed estimators when the number of variables and clusters grows with sample size and formalizing model adequacy tests. In practice, adequacy of a parametric clusterwise elliptical model can be assessed by comparing its maximized likelihood with the proposed pseudo-likelihood, though a fuller treatment of this comparison warrants further study. For clusterwise location models with an unspecified multivariate density, the framework supports valid inference when data are sufficiently dense to enable high-dimensional function estimation; under data sparsity, modifications such as regularization or dimension reduction are needed to keep estimation and clustering feasible.

In the Pima Indian Diabetes application, the clinical indicators we analyze are standard inputs to existing diabetes classifiers. Our results corroborate prior findings while revealing that biomarker-outcome associations vary across latent patient subgroups. This heterogeneity argues for subgroup-aware evaluation---e.g., cluster-conditional receiver operating characteristic analysis---when assessing classifiers. More broadly, because labeling is costly or time-consuming in domains such as food authentication, medical imaging, and web categorization, many real-world classification tasks mix labeled and unlabeled observations. Leveraging both types of data---e.g., by combining SCED-based representations with semi-supervised classification---can improve estimation efficiency and predictive accuracy.

\end{section}
\end{spacing}

\bibliographystyle{biom}
\bibliography{reference}{}
\clearpage

\appendix

\section*{Appendix}

\renewcommand{\thesection}{A}
\renewcommand{\theequation}{\thesection.\arabic{equation}}
\renewcommand{\thetable}{S\arabic{table}}
\renewcommand{\thefigure}{S\arabic{figure}}
\renewcommand{\thelem}{\thesection.\arabic{aplemma}}
\setcounter{table}{0}
\setcounter{equation}{0}
\setcounter{thm}{0}

\subsection{Assumptions}
\label{sec:Assumptions}

Let $D_{\mu}=\min_{\{c_{1} \neq c_{2}\}} \|W^{1/2}(\mu^{\text{o}}_{c_{1}} - \mu^{\text{o}}_{c_{2}})\|_{1}$.
The following assumptions are imposed to establish the oracle property of $(\hat{\beta},\hat{\mu})$:
\begin{itemize}
 \item[A1.] There exists a constant $d_{0}>0$ such that
 \begin{align*} 
P\big(|a^{\top}U|>\|a\| v\big)\leq 2\exp\big(-d_{0}v^{2}\big)\text{ for all } a\in \mathbb{R}^{p} \text{ and }v>0.
 \end{align*}
 \item[A2.] $\mathcal{B}=\{\beta:\sup_{\{1\leq i\leq n\}} \|W^{1/2}(\beta_{i}-\beta^{\text{o}}_{i} )\|_{1} \leq D_{\mu}/4\}$ and $\mathcal{U}=\{\mu:\sup_{\{c\in\mathcal{C}\}}\|W^{1/2}(\mu_{c}-\mu^{\text{o}}_{c})\|_{1}\leq D_{\mu}/4\}$.
  \item[A3.] $\lambda \geq 2 p \big\Vert W^\frac{1}{2}\big\Vert \sqrt{\log n/d_0}$.
\end{itemize}
 \noindent Assumption {A1} imposes a sub-Gaussian tail condition on $U$, which facilitates the theoretical development of the subject-specific model in high-dimensional settings. Assumptions {A2} and {A3} ensure that, with probability converging to one, the strict inequality $\textsc{SS}_{\text{sp}}(\beta,\mu;\lambda) > \textsc{SS}_{\text{sp}}(\hat{\beta}^{\text{or}},\hat{\mu}^{\text{or}};\lambda)$ holds for all $(\beta, \mu) \in \mathcal{B} \times \mathcal{U}$ such that $\beta \neq \hat{\beta}^{\text{or}}$ or $\mu \neq \hat{\mu}^{\text{or}}$.

The following additional assumptions are imposed for Lemma  \ref{lem1} and Theorems \ref{Thm3.2}--\ref{Thm3.4}:
\begin{itemize}
\item[A4.] $h\in \big(h_{1}n^{-1/5},h_{2} n^{-1/8}\big)$ for some constants $h_{1}, h_{2}>0$. 
\item[A5.] $\mathcal{X}$ and $\mathcal{H}$ are compact.
\item[A6.] $f^{[m]}(x, c;\eta)$ is Lipschitz continuous in $x$ for $c\in\mathcal{C}$ and $m\in\{0,1,2\}$, with Lipschitz constants independent of $\eta$.   
\item[A7.] $\inf_{\{x,c,\eta\}}f^{[0]}(x,c;\eta)>0$.
\item[A8.] $V_{1}$ is positive definite.
\item[A9.] $V_{2}$ is positive definite.
\end{itemize}
Assumption {A4} specifies the bandwidth rate required for the $\sqrt{n}$-consistency of $\tilde{\eta}$ and $\check{\eta}$. The compactness condition in {A5} can be relaxed to suitable moment conditions. Assumption {A6} imposes smoothness for the uniform convergence of the estimated functions to their targets. Assumptions {A7}--{A9} provide the remaining regularity conditions needed to establish the asymptotic properties of $\tilde{\eta}$ and $\check{\eta}$.

\subsection{Technical Lemma}
\label{sec:A.1}

\begin{aplemma} 
    \label{lem1}
Under assumptions {A4}--{A7},
    \begin{align*}
   & \sup_{\{x,c,\eta\}}\Big\Vert\partial_\eta^m\hat{f}_h(x,c;\eta)-f^{[m]}(x,c;\eta)\Big\Vert
 =\sup_{\{x,c,\eta\}}\bigg\Vert \frac{1}{n}\sum_{i=1}^n \xi_{i,m}(x,c;\eta)  \bigg\Vert=O(h^{2})+o_p\Bigg(\sqrt{\frac{\log n}{nh^{1+2m}}}\Bigg),
\end{align*}
where 
$\xi_{i,m}(x,c;\eta)=\partial_{\eta}^m \big(\pi_{c}w(y_{c})K_{h}(Y_{i},y_{c}) \big)-f^{[m]}(x,c;\eta)$, $i=1,\dots,n$, $m=0,1,2.$
\end{aplemma}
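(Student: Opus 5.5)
The plan is the usual decomposition of the kernel estimator into a deterministic bias part and a centered stochastic part, with each bounded uniformly in $(x,c,\eta)$. Write $\hat f_h(x,c;\eta)=\pi_c w(y_c)\hat g_h(y_c;\theta)$, where $y_c=\Psi((x-\mu_c)^\top\Sigma^{-1}(x-\mu_c))$. Here $Y_i$ is the transformed variable evaluated at the true membership of subject $i$. By Theorem \ref{Thm3.1}, $\widehat{\mathcal G}=\mathcal G^{\text o}$ with probability tending to one, so replacing $\widehat Y_i$ by $Y_i$ changes nothing on an event of probability $1-o(1)$. This explains why the lemma is stated with $K_h(Y_i,y_c)$.

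Then $\partial_\eta^m\hat f_h-f^{[m]}=n^{-1}\sum_i\xi_{i,m}$, which I split as
\[
n^{-1}\sum_i\{\xi_{i,m}-E\xi_{i,m}\}+E\xi_{i,m},
\]
where $E\xi_{i,m}=\partial_\eta^m E[\pi_cw(y_c)K_h(Y,y_c)]-f^{[m]}$. Interchanging $\partial_\eta^m$ and $E$ is allowed by compactness (A5) and the smoothness of $K$.

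\textbf{Bias step.} For the bias I change variables $u=(v-y)/h$. Using the symmetric second-order kernel and a second-order Taylor expansion of $g$ (and of its derivatives in $\theta$, which enter through $y_c$), the convolution $E K_h(Y,y)$ equals $g(y)+O(h^2)$. The reflected term $h^{-1}K((-Y-y)/h)$ handles the boundary at zero: on the boundary layer $y<h$ it combines with the main term to give the pseudodata-type correction. The condition $\int K^{(2)}=0$, together with the $O(h)$ term cancelling in the reflection, keeps the bias at $O(h^2)$ uniformly, including after differentiation in $y$ up to twice. Multiplying by the bounded smooth factor $\pi_cw(y_c)$ and using A6 and A7 to control $w$ and its $\eta$-derivatives on the compact set, I get $\sup\|E\xi_{i,m}\|=O(h^2)$ for $m=0,1,2$.

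\textbf{Stochastic step (main obstacle).} This term needs a uniform rate over the index set $(x,c,\eta)$. Each $\eta$-derivative of $K_h(Y_i,y_c(\eta))$ brings down a factor $h^{-1}K^{(j)}$, so the summands are bounded by $O(h^{-1-m})$ and have variance $O(h^{-1-2m})$. I would cover the compact index set by a grid of mesh $n^{-a}$ with polynomially many points. On each grid point I apply Bernstein's inequality with threshold $\epsilon\sqrt{\log n/(nh^{1+2m})}$. Under A4, $nh^{1+2m}/\log n\to\infty$ for $m\le2$ (since $h\gg n^{-1/5}$), so the range term $h^{-1-m}\cdot\text{threshold}$ is dominated by the variance term. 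The union bound then gives probability $n^{C}\exp(-c\epsilon^2\log n)$, which vanishes once $\epsilon$ is taken large.

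Between grid points, Lipschitz continuity of $K^{(j)}$ and of $w$ and $y_c$ in $(x,\eta)$ (from A5 and A6) bounds the oscillation by $O(n^{-a}h^{-2-m})$, which is negligible for $a$ large. This yields $O_p$ of the rate. To upgrade to $o_p$ as stated, I would use the slack in A4 with a slightly finer truncation, or equivalently note that $\epsilon$ may be taken arbitrarily small at the cost of a larger grid exponent combined with a vanishing factor. The delicate part I expect is the uniformity in $\eta$: $Y_i$ itself does not depend on $\eta$, but $y_c$ does. The grid argument has to handle the joint $(x,\eta)$ dependence of $y_c$, and A7 is needed so that $w$ stays bounded away from singularities.
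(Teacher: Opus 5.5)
Your architecture matches the paper's: restrict to the event $\widehat{\mathcal{G}}=\mathcal{G}^{\text{o}}$, then split into a centred empirical term and a bias term. A grid-plus-Bernstein bound is also a legitimate, more elementary substitute for the paper's Euclidean-class argument via Pollard's Theorem II.37.

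\textbf{The main gap.} You assert that ``$Y_i$ itself does not depend on $\eta$.'' It does. $Y_i=\Psi\big((X_i-\mu_{C_i})^{\top}\Sigma^{-1}(X_i-\mu_{C_i})\big)$ is computed at the current $\theta$, because the true $\theta$ is unknown. That is why the estimator is written $\hat g_h(y;\theta)$ and why its limit is $g(y;\theta)$, the density of $Y(\theta)$.

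Hence $\partial^m_\eta K_h(Y_i,y_c)$ is a sum of terms $h^{-(q+1)}K^{(q)}\big((Y_i\pm y_c)/h\big)M^m_{q,r}(X_i,x)$. The factors $M^m_{q,r}$ are built from $\partial_\eta(Y_i\pm y_c)$ and $\partial^2_\eta(Y_i\pm y_c)$, which are random and depend on $X_i$, not only on $Y_i$.

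Your bias step Taylor-expands a fixed $g$ and lets $\eta$ act only through $y_c$, so it does not handle these terms. The limit it produces, $\partial^m_\eta\{\pi_cw(y_c)g(y_c)\}$ with $g$ frozen, is not the $f^{[m]}$ of (\ref{A.0.3}):
\begin{itemize}
\item it already differs at $m=0$ whenever $\theta\neq\theta^{\text{o}}$;
\item at $\eta^{\text{o}}$ with $m=1$ it gives the ordinary parametric score with $f_p$ held fixed. That score in general fails the orthogonality $E[f^{[1]}(X,c;\eta^{\text{o}})\mid Y_c,C=c]=0$ in (\ref{A.3.0}), on which Theorem \ref{Thm3.2} and the efficiency bound rest.
\end{itemize}

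The missing idea is the paper's conditioning step. Replace $M^m_{q,r}(X,x)$ by $N^m_{q,r}(Y,x)=E[M^m_{q,r}(X,x)\mid Y]$, and write the expectation as $\int h^{-(q+1)}K^{(q)}\big((y\pm y_c)/h\big)N^m_{q,r}(y,x)g(y;\theta)\,dy$. Integrating by parts $q$ times then gives the derivatives $\partial^q_y\{N^m_{q,r}(\cdot,x)g(\cdot;\theta)\}$ at $y_c$ that define $f^{[m]}$, plus $O(h^2)$. This requires smoothness in $y$ of $N^m_{q,r}(\cdot,x)g(\cdot;\theta)$, not merely of $g$. Equivalently, you can differentiate $\int K_h(v,y_c)g(v;\theta)\,dv$ through $g(\cdot;\theta)$ as well as through $y_c$. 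Your grid argument then goes through, provided the oscillation bound also tracks the $\eta$-dependence of $Y_i$ and its derivatives.

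\textbf{The upgrade to $o_p$.} Your upgrade from $O_p$ to $o_p$ does not work. With $n^{C}$ grid points the union bound is $n^{C}\exp(-c\epsilon^{2}\log n)$, which vanishes only if $c\epsilon^{2}>C$. So $\epsilon$ is bounded below, and a finer grid only raises the threshold.

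No argument can do better in general. Take $m=0$, fixed $\theta$, and $h\asymp n^{-1/5}$. Then the interior bias is $o\big(\sqrt{\log n/(nh)}\big)$. The supremum over $x$ of the stochastic part is the sup-deviation of a one-dimensional kernel density estimator on an interval where $g$ is bounded below. By the law of the logarithm for kernel estimators, this is of exact order $\sqrt{\log(1/h)/(nh)}\asymp\sqrt{\log n/(nh)}$. So $O_p$ is the attainable conclusion. It is also all that Pollard's Theorem II.37 yields in the paper's proof, namely $o\big(\omega_n\sqrt{\log n/(nh^{1+2m})}\big)$ for arbitrary $\omega_n\to\infty$.

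\textbf{The boundary.} Reflection does not make the $O(h)$ boundary term cancel. The reflected estimator has mean $\int h^{-1}K\big((v-y)/h\big)g(|v|)\,dv$, whose bias at $y=0$ is $h\,g'(0^{+})\int|u|K(u)\,du+O(h^2)$. The condition $\int K^{(2)}=0$ holds automatically for a compactly supported $C^2$ kernel, so it contributes nothing here. A uniform $O(h^2)$ bias near $y_c=0$ therefore needs an extra condition such as $g'(0^{+})=0$. The paper's Taylor step asserts the $O(h^2)$ remainder without addressing the boundary either.
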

\begin{proof}
Under assumption {A5}, Theorem 22 in \cite{nolan1987u} implies that the classes
\begin{align*}
    \mathcal{F}_m=\big\{\partial_{\eta}^m \big(\pi_c w(y_c) K_h(Y,y_c)\big):x\in \mathcal{X},\eta \in \mathcal{H}, c \in \mathcal{C} \big\},m=0,1,2,
\end{align*}
are Euclidean. The $m$th-order partial derivative with respect to $\eta$ can be computed by the product rule, yielding
\begin{align}
 &\partial_{\eta}^m \big(\pi_c w(y_c) K_h(Y,y_c)\big) =\sum_{s=0}^m  \binom{m}{s}\big(\partial_{\eta}^s (\pi_c w(y_c))\big) \partial_{\eta}^{m-s} K_h(Y,y_c). \nonumber 
\end{align}
To evaluate the derivatives of the kernel component, observe that for each $m = 0, 1, 2$, the $m$th-order derivative of $K_h(Y, y_c)$ with respect to $\eta$ admits the decomposition
\begin{align*}
    \partial_{\eta}^{m} K_{h}(Y,y_{c})=\sum_{q=0}^{m}\sum_{r=0}^1 \frac{1}{h^{q+1}}K^{(q)}\bigg(\frac{Y+(-1)^r y_c}{h}\bigg)M_{q,r}^{m}(X,x),
\end{align*}
where the functions $M_{q,r}^{m}(X,x)$, $m=0,1,2$, $q=0,\dots,m$, $r=0,1$, encapsulate the dependence on $X$ and $x$, and are given by
\begin{align*}
    M_{q,r}^{m}(X,x)= \left\{
\begin{array}{ll}
\big(\partial_{\eta}\big(Y+(-1)^r y_c\big)\big)^{\otimes m}
& q=m, \\\\
\partial_{\eta}^2\big(Y+(-1)^r y_c\big)
& (q,m)=(1,2), \\\\
0
& \text{otherwise.}
\end{array}\right.
\end{align*}
For each $m=0,1,2$, the second moments of the vectorized derivatives are uniformly bounded in the following sense: 
\begin{align}
&\sup_{\{x,c,\eta\}}\Big\|\E \Big[ \Big(\vvec \Big(\partial_{\eta}^m \big(\pi_{c} w(y_c) K_h(Y,y_c)\big)\Big)\Big)^{\otimes 2} \Big] \Big\|=O\bigg(\frac{1}{h^{2m+1}}\bigg), m=0,1,2, \label{A.0.1}
\end{align} 
where $\vvec(\cdot)$ denotes the vectorization operator. 

Define
\begin{align*}
    \bar{f}_h (x,c;\eta):=\frac{1}{n} \sum_{i=1}^n \pi_{c}w(y_{c})K_{h}(Y_{i},y_{c})\text{ and } N_{q,r}^{m}(y,x):=\E\big[M_{q,r}^{m}(X,x)|Y=y\big]. \end{align*}
By Theorem II.37 in \cite{pollard1984convergence}, it follows that
\begin{align} 
    &\sup_{\{x,c,\eta\}} \Big\Vert \partial_{\eta}^m \bar{f}_h (x,c;\eta)-\E\Big[\partial_{\eta}^m \big(\pi_{c} w(y_c) K_h(Y,y_c)\big)\Big] \Big\Vert =o_p\Bigg(\sqrt{\frac{\log n}{nh^{2m+1}}} \Bigg),m=0,1,2. \label{A.0.2}
\end{align}
Applying Taylor's theorem, we obtain
\begin{align}
    &\E\Big[ \partial_{\eta}^m  \big(\pi_c w(y_{c})K_{h}(Y,y_{c})\big)\Big]  \nonumber \\
    &=\E \Bigg[\sum_{s=0}^m  \binom{m}{s}\big(\partial_{\eta}^s \pi_c w(y_c)\big) \sum_{q=0}^{m-s}\sum_{r=0}^1 \frac{1}{h^{q+1}}K^{(q)}\bigg(\frac{Y+(-1)^r y_c}{h}\bigg)M_{q,r}^{m-s}(X,x)\Bigg] \nonumber \\
    &=\E\Bigg[ \sum_{s=0}^m  \binom{m}{s}\big(\partial_{\eta}^s \pi_c w(y_c)\big) \sum_{q=0}^{m-s}\sum_{r=0}^1 \frac{1}{h^{q+1}}K^{(q)}\bigg(\frac{Y+(-1)^r y_c}{h}\bigg) \E\Big[M_{q,r}^{m-s}(X,x)\Big|Y\Big]\Bigg] \nonumber \\
    &=\E \Bigg[\sum_{s=0}^m  \binom{m}{s}\big(\partial_{\eta}^s \pi_c w(y_c)\big) \sum_{q=0}^{m-s}\sum_{r=0}^1 \frac{1}{h^{q+1}}K^{(q)}\bigg(\frac{Y+(-1)^r y_c}{h}\bigg)N_{q,r}^{m-s}(Y,x)\Bigg] \nonumber \\
    &=\sum_{s=0}^m \binom{m}{s}\big(\partial_{\eta}^s \pi_c w(y_c)\big) \sum_{q=0}^{m-s}\sum_{r=0}^1 \frac{1}{h^{q+1}}\int K^{(q)}\bigg(\frac{y+(-1)^r y_c}{h}\bigg)N_{q,r}^{m-s}(y,x)g(y;\theta) dy\nonumber \\
    &=\sum_{s=0}^m \binom{m}{s}\big(\partial_{\eta}^s \pi_c w(y_c)\big) \sum_{q=0}^{m-s} \sum_{r=0}^1 \partial_{y}^q \big(N_{q,r}^{m-s}(y_c,x)g(y_c;\theta)\big)+r_m(x,c;\eta)  \nonumber \\
    &\stackrel{\triangle}{=} f^{[m]}(x,c;\eta)+r_m(x,c,\eta), \label{A.0.3}
\end{align}
where the remainder term satisfies 
\begin{align*}
\sup_{\{x,c,\eta\}}\Vert r_m(x,c;\eta)\Vert =O(h^2), m=0,1,2.
\end{align*}
Substituting (\ref{A.0.3}) into (\ref{A.0.2}), we conclude that for each $m=0,1,2$,
 \begin{eqnarray}
    \sup_{\{x,c,\eta\}}\Big\Vert\partial_\eta^m\bar{f}_h(x,c;\eta)-f^{[m]}(x,c;\eta)\Big\Vert
 =\sup_{\{x,c,\eta\}}\bigg\Vert \frac{1}{n}\sum_{i=1}^n \xi_{i,m}(x,c;\eta)  \bigg\Vert=O(h^{2})+o_p\Bigg(\sqrt{\frac{\log n}{nh^{1+2m}}}\Bigg). \label{A.0.4}
\end{eqnarray}
  
Since $P\big(\widehat{\mathcal{G}}=\mathcal{G}\big) \longrightarrow 1$ as $n \longrightarrow \infty$, it follows that
\begin{align}
   \sup_{\{i,c\}} \big| I\big(i\in \widehat{\mathcal{G}}_c\big)-I\big(i\in \mathcal{G}_c\big) \big|=o_{p}\Big(\frac{1}{n}\Big). \label{A.0.5}
\end{align}
Under assumptions {A5}--{A7}, the partial derivatives of order $m=0,1,2$ of $K_{h}(Y_{c_{1}},y_{c})$ and $w(y_{c})$ with respect to $\eta$ are uniformly bounded over $c_1,c \in \mathcal{C}$. Combining this uniform boundedness with the convergence result in (\ref{A.0.5}), we obtain
\begin{align}
    &\sup_{\{x,c,\eta\}}\big\Vert\partial_\eta^m\hat{f}_h(x,c;\eta)-\partial_\eta^m \bar{f}_h(x,c;\eta) \big\Vert \nonumber \\
    &\leq \sup_{\{i,c\}}  \big| I\big(i\in \widehat{\mathcal{G}}_c\big)-I\big(i\in \mathcal{G}_c\big) \big| \sup_{\{x,c,\eta\}} \Bigg\Vert\frac{1}{n} \sum_{i=1}^n \sum_{c_1=1}^k \partial_{\eta}^m \Big(\pi_c w(y_{c})K_{h}(Y_{ic_1},y_{c}) \Big)\Bigg\Vert=o_p\Big(\frac{1}{n}\Big) \label{A.0.6}
\end{align}
Lemma \ref{lem1} follows directly from 
(\ref{A.0.4}) and (\ref{A.0.6}).

\end{proof}

\subsection{Proof of Theorem \ref{Thm3.1}}

\label{Thm3.1pf}

The oracle property of the separation penalty estimator $(\hat{\beta},\hat{\mu})$ holds if, with probability converging to one, the strict inequality:
\begin{align}
    \textsc{SS}_{\text{sp}}(\beta,\mu;\lambda) > \textsc{SS}_{\text{sp}}\big(\hat{\beta}^{\text{or}},\hat{\mu}^{\text{or}};\lambda\big) \label{A.1.1}
\end{align}
holds for all $(\beta, \mu) \in \mathcal{B} \times \mathcal{U}$ such that $\beta \neq \hat{\beta}^{\text{or}}$ or $\mu \neq \hat{\mu}^{\text{or}}$. It suffices to show that, with probability converging to one,
\begin{align*}
 \textsc{SS}_{\text{sp}}(\beta^*,\mu^*;\lambda) > \textsc{SS}_{\text{sp}}\big(\hat{\beta}^{\text{or}},\hat{\mu}^{\text{or}};\lambda\big)  \text{ and }
   \textsc{SS}_{\text{sp}}(\beta,\mu;\lambda) \geq \textsc{SS}_{\text{sp}}(\beta^*,\mu^*;\lambda), 
\end{align*}
 where 
\begin{align*}
    \mu^*_{c}=\frac{1}{|\mathcal{G}_c|}\sum_{\{i \in \mathcal{G}_{c}\}} \beta_i\text{ and }\beta_i^*=\sum_{c=1}^k \mu^*_{c} I\big(i \in \mathcal{G}_{c}\big).
\end{align*}

By definition of the pair $(\beta^*,\mu^*)$, the penalty term $\lambda \sum_{i=1}^n \min_{c} \| W^{1/2} (\beta^*_i - \mu^*_c) \|_1$ is equal to zero. Thus, the objective function simplifies to
\begin{align}
&\textsc{SS}_{\text{sp}}\big(\beta^*,\mu^*; \lambda\big) =
\frac{1}{2} \sum_{i=1}^n \big(X_{i}-\beta_{i}^*\big)^{\top}W(X_{i}-\beta_{i}^*) \nonumber \\
&=\frac{1}{2} \sum_{i=1}^n \sum^{k}_{c=1}I(C_{i}=c)(X_{i}-\mu_{c}^*)^{\top}W(X_{i}-\mu_{c}^*)\stackrel{\triangle}{=}\textsc{SS}_{w}(\mu^*).\label{A.1.4}
\end{align}
Since $\hat{\mu}^{\text{or}}$ is the unique minimizer of $\textsc{SS}_{w}(\mu^*)$, it follows that
\begin{align}
\textsc{SS}_{\text{sp}}\big(\beta^*,\mu^*; \lambda\big) > \textsc{SS}_{\text{sp}}\big(\hat{\beta}^{\text{or}},\hat{\mu}^{\text{or}}; \lambda\big)\text{ for } \beta^* \neq \hat{\beta}^{\text{or}}. \label{A.1.5}
\end{align} 
We expand the first term in $\textsc{SS}_{\text{sp}}(\beta,\mu;\lambda)$ as a first-order Taylor series around $\beta = \beta^*$, yielding the following expression:
\begin{align}
    \textsc{SS}_{\text{sp}}(\beta,\mu;\lambda) - \textsc{SS}_{\text{sp}}(\beta^*,\mu^*;\lambda) &= - \sum_{i=1}^n\big(X_i - \bar{\beta}_i\big)^{\top} W \big(\beta_i - \beta_i^*\big) + \lambda \sum_{i=1}^n  \min_{c} \big\| W^{\frac{1}{2}} (\beta_i - \mu_c)\big\|_1 \nonumber \\ 
    &\overset{\Delta}{=} I_1 + I_2, \label{A.1.6}
\end{align}
where $\bar{\beta}_i = \alpha \beta_i + (1 - \alpha) \beta_i^*$ for some $\alpha \in (0,1)$, $i = 1, \dots, n$.  

A direct calculation leads to the following expression for $I_{1}$:
\begin{align}
   & I_1=\sum_{c=1}^k \sum_{\{i \in \mathcal{G}_{c}\}} \big(-X_i+\bar{\beta_i}\big)^{\top} W \sum_{\{j \in \mathcal{G}_{c}\}} \frac{\beta_i-\beta_j}{| \mathcal{G}_{c} |}\nonumber  \\
    &=-\sum_{c=1}^k\sum_{\{i,j \in \mathcal{G}_{c}, i<j\}} \frac{\big(-X_i+\bar{\beta_i}-\big(-X_j+\bar{\beta_j}\big)\big)^{\top} W (\beta_i-\beta_j)}{| \mathcal{G}_{c} |}\label{A.1.7}.
\end{align}
By the inequality $\Vert U \Vert_{1} \geq \Vert U \Vert$, Boole's inequality, and assumption {A1}, we deduce
\begin{align}
    &P\bigg(\Vert U \Vert >p \sqrt{\frac{\log n}{d_0}}\bigg) \leq P\bigg(\Vert U \Vert_1 >p \sqrt{\frac{\log n}{d_0}}\bigg) \leq P\bigg(\bigcup_{j=1}^p \bigg(|U_j|>\sqrt{\frac{\log n}{d_0}}\bigg)\bigg)\nonumber\\
   & \leq \sum_{j=1}^p P\bigg(|U_j|>\sqrt{\frac{\log n}{d_0}}\bigg)<\frac{2p}{n}.\label{A.1.8}
\end{align}
Using this result, along with the triangle inequality, we establish
\begin{align}
    &\big\Vert \big(-X_i+\bar{\beta_i}-(-X_j+\bar{\beta_j})\big)^{\top} W^{\frac{1}{2}}\big\Vert  \leq  \big(\Vert X_i-\beta_i^{\text{o}} \Vert +\Vert X_j-\beta_j^{\text{o}} \Vert + \Vert \bar{\beta_i} -\beta_i^{\text{o}}  \Vert +\Vert \bar{\beta_j} -\beta_j^{\text{o}}  \Vert \big) \big\Vert W^{\frac{1}{2}} \big\Vert \nonumber\\
    &\leq  \Big(\Vert X_i-\beta_i^{\text{o}} \Vert +\Vert X_j-\beta_j^{\text{o}} \Vert + \frac{D_{\mu}}{2}\Big) \big\Vert W^\frac{1}{2}\big\Vert < 2 p \big\Vert W^\frac{1}{2}\big\Vert \sqrt{\frac{\log n}{d_0}}(1+o_p(1)) \label{A.1.9}.
\end{align}
Substituting this bound into (\ref{A.1.7}), we obtain
\begin{align}
    I_1 \geq -{2 p} \big\Vert W^\frac{1}{2}\big\Vert \sqrt{\frac{\log n}{d_0}}\bigg(\sum_{c=1}^k \frac{1}{|\mathcal{G}_c|} \sum_{\{i,j \in \mathcal{G}_{c}, i<j\}}  \big\Vert W^{\frac{1}{2}}(\beta_i-\beta_j) \big\Vert \bigg) (1+o_p(1)). \label{A.1.10}
\end{align}
For $i \in \mathcal{G}_{c_1}$ with $c_1 \neq c_2$, assumption {A2} and the triangle inequality give
\begin{align}
      & \big\Vert W^{\frac{1}{2}}(\beta_{i} - \mu_{c_2}) \big\Vert_1  \geq  \big\Vert W^{\frac{1}{2}}(\mu_{c_1}^{\text{o}} - \mu_{c_2}^{\text{o}}) \big\Vert_1 - \big\Vert W^{\frac{1}{2}}(\mu_{c_2} - \mu_{c_2}^{\text{o}}) \big\Vert_1 -\big\Vert W^{\frac{1}{2}}(\beta_{i} - \beta_{i}^{\text{o}}) \big\Vert_1 > \frac{D_{\mu}}{2}  \nonumber 
  \end{align}
  and
  \begin{align}
      & \big\Vert W^{\frac{1}{2}}(\beta_{i} - \mu_{c_1}) \big\Vert_1 \leq \big\Vert W^{\frac{1}{2}}(\mu_{c_1} - \mu_{c_1}^{\text{o}})\big\Vert_1
       +  \big\Vert W^{\frac{1}{2}}(\beta_{i} - \beta_{i}^{\text{o}}) \big\Vert_1 < \frac{D_\mu}{2}.\label{A.1.11}
\end{align}
These bounds allow us to derive the following expression for $I_{2}$:
\begin{eqnarray}
    I_2=\lambda \sum_{c=1}^k \sum_{\{i \in \mathcal{G}_{c}\}} \big\Vert W^{\frac{1}{2}}(\beta_i-\mu_{c}) \big\Vert_1
    =\lambda \sum_{c=1}^k\frac{1}{2| \mathcal{G}_{c} |} \sum_{\{i,j \in \mathcal{G}_{c}\}} \Big( \big\Vert W^{\frac{1}{2}}(\beta_i-\mu_{c}) \big\Vert_1 +\big\Vert W^{\frac{1}{2}}(\beta_j-\mu_{c}) \big\Vert_1 \Big).~~~\label{A.1.12} 
\end{eqnarray}
Rearranging the terms above yields the inequality
\begin{align}
 I_2 \geq  \lambda \sum_{c=1}^k\frac{1}{2 | \mathcal{G}_{c} |} \sum_{\{i,j \in \mathcal{G}_{c}\}} \big\Vert W^{\frac{1}{2}}(\beta_i-\beta_j) \big\Vert_1  > {\lambda} \sum_{c=1}^k \frac{1}{|\mathcal{G}_c|}\sum_{\{i,j \in \mathcal{G}_{c}, i<j\}} \big\Vert W^{\frac{1}{2}}(\beta_i-\beta_j) \big\Vert.
    \label{A.1.13} 
\end{align}
Substituting (\ref{A.1.10}) and (\ref{A.1.13}) into (\ref{A.1.6}) and invoking assumption {A3}, we conclude that, with probability converging to one,
\begin{align}
    &\textsc{SS}_{\text{sp}}(\beta,\mu;\lambda) - \textsc{SS}_{\text{sp}}(\beta^*,\mu^*;\lambda) \nonumber\\
    &> \bigg({\lambda} - {2 p} \big\Vert W^\frac{1}{2}\big\Vert \sqrt{\frac{\log n}{d_0}} \bigg)\Bigg(\sum_{c=1}^k \frac{1}{|\mathcal{G}_c|} \sum_{\{i,j \in \mathcal{G}_{c}, i<j\}} \big\Vert W^{\frac{1}{2}}(\beta_i-\beta_j) \big\Vert \Bigg) (1+o_p(1))\geq 0. \label{A.1.14}
\end{align}
The assertion in Theorem \ref{Thm3.1} follows directly from (\ref{A.1.5}) and (\ref{A.1.14}).

\subsection{Proof of Theorems \ref{Thm3.2} and \ref{Thm3.3}}
\label{Thm3.2-3.pf}

Define
\begin{align*}
    &  pS_s(\eta)=\partial_{\eta}p\ell_{s}(\eta), pI_{s}(\eta)=\partial^{2}_{\eta}p\ell_{s}(\eta), s=1,2, f^{[m]}(x;\eta)=\sum_{c=1}^k f^{[m]}(x,c;\eta), m=0,1,2,\\
   & \ell_1(\eta)=\sum^{n}_{i=1}\sum^{k}_{c=1}I(i\in\mathcal{G}_{c})\log \big(f^{[0]}(X_{i},c;\eta)\big),\text{ and }\ell_2(\eta)=\sum^{n}_{i=1}\log \big(f^{[0]}(X_{i};\eta)\big). 
   \end{align*}
Applying the triangle inequality, we obtain
\begin{align}
  &  \sup_\eta \bigg|\frac{1}{n}p{\ell}_{1}(\eta)-\frac{1}{n}\ell_1(\eta)\bigg|\leq \sup_{\{i,c\}} \Big| I(i\in \widehat{\mathcal{G}}_c)-I(i\in \mathcal{G}_c) \Big| \bigg(\frac{1}{n} \sum_{i=1}^n \sum_{c=1}^k \sup_{\eta}  \Big| \log(\hat{f}_h(X_{i},c;\eta)) \Big| \bigg) \nonumber \\
    &\hspace{1.65in}+ \sup_{\{x,c,\eta\}}\bigg|\log\bigg(1+\frac{\hat{f}_h(x,c;\eta)-f^{[0]}(x,c;\eta)}{f^{[0]}(x,c;\eta)}\bigg)\bigg|
    \label{A.2.1}
    \end{align}
  and
  \begin{align}
   & \sup_\eta \bigg|\frac{1}{n}p{\ell}_{2}(\eta)-\frac{1}{n}\ell_2(\eta)\bigg|\leq  \sup_{\{x,\eta\}}\bigg|\log\bigg(1+\frac{\hat{f}_h(x;\eta)-f^{[0]}(x;\eta)}{f^{[0]}(x;\eta)}\bigg)\bigg|. \label{A.3.1}
\end{align}
By (\ref{A.0.5}), the uniform consistency of $\hat{f}_h(x,c;\eta)$ to $f^{[0]}(x,c;\eta)$ in Lemma \ref{lem1}, and $\log(1+x)=O(x)$ for $x=o(1)$, it follows that under assumptions {A4}--{A7},
\begin{align}
    &\sup_{\{i,c\}} \Big| I(i\in \widehat{\mathcal{G}}_c)-I(i\in \mathcal{G}_c) \Big| \bigg( \frac{1}{n} \sum_{i=1}^n \sum_{c=1}^k \sup_{\eta} \Big| \log(\hat{f}_h(X_{i},c;\eta)) \Big|\bigg) =o_p\bigg(\frac{1}{n}\bigg), \label{A.2.2} \\
    & \sup_{\{x,c,\eta\}}\bigg|\log\bigg(1+\frac{\hat{f}_h(x,c;\eta)-f^{[0]}(x,c;\eta)}{f^{[0]}(x,c;\eta)}\bigg)\bigg|   
    =O\bigg(\frac{\sup_{\{x,c,\eta\}}\big|\hat{f}_h(x,c;\eta)-f^{[0]}(x,c;\eta)\big|}{\inf_{\{x,c,\eta\}} f^{[0]}(x,c;\eta)}\bigg) \nonumber \\
    &=O(h^{2})+o_p\Bigg(\sqrt{\frac{\log n}{nh}}\Bigg),\label{A.2.3}
    \end{align}
and
\begin{align}
    & \sup_{\{x,\eta\}}\bigg|\log\bigg(1+\frac{\hat{f}_h(x;\eta)-f^{[0]}(x;\eta)}{f^{[0]}(x;\eta)}\bigg)\bigg|   
    =O\bigg(\frac{\sup_{\{x,\eta\}}\big|\hat{f}_h(x;\eta)-f^{[0]}(x,\eta)\big|}{\inf_{\{x,\eta\}} f^{[0]}(x;\eta)}\bigg) \nonumber \\
    &=O(h^{2})+o_p\Bigg(\sqrt{\frac{\log n}{nh}}\Bigg).\label{A.3.2}
\end{align}     
Substituting (\ref{A.2.2}) and (\ref{A.2.3}) into (\ref{A.2.1}) and substituting (\ref{A.3.2}) into (\ref{A.3.1}), we deduce that
\begin{align}
    \sup_\eta \bigg|\frac{1}{n}p{\ell}_{s}(\eta)-\frac{1}{n}\ell_s(\eta)\bigg|=o_p(1), s=1,2. \label{A.2.4}
\end{align}
Assumptions {A5} and {A6} imply that $f^{[0]}(x,c;\eta)$ has bounded variation in $x$, uniformly over $c\in\mathcal{C}$ and $\eta\in\mathcal{H}$.
By Theorem 22 in \cite{nolan1987u}, it follows that the classes
\begin{align*}
    \big\{f^{[0]}(x,c;\eta):c \in \mathcal{C}, \eta \in \mathcal{H} \big\}\text{ and }  \big\{f^{[0]}(x;\eta): \eta \in \mathcal{H} \big\}
\end{align*} 
are Euclidean. Applying Corollary 4 in \cite{sherman1994maximal}, we further obtain
\begin{align}
    \sup_{\eta} \bigg|   \frac{1}{n} \ell_1(\eta)-\E\big[\log( f^{[0]}(X,C;\eta))\big]  \bigg|=O_p\Big(\frac{1}{\sqrt{n}}\Big)\nonumber
    \end{align}
    and
    \begin{align}
     \sup_{\eta} \bigg|   \frac{1}{n} \ell_2(\eta)-\E\big[\log( f^{[0]}(X;\eta))\big]  \bigg|=O_p\Big(\frac{1}{\sqrt{n}}\Big).     \label{A.2.5}
\end{align}
Thus, combining (\ref{A.2.4}) and (\ref{A.2.5}) yields
\begin{eqnarray}
    \sup_{\eta} \bigg|   \frac{1}{n} p\ell_1(\eta)-\E\big[\log( f^{[0]}(X,C;\eta))\big]  \bigg|=o_p(1)\text{ and }  \sup_{\eta} \bigg|   \frac{1}{n} p\ell_2(\eta)-\E\big[\log ( f^{[0]}(X;\eta))\big]  \bigg|=o_p(1). ~~~\label{A.2.6}
\end{eqnarray}
Using Jensen's inequality, we obtain
\begin{align}
  \E\big[\log(f^{[0]}(X,C;\eta))\big]-\E\big[\log( f^{[0]}(X,C;\eta^{\text{o}}))\big]\leq \log\E\Bigg[\frac{f^{[0]}(X,C;\eta)}{f^{[0]}(X,C;\eta^{\text{o}})}\Bigg]=\log 1=0\nonumber
  \end{align}
  and
  \begin{align}
  \E\big[\log(f^{[0]}(X;\eta))\big]-\E\big[\log( f^{[0]}(X;\eta^{\text{o}}))\big]\leq \log\E\Bigg[\frac{f^{[0]}(X;\eta)}{f^{[0]}(X;\eta^{\text{o}})}\Bigg]=\log 1=0~ \forall \eta\in\mathcal{H}. \label{A.2.7}
  \end{align}
Moreover, the inequalities in (\ref{A.2.7}) are strict whenever $\eta \neq \eta^{\text{o}}$, thereby establishing that $\eta^{\text{o}}$ is the unique maximizer of $\E\big[\log( f^{[0]}(X,C;\eta))\big]$ and $\E\big[\log (f^{[0]}(X;\eta))\big]$.
By the uniform convergence results in (\ref{A.2.6}), it follows that, with probability converging to one,
\begin{align*}
     \frac{1}{n}p\ell_1(\eta^{\text{o}})\geq \frac{1}{n} p\ell_1(\tilde{\eta})\text{ and }   \frac{1}{n}p\ell_2(\eta^{\text{o}})\geq \frac{1}{n} p\ell_2(\check{\eta}).
     \end{align*}
Since $\tilde{\eta}$ and $\check{\eta}$ are the maximizers of $p\ell_1(\eta)$ and $p\ell_2(\eta)$, respectively, the reverse inequalities also hold:
\begin{align*}
     \frac{1}{n}p\ell_1(\tilde{\eta}) \geq \frac{1}{n} p\ell_1(\eta^{\text{o}})\text{ and } \frac{1}{n}p\ell_2(\check{\eta}) \geq \frac{1}{n} p\ell_2(\eta^{\text{o}}).
   \end{align*}
Together, these inequalities imply that
\begin{align*}
 \frac{1}{n}p\ell_1(\tilde{\eta}) - \frac{1}{n} p\ell_1(\eta^{\text{o}})\stackrel{p}{\longrightarrow}0 \text{ and }  \frac{1}{n}p\ell_2(\check{\eta}) - \frac{1}{n} p\ell_2(\eta^{\text{o}})\stackrel{p}{\longrightarrow}0.
 \end{align*}
Owing to the continuity of $p\ell_1(\eta)$ and $p\ell_2(\eta)$ over $ \mathcal{H}$, $\tilde{\eta}$ and $\check{\eta}$ are consistent estimators of $\eta^{\text{o}}$.

Let $v \in \mathbb{R}^d$ be an arbitrary constant vector. A
first-order Taylor expansion of $v^{\top}pS_1(\tilde{\eta})/\sqrt{n}$ around $\eta^{\text{o}}$ yields
\begin{align}
\frac{1}{\sqrt{n}}v^{\top}pS_1(\tilde{\eta})=\frac{1}{\sqrt{n}}v^{\top}pS_1(\eta^{\text{o}})+\frac{1}{\sqrt{n}}v^{\top}pI_1(\Bar{\eta})(\tilde{\eta}-\eta^{\text{o}}), \label{A.2.8}
\end{align}
where $\Bar{\eta}$ lies on the line segment between $\tilde{\eta}$ and $\eta^{\text{o}}$.
Similarly, a first-order expansion of $v^{\top}pS_2(\check{\eta})/\sqrt{n}$ around $\eta^{\text{o}}$ gives
\begin{align}
\frac{1}{\sqrt{n}}v^{\top}pS_2(\check{\eta})=\frac{1}{\sqrt{n}}v^{\top}pS_2(\eta^{\text{o}})+\frac{1}{\sqrt{n}}v^{\top}pI_2(\Bar{\eta}^{*})(\check{\eta}-\eta^{\text{o}}),\label{A.3.7}
\end{align}
where $\Bar{\eta}^{*}$ lies on the line segment between $\check{\eta}$ and $\eta^{\text{o}}$. 
Rewriting the normalized pseudo-score vectors in (\ref{A.2.8}) and (\ref{A.3.7}), we obtain the decomposition
\begin{align}
    \frac{1}{\sqrt{n}}pS_{1}(\eta^{\text{o}})=&\frac{1}{\sqrt{n}}\sum^{n}_{i=1}\sum^{k}_{c=1}I(i\in\mathcal{G}_{c})\frac{f^{[1]}(X_{i},c;\eta^{\text{o}})}{f^{[0]}(X_{i},c;\eta^{\text{o}})}+     \frac{1}{\sqrt{n}} \sum_{i=1}^n \sum_{c=1}^k (I(i\in \widehat{\mathcal{G}}_c)-I(i\in \mathcal{G}_c))\frac{\partial_{\eta}\hat{f}_h(X_{i},c;\eta^{\text{o}})}{\hat{f}_h(X_{i},c;\eta^{\text{o}})}  \nonumber \\
    &+\frac{1}{\sqrt{n}}\sum_{i=1}^n \sum_{m=0}^{1} \frac{(-1)^{1-m}f^{[1-m]}(X_i,C_i;\eta^{\text{o}})\Big(\partial_{\eta}^{m}\hat{f}_h(X_i,C_i;\eta^{\text{o}})-f^{[m]}(X_i,C_i;\eta^{\text{o}})\Big)} {f^{[0]}(X_i,C_i;\eta^{\text{o}})\Big(f^{[0]}(X_i,C_i;\eta^{\text{o}})+\hat{f}_h(X_i,C_i;\eta^{\text{o}})-f^{[0]}(X_i,C_i;\eta^{\text{o}})\Big)} \nonumber \\
      \stackrel{\triangle}{=}& \frac{1}{\sqrt{n}}S(\eta^{\text{o}})+I\!I_1+I\!I_2 \label{A.2.9}
\end{align}
and
\begin{align}
    \frac{1}{\sqrt{n}}pS_{2}(\eta^{\text{o}})=&\frac{1}{\sqrt{n}}\sum_{i=1}^n \sum_{m=0}^{1} \frac{(-1)^{1-m}f^{[1-m]}(X_i;\eta^{\text{o}})\Big(\partial_{\eta}^{m}\hat{f}_h(X_i;\eta^{\text{o}})-f^{[m]}(X_i;\eta^{\text{o}})\Big)} {f^{[0]}(X_i;\eta^{\text{o}})\Big(f^{[0]}(X_i;\eta^{\text{o}})+\hat{f}_h(X_i;\eta^{\text{o}})-f^{[0]}(X_i;\eta^{\text{o}})\Big)} \nonumber \\  
    &+\frac{1}{\sqrt{n}}\sum^{n}_{i=1}\frac{f^{[1]}(X_{i};\eta^{\text{o}})}{f^{[0]}(X_{i};\eta^{\text{o}})} 
      \stackrel{\triangle}{=}I\!I_{3}+ \frac{1}{\sqrt{n}}S_2(\eta^{\text{o}}). \label{A.3.8}
\end{align}
By (\ref{A.0.5}), the uniform consistency of $\partial_{\eta}^m\hat{f}_h(x,c;\eta^{\text{o}})$ to $f^{[m]}(x,c;\eta^{\text{o}})$ in Lemma \ref{lem1} for $m=0,1$, and under assumptions {A4} -- {A7}, it follows that
\begin{align}
    &|I\!I_1| 
    \leq k \sup_{\{i,c\}} \big|\sqrt{n}\big( I\big(i\in \widehat{\mathcal{G}}_c\big)-I\big(i\in \mathcal{G}_c\big)\big) \big| \sup_{\{x,c\}}  \Bigg|   \frac{\partial_{\eta}\hat{f}_h(x,c;\eta^{\text{o}})}{\hat{f}_h(x,c;\eta^{\text{o}})}\Bigg| \nonumber \\
    &\leq k \sup_{\{i,c\}} \big|\sqrt{n}\big( I\big(i\in \widehat{\mathcal{G}}_c\big)-I\big(i\in \mathcal{G}_c\big)\big) \big|  \frac{\sup_{\{x,c\}}\big|\partial_{\eta}\hat{f}_h(x,c;\eta^{\text{o}})-f^{[1]}(x,c;\eta^{\text{o}})\big|+\sup_{\{x,c\}}\big|f^{[1]}(x,c;\eta^{\text{o}})\big|}{\inf_{\{x,c\}}f(x,c;\eta^{\text{o}})-\sup_{\{x,c\}}\big|\hat{f}_h(x,c;\eta^{\text{o}})-f(x,c;\eta^{\text{o}})\big|}\nonumber \\
    &=o_p\bigg(\frac{1}{\sqrt{n}}\bigg),\label{A.2.11}\\
&I\!I_2
=\sqrt{n}\sum^{1}_{m=0}(-1)^{1-m}\bigg(\frac{1}{n^{2}}\sum_{i=1}^n\sum_{j=1}^{n}\frac{f^{[1-m]}(X_i,C_i;\eta^{\text{o}})}{f^{2}(X_i,C_i;\eta^{\text{o}})}\xi_{j,m}(X_i,C_i;\eta^{\text{o}})\bigg) +O_p\bigg(\sqrt{n}h^4+\sqrt{\frac{1}{nh^{4}}}\bigg) \nonumber \\
&= \sqrt{n}\sum^{1}_{m=0}(-1)^{1-m}\bigg(\frac{1}{n^{2}}\sum_{i=1}^n\sum_{j=1}^{n}U_{ij}^{[m]}+\frac{1}{n}\sum_{i=1}^n V_i^{[m]}\bigg)+o_p(1), \label{A.2.12}
\end{align}
and
\begin{align}
&I\!I_{3}
=\sqrt{n}\sum^{1}_{m=0}(-1)^{1-m}\bigg(\frac{1}{n^{2}}\sum_{i=1}^n\sum_{j=1}^{n}\frac{f^{[1-m]}(X_i;\eta^{\text{o}})}{f^{2}(X_i;\eta^{\text{o}})}\xi^*_{j,m}(X_i;\eta^{\text{o}})\bigg) +O_p\bigg(\sqrt{n}h^4+\sqrt{\frac{1}{nh^{4}}}\bigg) \nonumber \\
&= \sqrt{n}\sum^{1}_{m=0}(-1)^{1-m}\bigg(\frac{1}{n^{2}}\sum_{i=1}^n\sum_{j=1}^{n}U_{ij}^{*[m]}+\frac{1}{n}\sum_{i=1}^n V_i^{*[m]}\bigg)+o_p(1), \label{A.3.9}
\end{align}
where
\begin{align*}
   & U_{ij}^{[m]}=\frac{f^{[1-m]}(X_i,C_i;\eta^{\text{o}})}{f^{2}(X_i,C_i;\eta^{\text{o}})}\xi_{j,m}(X_i,C_i;\eta^{\text{o}})-V_i^{[m]},~
    V_i^{[m]}=\E\bigg[\frac{f^{[1-m]}(X_i,C_i;\eta^{\text{o}})}{f^{2}(X_i,C_i;\eta^{\text{o}})}\xi_{j,m}(X_i,C_i;\eta^{\text{o}})\Big|i\bigg],\\
     &\xi^*_{j,m}(x;\eta^{\text{o}})=\sum_{c=1}^k \xi_{j,m}(x,c;\eta^{\text{o}}),~ U_{ij}^{*[m]}=\frac{f^{[1-m]}(X_i;\eta^{\text{o}})}{f^{2}(X_i;\eta^{\text{o}})}\xi^*_{j,m}(X_i;\eta^{\text{o}})-V_i^{*[m]},\text{ and} 
 \end{align*}
 \begin{align*}
        &V_i^{*[m]}=\E\bigg[\frac{f^{[1-m]}(X_i;\eta^{\text{o}})}{f^{2}(X_i;\eta^{\text{o}})}\xi^*_{j,m}(X_i;\eta^{\text{o}})\Big|i\bigg],i,j=1,\dots,n,m=0,1.
\end{align*}
The conditional moment identities 
\begin{align}
    &\E\big[(X-\mu_c)^{\otimes 2}\mid Y_c,C=c \big]=\frac{1}{p}\big[(1+Y_c)^{\frac{p}{2}}-1\big]\Sigma_c ,~\E\bigg[\frac{\xi_{j,1}(X_i,C_i;\eta^{\text{o}})}{f(X_i,C_i;\eta^{\text{o}})}\Big| X_{j},C_{j}\bigg]=0, \nonumber 
    \end{align}
    and
    \begin{align}
    &\E\big[f^{[1]}(X,c;\eta^{\text{o}})\mid Y_c,C=c \big]=0  \label{A.3.0}
\end{align}
imply that $\E\big[U^{[m]}_{ij}\big| j\big]=0$ and $\E\big[U^{*[m]}_{ij}\big| j\big]=0$. 
Thus, $U^{[m]}_{ij}$ and $U^{*[m]}_{ij}$ are degenerate $U$-statistics with variances of order $O(h^{-2m-1})$ for $m=0,1$. By Theorem 8.1 in \cite{hoeffding1948probability}, we obtain
\begin{align}
   \frac{1}{n^2}\sum_{i=1}^n \sum_{j=1}^n U_{ij}^{[m]}=O_p\bigg(\frac{1}{n\sqrt{h^{2m+1}}}\bigg)
   \text{ and } 
    \frac{1}{n^2}\sum_{i=1}^n \sum_{j=1}^n U_{ij}^{*[m]}=O_p\bigg(\frac{1}{n\sqrt{h^{2m+1}}}\bigg),m=0,1.\label{A.2.13}
  \end{align}
Moreover, the central limit theorem implies that
\begin{align}
    \frac{1}{n}\sum_{i=1}^n V_i^{[m]}= O_p\bigg(\frac{h^{2}}{\sqrt{n}}\bigg)
    \text{ and }
     &\frac{1}{n}\sum_{i=1}^n V_i^{*[m]}= O_p\bigg(\frac{h^{2}}{\sqrt{n}}\bigg),m=0,1.\label{A.2.14}
     \end{align}
Combining (\ref{A.2.13}) and (\ref{A.2.14}) yields
\begin{align}
   &I\!I_2=o_p(1)\label{A.2.t.15}
   \end{align}
   and
   \begin{align}
    & I\!I_{3}=o_p(1). \label{A.3.12}
   \end{align}
Substituting (\ref{A.2.t.15}) and (\ref{A.2.11}) into (\ref{A.2.9}), and (\ref{A.3.12}) into (\ref{A.3.8}), and applying the central limit theorem, we conclude that
\begin{align}
& \frac{1}{\sqrt{n}}pS_1(\eta^{\text{o}})=\frac{1}{\sqrt{n}}S(\eta^{\text{o}})+o_p(1) 
  \stackrel{d}{\longrightarrow} N\big(0,V_1^{-1}\big)\label{A.2.15}
  \end{align}
 and
 \begin{align}
 &\frac{1}{\sqrt{n}}pS_2(\eta^{\text{o}})=\frac{1}{\sqrt{n}}S_2(\eta^{\text{o}})+o_p(1) 
  \stackrel{d}{\longrightarrow} N\big(0,V_2^{-1}\big).\label{A.3.13}
\end{align}

The normalized $pI_1(\eta^{\text{o}})$ admits the following decomposition:
\begin{align}
\frac{1}{n}pI_1(\eta^{\text{o}})=&\frac{1}{n}\sum_{i=1}^n \sum_{c=1}^k I\big(i \in \mathcal{G}_{c}\big)\Bigg[\Bigg(\frac{\partial_{\eta}\hat{f}_{h}(X_i,c;\eta^{\text{o}})} {\hat{f}_{h}(X_i,c;\eta^{\text{o}})} \Bigg) ^{\otimes 2}-\frac{\partial_{\eta}^{2} \hat{f}_{h}(X_i,c;\eta^{\text{o}})} {\hat{f}_{h}(X_i,c;\eta^{\text{o}})}\Bigg] \nonumber \\
&+\frac{1}{n}\sum_{i=1}^n \sum_{c=1}^k \big(I\big(i \in \widehat{\mathcal{G}}_{c}\big)- I\big(i \in \mathcal{G}_{c}\big)\big)\Bigg[\Bigg(\frac{\partial_{\eta}\hat{f}_{h}(X_i,c;\eta^{\text{o}})} {\hat{f}_{h}(X_i,c;\eta^{\text{o}})} \Bigg) ^{\otimes 2}-\frac{\partial_{\eta}^{2} \hat{f}_{h}(X_i,c;\eta^{\text{o}})} {\hat{f}_{h}(X_i,c;\eta^{\text{o}})}\Bigg]. \label{A.2.16}
\end{align}\vspace{-0.01in}
By (\ref{A.0.5}), the uniform consistency of $\partial_{\eta}^m\hat{f}_h(x,c;\eta^{\text{o}})$ to $f^{[m]}(x,c;\eta^{\text{o}})$ in Lemma \ref{lem1} for $m=0,1,2$, and under assumptions {A4} and {A7}, we obtain
\begin{align}
  &\frac{1}{n}\sum_{i=1}^n \sum_{c=1}^k I\big(i \in \mathcal{G}_{c}\big)\Bigg[\Bigg(\frac{\partial_{\eta}\hat{f}_{h}(X_i,c;\eta^{\text{o}})} {\hat{f}_{h}(X_i,c;\eta^{\text{o}})} \Bigg) ^{\otimes 2}-\frac{\partial_{\eta}^{2} \hat{f}_{h}(X_i,c;\eta^{\text{o}})} {\hat{f}_{h}(X_i,c;\eta^{\text{o}})}\Bigg] \nonumber \\
    &=\frac{1}{n}\sum_{i=1}^n \sum_{c=1}^k I\big(i \in \mathcal{G}_{c}\big)\Bigg[\Bigg(\frac{f^{[1]}(X_i,c;\eta^{\text{o}})} {f(X_i,c;\eta^{\text{o}})} \Bigg) ^{\otimes 2}-\frac{f^{[2]}(X_i,c;\eta^{\text{o}})} {f(X_i,c;\eta^{\text{o}})}\Bigg]+o_p(1), \label{A.2.17}\\
   &\Bigg| \frac{1}{n}\sum_{i=1}^n \sum_{c=1}^k \big(I\big(i \in \widehat{\mathcal{G}}_{c}\big)- I\big(i \in \mathcal{G}_{c}\big)\big)\Bigg[\Bigg(\frac{\partial_{\eta}\hat{f}_{h}(X_i,c;\eta^{\text{o}})} {\hat{f}_{h}(X_i,c;\eta^{\text{o}})} \Bigg) ^{\otimes 2}-\frac{\partial_{\eta}^{2} \hat{f}_{h}(X_i,c;\eta^{\text{o}})} {\hat{f}_{h}(X_i,c;\eta^{\text{o}})}\Bigg] \Bigg| \nonumber \\
     &\leq \sup_{\{i,c\}} \big|\big( I\big(i\in \widehat{\mathcal{G}}_c\big)-I\big(i\in \mathcal{G}_c\big) \big| \Bigg| \frac{1}{n}\sum_{i=1}^n \sum_{c=1}^k \Bigg[\Bigg(\frac{\partial_{\eta}\hat{f}_{h}(X_i,c;\eta^{\text{o}})} {\hat{f}_{h}(X_i,c;\eta^{\text{o}})} \Bigg) ^{\otimes 2}-\frac{\partial_{\eta}^{2} \hat{f}_{h}(X_i,c;\eta^{\text{o}})} {\hat{f}_{h}(X_i,c;\eta^{\text{o}})}\Bigg] \Bigg| \nonumber \\
     &=o_p\big(n^{-1}\big), \label{A.2.18}
\end{align}
and
\begin{align}
  &\frac{1}{n}\sum_{i=1}^n \Bigg[\Bigg(\frac{\partial_{\eta}\hat{f}_{h}(X_i;\eta^{\text{o}})} {\hat{f}_{h}(X_i;\eta^{\text{o}})} \Bigg) ^{\otimes 2}-\frac{\partial_{\eta}^{2} \hat{f}_{h}(X_i;\eta^{\text{o}})} {\hat{f}_{h}(X_i;\eta^{\text{o}})}\Bigg] =\frac{1}{n}\sum_{i=1}^n \Bigg[\Bigg(\frac{f^{[1]}(X_i;\eta^{\text{o}})} {f(X_i;\eta^{\text{o}})} \Bigg) ^{\otimes 2}-\frac{f^{[2]}(X_i;\eta^{\text{o}})} {f(X_i;\eta^{\text{o}})}\Bigg]+o_p(1) \label{A.3.15}
\end{align}
Substituting (\ref{A.2.17}) and (\ref{A.2.18}) into (\ref{A.2.16}), (\ref{A.3.15}), and invoking the law of large numbers yield
\begin{align}
    &\frac{1}{n}pI_1(\eta^{\text{o}})\stackrel{p}{\longrightarrow}\E\Bigg[I\big(i \in \mathcal{G}_{c}\big)\Bigg(\Bigg(\frac{f^{[1]}(X_i,c;\eta^{\text{o}})} {f(X_i,c;\eta^{\text{o}})} \Bigg) ^{\otimes 2}-\frac{f^{[2]}(X_i,c;\eta^{\text{o}})} {f(X_i,c;\eta^{\text{o}})}\Bigg)\Bigg] \label{A.2.19}
    \end{align}
    and
\begin{align}
      &  \frac{1}{n}pI_2(\eta^{\text{o}})\stackrel{p}{\longrightarrow}\E\Bigg[\Bigg(\Bigg(\frac{f^{[1]}(X_i;\eta^{\text{o}})} {f(X_i;\eta^{\text{o}})} \Bigg) ^{\otimes 2}-\frac{f^{[2]}(X_i;\eta^{\text{o}})} {f(X_i;\eta^{\text{o}})}\Bigg)\Bigg]. \label{A.3.16}
\end{align}
By the uniform convergence of $\partial_{\eta}^{2} \hat{f}_h(x,c;\eta^{\text{o}})$ to $f^{[2]}(x,c;\eta^{\text{o}})$ in Lemma \ref{lem1}, the uniform integrability of $I(i \in \mathcal{G}_{c})\partial_{\eta}^{2} \hat{f}_h(X,c;\eta^{\text{o}})/f(X,c;\eta^{\text{o}})$ and $\partial_{\eta}^{2} \hat{f}_h(X;\eta^{\text{o}})/f(X;\eta^{\text{o}})$, and the fact that 
\begin{align*}
& \E\Bigg[I\big(i \in \mathcal{G}_{c}\big)\frac{\partial_{\eta}^{2} \hat{f}_h(X,c;\eta^{\text{o}})}{f(X,c;\eta^{\text{o}})}\Bigg]=\int \frac{\partial_{\eta}^{2} \hat{f}_h(x,c;\eta^{\text{o}})}{f(x,c;\eta^{\text{o}})} f(x,c;\eta^{\text{o}}) dx
 =\partial_{\eta}^{2} \int \hat{f}_h(x,c;\eta^{\text{o}}) dx=0
 \end{align*}
 and
 \begin{align*}
 &\E\Bigg[\frac{\partial_{\eta}^{2} \hat{f}_h(X;\eta^{\text{o}})}{f(X;\eta^{\text{o}})}\Bigg]=\int \frac{\partial_{\eta}^{2} \hat{f}_h(x;\eta^{\text{o}})}{f(x;\eta^{\text{o}})} f(x;\eta^{\text{o}}) dx
 =\partial_{\eta}^{2} \int \hat{f}_h(x;\eta^{\text{o}}) dx=0,
\end{align*}
it follows that
\begin{align*}
     \E\Bigg[I\big(i \in \mathcal{G}_{c}\big)\frac{f^{[2]}(X_i,c;\eta^{\text{o}})} {f(X_i,c;\eta^{\text{o}})}\Bigg]=0\text{ and }  \E\Bigg[\frac{f^{[2]}(X_i;\eta^{\text{o}})} {f(X_i;\eta^{\text{o}})}\Bigg]=0.
     \end{align*}
Thus, by the continuity of $pI_1(\eta)$ and $pI_2(\eta)$ in $\eta$ and the consistency properties of $\tilde{\eta}$ and $\check{\eta}$, we conclude that
\begin{align}
   & \frac{1}{n}pI_1(\Bar{\eta}) \stackrel{p}{\longrightarrow} V_1 \label{A.2.20}
   \end{align}
   and
   \begin{align}
    & \frac{1}{n}pI_2(\Bar{\eta}^{*}) \stackrel{p}{\longrightarrow} V_2. \label{A.3.17}
     \end{align}
Substituting (\ref{A.2.15}) and (\ref{A.2.20}) into (\ref{A.2.8}), and (\ref{A.3.13}) and (\ref{A.3.17}) into (\ref{A.3.7}), and involking Slutsky's theorem under assumption {A8}, together establish the asymptotic normality of $\tilde{\eta}$ and $\check{\eta}$.

\subsection{Derivation of the Semiparametric Efficiency Bound}
\label{speff.pf}

To derive the nuisance tangent space $\Lambda$ associated with $\{f(x, c; \eta)\}$, we consider the collection of nuisance score vectors corresponding to all possible parametric submodels $\{f_{\gamma}(x, c; \eta)\}$, where $\gamma$ denotes a $q \times 1$ parameter vector. A generic submodel takes the form 
\begin{align*}
     f_{\gamma}(x,c;\eta^{\text{o}})=\pi_c w(y_{c})g_{\gamma}(y_{c})\text{ with } g_{\gamma}(y)=\lim_{h \rightarrow 0^{+}} \E\big[K_{h}\big(Y_{\gamma},y\big)\big],
\end{align*}
where $Y_{\gamma}$ is defined analogously to $Y$ in Section \ref{sec:sec2}, with $\theta$ replaced by $\gamma$. By construction, we have $f_{\eta^{\text{o}}}(x,c;\eta^{\text{o}})=f(x,c;\eta^{\text{o}})$ for each $c \in \mathcal{C}$.
Following the formulation in \citet{tsiatis2006semiparametric}, the nuisance tangent space $\Lambda$ is characterized as the mean-square closure of the tangent spaces generated by all such submodels. That is, for each parametric submodel, the associated tangent space is given by
\begin{align*}
    \Lambda=\Bigg\{D\sum_{c=1}^k I(C=c)  \frac{\partial_{\gamma}f_{\gamma}(X,c;\eta^{\text{o}})\mid_{\gamma=\gamma^{\text{o}}}}{f_{\gamma^{\text{o}}}(X,c;\eta^{\text{o}})} \Bigg\},
\end{align*}
where $D$ is an arbitrary constant matrix of dimension $d \times q$.

Denote by $\Lambda^{\perp}$ the orthogonal complement of $\Lambda$.
Let $\ell(\eta)$ denote the log-likelihood contribution
\begin{align*}
    \ell(\eta)=\sum^{k}_{c=1}I(C=c)\log f^{[0]}(X,c;\eta),
\end{align*}    
and define the score-type function as
\begin{align*}
    pS(\eta)=\sum^{k}_{c=1}I(C=c)\frac{f^{[1]}(X,c;\eta)}{f^{[0]}(X,c;\eta)}.
\end{align*}
According to the semiparametric efficiency theory of \citet{BKRW1998Efficient}, the efficient score for estimating $\eta$ is given by the orthogonal projection of $\partial_{\eta} \ell(\eta^{\text{o}})$ onto $\Lambda^{\perp}$.
Therefore, the function $pS(\eta^{\text{o}})$ coincides with the efficient score if it equals this projection.

For any $v \in \Lambda$, it depends on the entire collection $\{Y_1, \dots, Y_k\}$, whereas the function $f(X, c; \eta^{\text{o}})$ depends solely on $Y_c$. This structural distinction, together with (\ref{A.3.0}), implies that
\begin{align*}
    \E \big[ v^{\top} pS(\eta^{\text{o}}) \big]=\E\bigg[\sum_{c=1}^k I(C=c)\frac{v^{\top}}{f(X,c;\eta^{\text{o}})}\E\big[f^{[1]}(X,c;\eta^{\text{o}})|Y_c,C=c\big]\bigg]=0.
\end{align*}
It follows that
\begin{align}
    pS(\eta^{\text{o}})\in \Lambda^{\perp}.\label{A.5.1}
\end{align}
Next, observe that
\begin{align*}
    pS(\eta^{\text{o}})-\partial_{\eta}\ell(\eta^{\text{o}})=\sum_{c=1}^k I(C=c)\frac{\partial_{\gamma}f_{\gamma}(X,c;\eta^{\text{o}})|_{\gamma=\eta^{\text{o}}}}{f(X,c;\eta^{\text{o}})},
\end{align*}
from which it follows that
\begin{align}
    pS(\eta^{\text{o}})-\partial_{\eta}\ell(\eta^{\text{o}})\in \Lambda.\label{A.5.2}
\end{align}
Combining (\ref{A.5.1}) and (\ref{A.5.2}), we conclude that $pS(\eta^{\text{o}})$ is the orthogonal projection of $\partial_{\eta} \ell(\eta^{\text{o}})$ onto $\Lambda^{\perp}$. Consequently, $pS(\eta^{\text{o}})$ coincides with the efficient score in the semiparametric model. The efficiency of the estimator $\tilde{\eta}$ thus follows as a direct consequence.

\subsection{Proof of Theorem \ref{Thm3.4}}
 \label{Thm3.4.pf}

Following an argument analogous to that used in the proof of Lemma \ref{lem1}, we obtain
\begin{eqnarray}
 \sup_{\{x,\eta\}}\Big\Vert\tilde{f}_{\breve{h}}(x;\eta)-f^{[0]}(x;\eta)\Big\Vert
 =\sup_{\{x,\eta\}}\bigg\Vert \frac{1}{n}\sum_{i=1}^n \xi_{i,0}(x;\eta)  \bigg\Vert=o_p\bigg(\frac{\sqrt{\log n}}{n^{\frac{2}{5}}}\bigg), \label{A.4.m.1}
\end{eqnarray}
where $\xi_{i,0}(x;\eta)=\sum^{k}_{c=1}\xi_{i,0}(x,c;\eta)$, $i=1,\dots,n$.
From the proof of Theorem  \ref{Thm3.2}, it follows that $ \breve{\eta}-\eta^{\text{o}}= O_{p}\big( n^{-1/2} \big) $ whenever $\mathcal{G} \in \mathcal{C}_{\mathcal{G}}$. In contrast, when $\mathcal{G} \notin \mathcal{C}_{\mathcal{G}}$, $\breve{\eta}$ remains consistent to $\eta^{\text{o}}$, but does not attain the $\sqrt{n}$-consistency due to the presence of a nonzero expectation in the score function. Accordingly, the convergence rate of $\breve{\eta}$ is summarized as
\begin{align}
    \breve{\eta}-\eta^{\text{o}}= \left\{
\begin{array}{ll}
O_{p}\Big( n^{-\frac{1}{2}} \Big)  
& \mathcal{G} \in \mathcal{C}_{\mathcal{G}}, \\
 o_p(1)
& \text{otherwise.}
\end{array}\right. \label{A.4.2}
\end{align}
Using the properties in (\ref{A.4.m.1}) and (\ref{A.4.2}) with a Taylor expansion and assumption {A6}, we obtain
\begin{align}
&\sup_{x}|\tilde{f}_{\breve{h}}(x;\breve{\eta})-f(x;\eta^{\text{o}})| 
    \leq \sup_{x} |\tilde{f}_{\breve{h}}(x;\breve{\eta})-f^{[0]}(x;\breve{\eta})|+\sup_{\{x,\eta\}} \Vert \partial_{\eta} f^{[0]}(x;\eta) \Vert \Vert \breve{\eta}-\eta^{\text{o}} \Vert \nonumber \\
    &= \left\{
\begin{array}{ll}
o_p\Big(\frac{\sqrt{\log n}}{n^{\frac{2}{5}}}\Big) + O_{p}\Big( n^{-\frac{1}{2}} \Big)
& \mathcal{G}^{\text{o}} \in \mathcal{C}_{\mathcal{G}}, \\
o_p\Big(\frac{\sqrt{\log n}}{n^{\frac{2}{5}}}\Big) + O_{p}\big( \| \breve{\eta} - \eta^{\text{o}} \| \big)
& \text{otherwise.}
\end{array}\right.\label{A.4.m.3}
\end{align}

By applying a Taylor expansion and the uniform convergence property in (\ref{A.2.3}), the log-pseudo-likelihood function admits the following decomposition:
\begin{align}
    -\frac{1}{n}p\ell(k) =&-\frac{1}{n}\sum_{i=1}^n \log f(X_i;\eta^{\text{o}})-\frac{1}{n}\sum_{i=1}^n  \frac{\tilde{f}^{-i}_{\breve{h}}(X_i;\breve{\eta})-f(X_i;\eta^{\text{o}})}{f(X_i;\eta^{\text{o}})}
    +O_{p}\big( \| \breve{\eta} - \eta^{\text{o}} \|^{2}\big)+o_p\bigg(\frac{\log n}{n^{\frac{4}{5}}}\bigg)\nonumber 
    \end{align}
 \begin{align}
\hspace{-0.5in}\stackrel{\triangle}{=}& \left\{
\begin{array}{ll}
I\!I\!I_1(k)+I\!I\!I_2(k)+o_p\Big(\frac{\log n}{n^{\frac{4}{5}}}\Big)
& \mathcal{G}^{\text{o}} \in \mathcal{C}_{\mathcal{G}}, \\
I\!I\!I_1(k)+I\!I\!I_2(k)+o_p\Big(\frac{\log n}{n^{\frac{4}{5}}}\Big) + O_{p}\big( \| \breve{\eta} - \eta^{\text{o}} \|^{2}\big)
& \text{otherwise.}
\end{array}\right.\label{A.4.m.4}
\end{align}
When $\mathcal{G}^{\text{o}} \in \mathcal{C}_{\mathcal{G}}$, it follows by construction that $I\!I\!I_1(k)=I\!I\!I_1(k^*)$. Otherwise, applying the law of large numbers yields $I\!I\!I_1(k)-I\!I\!I_1(k^*)=b^2(k)(1+o_p(1))$. Thus, we summarize
\begin{align}
  &I\!I\!I_1(k)-I\!I\!I_1(k^*)= \left\{
\begin{array}{ll}
0
& \mathcal{G}^{\text{o}} \in \mathcal{C}_{\mathcal{G}}, \\
b^2(k)(1+o_p(1))
& \text{otherwise.}
\end{array}\right. \label{A.4.m.5}
\end{align}
Substituting the $i.i.d.$ representation from (\ref{A.4.m.1}) into $I\!I\!I_2(k)$, we obtain
\begin{align}
 I\!I\!I_2(k)=-\frac{1}{n(n-1)}\sum_{i=1}^n\sum_{j \neq i}\frac{\xi_{j,0}(X_i;\breve{\eta})}{f^{[0]}(X_i;\eta^{\text{o}})}-\E\bigg[\frac{\xi_{j,0}(X_i;\breve{\eta})}{f^{[0]}(X_i;\eta^{\text{o}})}\Big| i \bigg] -\frac{1}{n}\sum_{i=1}^n \E\bigg[\frac{\xi_{j,0}(X_i;\breve{\eta})}{f^{[0]}(X_i;\eta^{\text{o}})}\Big| i \bigg].\label{A.4.6}
\end{align}
Since $\xi_{0}(x;\eta)/f^{[0]}(x;\eta^{\text{o}})$ is of bounded variation in $x$, uniformly over $c\in\mathcal{C}$ and $\eta\in\mathcal{H}$, Lemma 22 in \cite{nolan1987u} implies that the class
\begin{align*}
\bigg\{\frac{\xi_{j,0}(X_i;\eta)}{f^{[0]}(X_i;\eta^{\text{o}})}:\eta \in \mathcal{H}\bigg\} 
\end{align*}
is Euclidean.
When $\mathcal{G}^{\text{o}} \in \mathcal{C}_{\mathcal{G}}$, the second-order $U$-process in (\ref{A.4.6}) is degenerate,
and the variances of the summands in the two terms of (\ref{A.4.6}) are of orders $O(n^{1/5})$ and $O(n^{-4/5})$, respectively. By Corollary 4 in \cite{sherman1994maximal}, 
\begin{align*}
  -\frac{1}{n(n-1)}\sum_{i=1}^n\sum_{j \neq i} \bigg(\frac{\xi_{j,0}(X_i;\breve{\eta})}{f^{[0]}(X_i;\eta^{\text{o}})}-\E\bigg[\frac{\xi_{j,0}(X_i;\breve{\eta})}{f^{[0]}(X_i;\eta^{\text{o}})}\Big| i \bigg]\bigg) =O_p \Big(n^{-\frac{9}{10}}\Big), 
 \end{align*}
 and
 \begin{align*}
  \frac{1}{n}\sum_{i=1}^n \E\bigg[\frac{\xi_{j,0}(X_i;\breve{\eta})}{f^{[0]}(X_i;\eta^{\text{o}})}\Big| i \bigg]= O_p \Big(n^{-\frac{9}{10}}\Big).
  \end{align*}
Thus,
\begin{align}
   I\!I\!I_2(k)=O_p \Big(n^{-\frac{9}{10}}\Big).\label{A.4.7}
   \end{align}
When $\mathcal{G}^{\text{o}} \notin \mathcal{C}_{\mathcal{G}}$, the corresponding $U$-process is non-degenerate. Together with the bound in (\ref{A.4.m.3}), this yields
\begin{align}
    I\!I\!I_2(k)=o_p\bigg(\frac{\sqrt{\log n}}{n^{\frac{2}{5}}}\bigg)\text{ if } \mathcal{G}^{\text{o}} \notin \mathcal{C}_{\mathcal{G}}. \label{A.4.8}
\end{align}
Combining the results in (\ref{A.4.m.5}), (\ref{A.4.7}), and (\ref{A.4.8}) with the expression of $\text{SPIC}(k)$, we obtain 
\begin{align}
&\text{SPIC}(k)-\text{SPIC}_{2}(k^*) = \left\{
\begin{array}{ll}
\max\Big\{ b^{2}(k),2(k-k^*)\frac{\log n }{ n^{\frac{4}{5}}}\Big\}(1+o_{p}(1))
& k > k^*, \\
 b^{2}(k) (1+o_{p}(1))
& k < k^*.
\end{array}\right. \label{A.4.m.9}
\end{align}
The assertion in Theorem \ref{Thm3.4} follows directly from (\ref{A.4.m.9}).

\clearpage
\subsection{Pseudocode}
\label{pcode}

\RestyleAlgo{ruled}

\begin{algorithm}
\caption*{Pseudocode of the computational algorithm for the separation penalty estimation}
\begin{adjustbox}{scale=1} % 比例縮小到70%
\begin{minipage}{\linewidth}
Initialize $\hat{\beta}^{(0)}$ and $\hat{\mu}^{(0)}$\;
Set $\hat{\delta}^{(0)}_{ic} \gets W^{\frac{1}{2}}(\hat{\beta}_{i}^{(0)} - \hat{\mu}_{c}^{(0)})$, $\hat{\nu}^{(0)} \gets 0$, and $\varepsilon\gets \text{ predefined tolerance}$\;
\Begin{
 \For{$m = 0, 1, 2, \dots$}{
  Compute $\hat{\beta}^{(m+1)}$ using (\ref{4.2.6})\;
  Compute $\hat{\mu}^{(m+1)}$ using (\ref{4.2.9})\;
  Compute $\hat{\delta}^{(m+1)}$ using (\ref{4.2.7})\;
  Compute $\hat{\nu}^{(m+1)}$ using (\ref{4.2.8})\;
          \eIf{$\frac{1}{n}\sum_{i=1}^n \sum_{c=1}^k \| \hat{\beta}^{(m+1)}_i - \hat{\mu}^{(m+1)}_{c} - \hat{\delta}^{(m+1)}_{ic} \| < \varepsilon$}{
            Set $\hat{\beta}^{\lambda} = \hat{\beta}^{(m+1)}$, $\hat{\mu}^{\lambda} = \hat{\mu}^{(m+1)}$ and exist the loop\;
            }{$m = m+1$\;
            }   
        }

   }
   \end{minipage}
\end{adjustbox}
\end{algorithm}

\subsection{Supplementary Figures and Tables}

\label{table-figure}

% Table generated by Excel2LaTeX from sheet '工作表1'
\begin{table}[htbp]
  \centering
  \caption{Means of 500 RI values (scaled by $10^2$) of the underlying clusters and clustering estimates from various methods under model {M2}.}
    \begin{adjustbox}{max width=0.95\linewidth}
    \begin{tabular}{cccccccccccccccccccccc}
    \toprule
        \multicolumn{2}{c}{$(p,k)$}&\multicolumn{6}{c}{(6,2)} & &\multicolumn{6}{c}{(10,2)}&&\multicolumn{6}{c}{(10,3)} \\
        \cmidrule(rl){1-2} \cmidrule(rl){3-8}  \cmidrule(rl){10-15} \cmidrule(rl){17-22} 
   $\sigma$      &  $n$     & $k$-means & IS    & SP       & $\text{OC}$  & $\text{OC}_{\text{n}}$ & $\text{OC}_{\text{t}}$ & & $k$-means & IS    & SP       & $\text{OC}$  & $\text{OC}_{\text{n}}$ & $\text{OC}_{\text{t}}$ && $k$-means & IS    & SP       & $\text{OC}$  & $\text{OC}_{\text{n}}$ & $\text{OC}_{\text{t}}$  \\
    \midrule
    {1} & 125   & 95.47 & 98.17 & 98.21 & 98.25 & 98.29 & 98.31 &       & 97.21 & 99.56 & 99.56 & 99.60 & 99.58 & 99.64 &       & 97.21 & 98.63 & 98.65 & 98.65 & 98.65 & 98.66 \\
            & 250   & 95.58 & 98.49 & 98.50 & 98.55 & 98.56 & 98.56 &       & 97.36 & 99.75 & 99.75 & 99.76 & 99.76 & 99.76 &       & 97.51 & 98.98 & 98.98 & 98.98 & 98.98 & 98.98 \\
            & 500   & 95.66 & 98.64 & 98.64 & 98.68 & 98.69 & 98.69 &       & 97.34 & 99.78 & 99.78 & 99.79 & 99.78 & 99.78 &       & 97.60 & 99.08 & 99.08 & 99.08 & 99.08 & 99.08 \\
            & 750   & 95.67 & 98.69 & 98.69 & 98.72 & 98.73 & 98.73 &       & 97.40 & 99.78 & 99.78 & 99.79 & 99.80 & 99.80 &       & 97.63 & 99.14 & 99.14 & 99.13 & 99.13 & 99.14 \\
            & 1000  & 95.62 & 98.67 & 98.67 & 98.71 & 98.72 & 98.72 &       & 97.42 & 99.79 & 99.79 & 99.79 & 99.80 & 99.80 &       & 97.66 & 99.16 & 99.16 & 99.16 & 99.16 & 99.16 \\
            \midrule
    {1.2} & 125   & 89.54 & 94.58 & 94.61 & 94.82 & 94.82 & 94.81 &       & 92.70 & 97.87 & 97.92 & 98.10 & 98.22 & 98.21 &       & 93.14 & 96.06 & 96.14 & 96.21 & 96.18 & 96.18 \\
            & 250   & 89.90 & 95.59 & 95.61 & 95.69 & 95.74 & 95.71 &       & 93.05 & 98.70 & 98.70 & 98.76 & 98.80 & 98.79 &       & 93.66 & 96.88 & 96.88 & 96.85 & 96.87 & 96.87 \\
            & 500   & 89.98 & 95.83 & 95.84 & 95.95 & 95.96 & 95.95 &       & 93.03 & 98.89 & 98.89 & 98.93 & 98.95 & 98.95 &       & 93.91 & 97.14 & 97.14 & 97.13 & 97.15 & 97.15 \\
            & 750   & 90.19 & 96.03 & 96.03 & 96.14 & 96.17 & 96.15 &       & 93.10 & 98.94 & 98.94 & 98.97 & 98.97 & 98.97 &       & 94.04 & 97.26 & 97.26 & 97.25 & 97.27 & 97.27 \\
            & 1000  & 90.17 & 96.11 & 96.11 & 96.23 & 96.22 & 96.21 &       & 93.10 & 98.96 & 98.96 & 98.98 & 98.99 & 98.99 &       & 94.02 & 97.27 & 97.27 & 97.28 & 97.30 & 97.29 \\
            \midrule
    {1.4} & 125   & 82.72 & 88.18 & 88.23 & 88.80 & 88.80 & 88.66 &       & 86.35 & 93.83 & 93.90 & 94.32 & 94.73 & 94.70 &       & 87.51 & 90.92 & 90.95 & 91.08 & 91.04 & 91.01 \\
            & 250   & 83.18 & 91.02 & 91.03 & 91.51 & 91.51 & 91.30 &       & 86.70 & 96.46 & 96.48 & 96.64 & 96.79 & 96.76 &       & 88.11 & 93.14 & 93.15 & 93.18 & 93.18 & 93.17 \\
            & 500   & 83.22 & 92.01 & 92.01 & 92.29 & 92.29 & 92.21 &       & 86.73 & 97.00 & 96.99 & 97.07 & 97.10 & 97.09 &       & 88.66 & 93.93 & 93.93 & 93.91 & 93.92 & 93.94 \\
            & 750   & 83.44 & 92.30 & 92.30 & 92.54 & 92.58 & 92.52 &       & 86.84 & 97.10 & 97.10 & 97.21 & 97.21 & 97.21 &       & 88.84 & 94.17 & 94.17 & 94.18 & 94.23 & 94.19 \\
            & 1000  & 83.42 & 92.36 & 92.36 & 92.59 & 92.60 & 92.58 &       & 86.85 & 97.18 & 97.18 & 97.27 & 97.29 & 97.29 &       & 88.90 & 94.21 & 94.21 & 94.23 & 94.26 & 94.24 \\
            \midrule
    {1.6} & 125   & 76.38 & 81.29 & 81.35 & 81.91 & 81.82 & 81.62 &       & 79.41 & 87.29 & 87.38 & 87.94 & 88.45 & 88.42 &       & 80.93 & 83.36 & 83.42 & 83.35 & 83.39 & 83.42 \\
            & 250   & 76.39 & 84.52 & 84.52 & 85.22 & 85.22 & 84.86 &       & 79.57 & 92.22 & 92.22 & 92.76 & 93.07 & 93.05 &       & 81.76 & 86.99 & 86.99 & 86.96 & 86.97 & 86.99 \\
            & 500   & 76.37 & 86.93 & 86.93 & 87.51 & 87.52 & 87.28 &       & 79.82 & 94.10 & 94.10 & 94.31 & 94.35 & 94.34 &       & 82.39 & 89.80 & 89.80 & 89.78 & 89.80 & 89.81 \\
            & 750   & 76.56 & 87.68 & 87.68 & 88.31 & 88.31 & 88.06 &       & 79.91 & 94.29 & 94.29 & 94.51 & 94.55 & 94.53 &       & 82.52 & 90.34 & 90.34 & 90.36 & 90.36 & 90.35 \\
            & 1000  & 76.62 & 87.92 & 87.92 & 88.45 & 88.48 & 88.31 &       & 79.87 & 94.39 & 94.39 & 94.61 & 94.65 & 94.63 &       & 82.61 & 90.51 & 90.51 & 90.54 & 90.55 & 90.55 \\
          \bottomrule
    \end{tabular}%
   \end{adjustbox}
  \label{tab:RI_S}%
\end{table}%

\begin{table}[htbp]
\begin{minipage}[c][0.7\textheight][t]{0.47\textwidth}
  \centering
   \caption{Means of 500 RSEs (scaled by $10^2$) of the proposed, competing, and oracle estimates of the cluster-specific mean vectors under model {M2}.}
  \begin{adjustbox}{max width=1.025\linewidth}
    \begin{tabular}{cccccccccccccc}
     \toprule
 $(p,k)$   & $\sigma$     &    $n$   & $k$-means & IS    & SP    & PML   & PMML    & RPML   & RPMML   &  $\text{MML}_{\text{n}}$  & $\text{MML}_{\text{t}}$ & ASPE & AE \\
    \midrule
   (6,2) & 1     & 125   & 1.12  & 1.03  & 1.02  & 1.04  & 1.02  & 1.04  & 1.02  & 1.02  & 1.02  & 1.02  & 1.01 \\
          &       & 250   & 0.79  & 0.72  & 0.72  & 0.71  & 0.71  & 0.71  & 0.71  & 0.72  & 0.72  & 0.72  & 0.71 \\
          &       & 500   & 0.59  & 0.51  & 0.52  & 0.51  & 0.52  & 0.51  & 0.52  & 0.51  & 0.51  & 0.50  & 0.49 \\
          &       & 750   & 0.51  & 0.41  & 0.41  & 0.41  & 0.41  & 0.41  & 0.41  & 0.41  & 0.41  & 0.41  & 0.41 \\
          &       & 1000  & 0.47  & 0.37  & 0.37  & 0.37  & 0.37  & 0.37  & 0.37  & 0.37  & 0.37  & 0.37  & 0.37 \\
           \cmidrule(rl){2-14}
          & 1.2   & 125   & 1.60  & 1.23  & 1.23  & 1.26  & 1.23  & 1.26  & 1.23  & 1.23  & 1.23  & 1.23  & 1.19 \\
          &       & 250   & 1.27  & 0.85  & 0.85  & 0.84  & 0.84  & 0.84  & 0.84  & 0.84  & 0.84  & 0.85  & 0.84 \\
          &       & 500   & 1.04  & 0.64  & 0.64  & 0.64  & 0.63  & 0.64  & 0.63  & 0.62  & 0.62  & 0.60  & 0.60 \\
          &       & 750   & 0.95  & 0.51  & 0.51  & 0.51  & 0.51  & 0.51  & 0.51  & 0.51  & 0.51  & 0.49  & 0.48 \\
          &       & 1000  & 0.91  & 0.46  & 0.46  & 0.46  & 0.46  & 0.46  & 0.46  & 0.45  & 0.45  & 0.45  & 0.45 \\
           \cmidrule(rl){2-14}
          & 1.4   & 125   & 2.29  & 1.79  & 1.77  & 1.78  & 1.60  & 1.78  & 1.60  & 1.61  & 1.62  & 1.37  & 1.37 \\
          &       & 250   & 2.02  & 1.12  & 1.12  & 1.10  & 1.03  & 1.10  & 1.03  & 1.05  & 1.05  & 0.96  & 0.94 \\
          &       & 500   & 1.74  & 0.78  & 0.78  & 0.78  & 0.76  & 0.78  & 0.76  & 0.77  & 0.77  & 0.71  & 0.70 \\
          &       & 750   & 1.71  & 0.66  & 0.66  & 0.64  & 0.62  & 0.64  & 0.62  & 0.64  & 0.64  & 0.58  & 0.57 \\
          &       & 1000  & 1.67  & 0.59  & 0.59  & 0.59  & 0.56  & 0.59  & 0.56  & 0.59  & 0.59  & 0.51  & 0.50 \\
           \cmidrule(rl){2-14}
          & 1.6   & 125   & 3.32  & 2.63  & 2.62  & 2.62  & 2.28  & 2.62  & 2.28  & 2.26  & 2.28  & 1.63  & 1.58 \\
          &       & 250   & 2.92  & 1.76  & 1.76  & 1.67  & 1.43  & 1.67  & 1.43  & 1.32  & 1.33  & 1.11  & 1.08 \\
          &       & 500   & 2.74  & 1.12  & 1.12  & 1.06  & 0.96  & 1.06  & 0.96  & 0.94  & 0.95  & 0.83  & 0.81 \\
          &       & 750   & 2.77  & 0.90  & 0.90  & 0.85  & 0.79  & 0.85  & 0.79  & 0.81  & 0.81  & 0.68  & 0.68 \\
          &       & 1000  & 2.71  & 0.83  & 0.83  & 0.75  & 0.68  & 0.75  & 0.68  & 0.72  & 0.73  & 0.58  & 0.56 \\
             \bottomrule \toprule
   (10,2) & 1     & 125   & 0.64  & 0.62  & 0.62  & 0.62  & 0.62  & 0.62  & 0.62  & 0.61  & 0.62  & 0.61  & 0.61 \\
          &       & 250   & 0.44  & 0.42  & 0.42  & 0.42  & 0.42  & 0.42  & 0.42  & 0.42  & 0.42  & 0.42  & 0.42 \\
          &       & 500   & 0.34  & 0.31  & 0.31  & 0.31  & 0.31  & 0.31  & 0.31  & 0.31  & 0.31  & 0.31  & 0.31 \\
          &       & 750   & 0.28  & 0.25  & 0.25  & 0.25  & 0.25  & 0.25  & 0.25  & 0.25  & 0.24  & 0.24  & 0.24 \\
          &       & 1000  & 0.25  & 0.21  & 0.21  & 0.21  & 0.21  & 0.21  & 0.21  & 0.21  & 0.21  & 0.21  & 0.21 \\
          \cmidrule(rl){2-14}
          & 1.2   & 125   & 0.89  & 0.72  & 0.72  & 0.72  & 0.70  & 0.72  & 0.70  & 0.70  & 0.71  & 0.70  & 0.69 \\
          &       & 250   & 0.64  & 0.51  & 0.51  & 0.50  & 0.50  & 0.50  & 0.50  & 0.50  & 0.50  & 0.49  & 0.50 \\
          &       & 500   & 0.51  & 0.36  & 0.36  & 0.36  & 0.35  & 0.36  & 0.35  & 0.35  & 0.35  & 0.35  & 0.35 \\
          &       & 750   & 0.47  & 0.31  & 0.31  & 0.31  & 0.31  & 0.31  & 0.31  & 0.31  & 0.31  & 0.30  & 0.30 \\
          &       & 1000  & 0.44  & 0.26  & 0.26  & 0.26  & 0.26  & 0.26  & 0.26  & 0.26  & 0.26  & 0.26  & 0.26 \\
          \cmidrule(rl){2-14}
          & 1.4   & 125   & 1.27  & 0.95  & 0.95  & 0.92  & 0.85  & 0.92  & 0.85  & 0.84  & 0.84  & 0.81  & 0.79 \\
          &       & 250   & 1.08  & 0.61  & 0.61  & 0.60  & 0.59  & 0.60  & 0.59  & 0.58  & 0.59  & 0.59  & 0.57 \\
          &       & 500   & 0.96  & 0.44  & 0.44  & 0.44  & 0.44  & 0.44  & 0.44  & 0.44  & 0.44  & 0.45  & 0.44 \\
          &       & 750   & 0.90  & 0.37  & 0.36  & 0.36  & 0.36  & 0.36  & 0.36  & 0.35  & 0.35  & 0.34  & 0.34 \\
          &       & 1000  & 0.83  & 0.30  & 0.30  & 0.30  & 0.30  & 0.30  & 0.30  & 0.30  & 0.30  & 0.30  & 0.30 \\
          \cmidrule(rl){2-14}
          & 1.6   & 125   & 1.83  & 1.34  & 1.34  & 1.30  & 1.12  & 1.30  & 1.12  & 1.13  & 1.12  & 0.96  & 0.96 \\
          &       & 250   & 1.63  & 0.78  & 0.78  & 0.76  & 0.73  & 0.76  & 0.73  & 0.71  & 0.71  & 0.69  & 0.68 \\
          &       & 500   & 1.54  & 0.55  & 0.55  & 0.53  & 0.52  & 0.53  & 0.52  & 0.53  & 0.52  & 0.49  & 0.49 \\
          &       & 750   & 1.45  & 0.42  & 0.42  & 0.41  & 0.42  & 0.41  & 0.42  & 0.41  & 0.41  & 0.39  & 0.38 \\
          &       & 1000  & 1.46  & 0.37  & 0.37  & 0.36  & 0.36  & 0.36  & 0.36  & 0.36  & 0.36  & 0.36  & 0.36 \\
        \bottomrule \toprule
   (10,3) & 1     & 125   & 1.18  & 1.11  & 1.11  & 1.11  & 1.12  & 1.11  & 1.13  & 1.07  & 1.12  & 1.00  & 1.00 \\
          &       & 250   & 0.77  & 0.74  & 0.74  & 0.74  & 0.73  & 0.74  & 0.74  & 0.73  & 0.73  & 0.70  & 0.70 \\
          &       & 500   & 0.55  & 0.50  & 0.50  & 0.50  & 0.50  & 0.50  & 0.50  & 0.51  & 0.50  & 0.50  & 0.49 \\
          &       & 750   & 0.45  & 0.41  & 0.41  & 0.41  & 0.41  & 0.41  & 0.41  & 0.41  & 0.41  & 0.41  & 0.40 \\
          &       & 1000  & 0.40  & 0.36  & 0.36  & 0.36  & 0.36  & 0.36  & 0.36  & 0.37  & 0.36  & 0.36  & 0.36 \\
          \cmidrule(rl){2-14}
          & 1.2   & 125   & 1.51  & 1.34  & 1.33  & 1.33  & 1.36  & 1.33  & 1.37  & 1.37  & 1.40  & 1.19  & 1.16 \\
          &       & 250   & 1.06  & 0.89  & 0.89  & 0.90  & 0.89  & 0.90  & 0.90  & 0.87  & 0.90  & 0.87  & 0.85 \\
          &       & 500   & 0.78  & 0.64  & 0.64  & 0.64  & 0.63  & 0.63  & 0.63  & 0.63  & 0.63  & 0.61  & 0.60 \\
          &       & 750   & 0.66  & 0.52  & 0.52  & 0.52  & 0.51  & 0.52  & 0.52  & 0.53  & 0.51  & 0.50  & 0.48 \\
          &       & 1000  & 0.62  & 0.44  & 0.44  & 0.45  & 0.45  & 0.45  & 0.44  & 0.45  & 0.45  & 0.44  & 0.41 \\
          \cmidrule(rl){2-14}
          & 1.4   & 125   & 2.24  & 1.88  & 1.88  & 1.86  & 1.77  & 1.86  & 1.77  & 1.74  & 1.81  & 1.41  & 1.35 \\
          &       & 250   & 1.68  & 1.17  & 1.17  & 1.17  & 1.14  & 1.17  & 1.14  & 1.15  & 1.14  & 0.98  & 0.96 \\
          &       & 500   & 1.34  & 0.83  & 0.83  & 0.83  & 0.82  & 0.83  & 0.82  & 0.80  & 0.82  & 0.76  & 0.69 \\
          &       & 750   & 1.28  & 0.67  & 0.67  & 0.66  & 0.65  & 0.66  & 0.65  & 0.64  & 0.65  & 0.62  & 0.56 \\
          &       & 1000  & 1.22  & 0.60  & 0.60  & 0.58  & 0.57  & 0.58  & 0.57  & 0.55  & 0.57  & 0.54  & 0.49 \\
          \cmidrule(rl){2-14}
          & 1.6   & 125   & 3.35  & 2.91  & 2.89  & 2.91  & 2.66  & 2.92  & 2.66  & 2.82  & 2.80  & 1.70  & 1.58 \\
          &       & 250   & 2.84  & 1.90  & 1.90  & 1.94  & 1.66  & 1.94  & 1.66  & 1.60  & 1.62  & 1.24  & 1.10 \\
          &       & 500   & 2.63  & 1.12  & 1.12  & 1.12  & 1.07  & 1.12  & 1.07  & 1.09  & 1.04  & 0.93  & 0.78 \\
          &       & 750   & 2.61  & 0.92  & 0.92  & 0.90  & 0.85  & 0.89  & 0.85  & 0.85  & 0.84  & 0.77  & 0.65 \\
          &       & 1000  & 2.67  & 0.82  & 0.82  & 0.81  & 0.79  & 0.82  & 0.79  & 0.78  & 0.79  & 0.72  & 0.56 \\
          \bottomrule
    \end{tabular}% 
        \end{adjustbox}
  \label{tab:RMSE_location_S}%
\end{minipage}%
\hspace{0.05\textwidth}
\begin{minipage}[c][0.7\textheight][t]{0.47\textwidth}
  \centering
   \caption{Means of 500 RSEs (scaled by $10^2$) of proposed, competing, and oracle estimates of the cluster-invariant variance matrix under model {M2}.}
  \begin{adjustbox}{max width=1.05\linewidth}
    \begin{tabular}{cccccccccccccc}
    \toprule
 $(p,k)$   & $\sigma$     &    $n$   & $k$-means & IS    & SP    & PML   & PMML    & RPML   & RPMML   &  $\text{MML}_{\text{n}}$  & $\text{MML}_{\text{t}}$ & ASPE & AE \\        \midrule
   (6,2) & 1     & 125   & 2.82  & 2.61  & 2.61  & 2.74  & 2.75  & 2.74  & 2.74  & 2.59  & 2.64  & 2.68  & 2.51 \\
          &       & 250   & 2.14  & 1.85  & 1.85  & 1.94  & 1.96  & 1.94  & 1.95  & 1.84  & 1.88  & 1.90  & 1.80 \\
          &       & 500   & 1.71  & 1.31  & 1.31  & 1.37  & 1.41  & 1.37  & 1.41  & 1.31  & 1.35  & 1.35  & 1.28 \\
          &       & 750   & 1.56  & 1.08  & 1.08  & 1.13  & 1.18  & 1.13  & 1.17  & 1.07  & 1.11  & 1.10  & 1.05 \\
          &       & 1000  & 1.49  & 0.93  & 0.93  & 0.98  & 1.03  & 0.98  & 1.02  & 0.92  & 0.97  & 0.95  & 0.90 \\
           \cmidrule(rl){2-14}
          & 1.2   & 125   & 4.88  & 4.07  & 4.07  & 4.21  & 4.17  & 4.21  & 4.16  & 3.88  & 3.94  & 3.86  & 3.61 \\
          &       & 250   & 4.04  & 2.81  & 2.80  & 2.92  & 2.93  & 2.92  & 2.92  & 2.74  & 2.80  & 2.75  & 2.60 \\
          &       & 500   & 3.58  & 2.02  & 2.02  & 2.11  & 2.12  & 2.11  & 2.11  & 1.96  & 2.02  & 1.95  & 1.85 \\
          &       & 750   & 3.40  & 1.67  & 1.67  & 1.72  & 1.76  & 1.72  & 1.75  & 1.59  & 1.65  & 1.59  & 1.51 \\
          &       & 1000  & 3.35  & 1.45  & 1.45  & 1.47  & 1.53  & 1.47  & 1.52  & 1.35  & 1.42  & 1.37  & 1.30 \\
           \cmidrule(rl){2-14}
          & 1.4   & 125   & 8.00  & 6.45  & 6.43  & 6.48  & 6.12  & 6.48  & 6.11  & 5.68  & 5.76  & 5.29  & 4.93 \\
          &       & 250   & 7.05  & 4.35  & 4.35  & 4.34  & 4.21  & 4.34  & 4.19  & 3.91  & 3.99  & 3.75  & 3.54 \\
          &       & 500   & 6.58  & 3.05  & 3.05  & 3.08  & 2.96  & 3.08  & 2.95  & 2.74  & 2.83  & 2.65  & 2.51 \\
          &       & 750   & 6.38  & 2.55  & 2.55  & 2.55  & 2.48  & 2.55  & 2.46  & 2.23  & 2.31  & 2.15  & 2.05 \\
          &       & 1000  & 6.32  & 2.26  & 2.26  & 2.27  & 2.20  & 2.27  & 2.18  & 1.94  & 2.02  & 1.85  & 1.77 \\
           \cmidrule(rl){2-14}
          & 1.6   & 125   & 11.91 & 10.07 & 10.04 & 9.97  & 9.08  & 9.97  & 9.07  & 8.52  & 8.51  & 6.87  & 6.41 \\
          &       & 250   & 10.98 & 7.13  & 7.12  & 6.90  & 6.17  & 6.90  & 6.16  & 5.61  & 5.72  & 4.91  & 4.62 \\
          &       & 500   & 10.51 & 4.88  & 4.88  & 4.77  & 4.25  & 4.77  & 4.22  & 3.82  & 3.92  & 3.46  & 3.29 \\
          &       & 750   & 10.33 & 3.99  & 3.99  & 3.80  & 3.43  & 3.80  & 3.41  & 3.06  & 3.15  & 2.80  & 2.70 \\
          &       & 1000  & 10.26 & 3.57  & 3.57  & 3.40  & 3.00  & 3.40  & 2.97  & 2.62  & 2.73  & 2.39  & 2.31 \\
                   \bottomrule \toprule
           (10,2) & 1     & 125   & 2.62  & 2.46  & 2.46  & 2.46  & 2.46  & 2.46  & 2.46  & 2.45  & 2.50  & 2.52  & 2.44 \\
          &       & 250   & 1.96  & 1.76  & 1.76  & 1.83  & 1.79  & 1.83  & 1.79  & 1.77  & 1.81  & 1.81  & 1.76 \\
          &       & 500   & 1.50  & 1.25  & 1.25  & 1.29  & 1.26  & 1.29  & 1.26  & 1.25  & 1.29  & 1.28  & 1.25 \\
          &       & 750   & 1.31  & 1.02  & 1.02  & 1.05  & 1.03  & 1.05  & 1.03  & 1.02  & 1.06  & 1.04  & 1.01 \\
          &       & 1000  & 1.20  & 0.89  & 0.89  & 0.91  & 0.90  & 0.91  & 0.90  & 0.89  & 0.93  & 0.90  & 0.88 \\
           \cmidrule(rl){2-14}
          & 1.2   & 125   & 4.42  & 3.72  & 3.71  & 3.90  & 3.62  & 3.90  & 3.62  & 3.66  & 3.72  & 3.64  & 3.52 \\
          &       & 250   & 3.49  & 2.58  & 2.58  & 2.68  & 2.60  & 2.68  & 2.60  & 2.57  & 2.63  & 2.61  & 2.53 \\
          &       & 500   & 2.98  & 1.82  & 1.82  & 1.88  & 1.84  & 1.88  & 1.84  & 1.81  & 1.88  & 1.83  & 1.79 \\
          &       & 750   & 2.81  & 1.48  & 1.48  & 1.51  & 1.50  & 1.51  & 1.50  & 1.47  & 1.53  & 1.48  & 1.46 \\
          &       & 1000  & 2.70  & 1.29  & 1.29  & 1.31  & 1.30  & 1.31  & 1.30  & 1.28  & 1.35  & 1.28  & 1.27 \\
           \cmidrule(rl){2-14}
          & 1.4   & 125   & 7.17  & 5.50  & 5.48  & 5.76  & 5.22  & 5.76  & 5.22  & 5.14  & 5.19  & 4.97  & 4.79 \\
          &       & 250   & 6.16  & 3.65  & 3.65  & 3.76  & 3.61  & 3.76  & 3.61  & 3.58  & 3.66  & 3.54  & 3.44 \\
          &       & 500   & 5.65  & 2.56  & 2.56  & 2.61  & 2.55  & 2.61  & 2.55  & 2.52  & 2.61  & 2.48  & 2.44 \\
          &       & 750   & 5.47  & 2.08  & 2.08  & 2.10  & 2.08  & 2.10  & 2.08  & 2.05  & 2.13  & 2.00  & 1.98 \\
          &       & 1000  & 5.37  & 1.82  & 1.82  & 1.82  & 1.82  & 1.82  & 1.82  & 1.78  & 1.88  & 1.74  & 1.73 \\
           \cmidrule(rl){2-14}
          & 1.6   & 125   & 10.81 & 8.30  & 8.27  & 8.56  & 7.41  & 8.56  & 7.41  & 7.19  & 7.29  & 6.48  & 6.25 \\
          &       & 250   & 9.82  & 5.29  & 5.29  & 5.33  & 4.93  & 5.33  & 4.93  & 4.84  & 4.95  & 4.61  & 4.50 \\
          &       & 500   & 9.28  & 3.51  & 3.51  & 3.52  & 3.43  & 3.52  & 3.43  & 3.36  & 3.49  & 3.21  & 3.18 \\
          &       & 750   & 9.11  & 2.89  & 2.89  & 2.86  & 2.79  & 2.86  & 2.79  & 2.73  & 2.83  & 2.61  & 2.59 \\
          &       & 1000  & 9.01  & 2.53  & 2.53  & 2.50  & 2.43  & 2.50  & 2.43  & 2.37  & 2.50  & 2.27  & 2.26 \\
          \bottomrule \toprule
           (10,3) & 1     & 125   & 2.78  & 2.62  & 2.62  & 2.83  & 2.67  & 2.79  & 2.67  & 2.84  & 2.73  & 2.64  & 2.46 \\
          &       & 250   & 2.05  & 1.84  & 1.84  & 1.95  & 1.88  & 1.94  & 1.88  & 1.81  & 1.90  & 1.86  & 1.77 \\
          &       & 500   & 1.56  & 1.28  & 1.28  & 1.35  & 1.30  & 1.34  & 1.30  & 1.26  & 1.33  & 1.30  & 1.25 \\
          &       & 750   & 1.39  & 1.06  & 1.06  & 1.12  & 1.08  & 1.11  & 1.08  & 1.03  & 1.10  & 1.07  & 1.03 \\
          &       & 1000  & 1.28  & 0.92  & 0.92  & 0.95  & 0.92  & 0.94  & 0.92  & 0.91  & 0.96  & 0.91  & 0.89 \\
          \cmidrule(rl){2-14}
          & 1.2   & 125   & 4.76  & 4.04  & 4.04  & 4.33  & 4.08  & 4.27  & 4.08  & 4.17  & 4.25  & 3.81  & 3.56 \\
          &       & 250   & 3.76  & 2.80  & 2.80  & 3.01  & 2.81  & 2.97  & 2.81  & 2.75  & 2.90  & 2.66  & 2.55 \\
          &       & 500   & 3.16  & 1.99  & 1.99  & 2.11  & 1.96  & 2.08  & 1.96  & 1.89  & 2.00  & 1.85  & 1.79 \\
          &       & 750   & 2.97  & 1.68  & 1.68  & 1.75  & 1.61  & 1.73  & 1.61  & 1.55  & 1.66  & 1.51  & 1.47 \\
          &       & 1000  & 2.87  & 1.48  & 1.48  & 1.51  & 1.40  & 1.50  & 1.40  & 1.37  & 1.46  & 1.28  & 1.27 \\
          \cmidrule(rl){2-14}
          & 1.4   & 125   & 7.92  & 6.49  & 6.49  & 6.85  & 6.07  & 6.80  & 6.07  & 6.36  & 6.49  & 5.16  & 4.84 \\
          &       & 250   & 6.91  & 4.39  & 4.39  & 4.64  & 4.16  & 4.60  & 4.16  & 4.13  & 4.20  & 3.61  & 3.47 \\
          &       & 500   & 6.15  & 3.17  & 3.17  & 3.28  & 2.86  & 3.25  & 2.86  & 2.77  & 2.92  & 2.51  & 2.44 \\
          &       & 750   & 5.93  & 2.75  & 2.75  & 2.79  & 2.34  & 2.78  & 2.34  & 2.24  & 2.39  & 2.03  & 2.00 \\
          &       & 1000  & 5.79  & 2.50  & 2.50  & 2.51  & 2.02  & 2.50  & 2.02  & 1.97  & 2.09  & 1.74  & 1.73 \\
          \cmidrule(rl){2-14}
          & 1.6   & 125   & 12.55 & 10.93 & 10.93 & 11.34 & 9.54  & 11.30 & 9.54  & 9.79  & 10.02 & 6.74  & 6.31 \\
          &       & 250   & 11.51 & 7.77  & 7.77  & 7.94  & 6.28  & 7.90  & 6.28  & 6.22  & 6.32  & 4.68  & 4.54 \\
          &       & 500   & 10.69 & 5.12  & 5.12  & 5.15  & 4.03  & 5.13  & 4.03  & 3.88  & 4.07  & 3.23  & 3.19 \\
          &       & 750   & 10.49 & 4.49  & 4.49  & 4.49  & 3.34  & 4.48  & 3.34  & 3.16  & 3.32  & 2.64  & 2.62 \\
          &       & 1000  & 10.34 & 4.11  & 4.11  & 4.10  & 2.80  & 4.09  & 2.80  & 2.74  & 2.87  & 2.27  & 2.26 \\

          \bottomrule
    \end{tabular}% 
        \end{adjustbox}
  \label{tab:RMSE_scale_S}%
\end{minipage}
\end{table}%

\begin{table}[htbp]
  \centering
   \caption{Means of 500 estimated number of clusters using different information criteria under model {M2}.}
  \begin{adjustbox}{max width=0.47\linewidth}
    \begin{tabular}{ccccccccccccc}
    \toprule
    \multicolumn{2}{c}{$(p,k)$}&\multicolumn{3}{c}{(6,2)} & &\multicolumn{3}{c}{(10,2)}&&\multicolumn{3}{c}{(10,3)} \\
    \cmidrule(rl){1-2} \cmidrule(rl){3-5}  \cmidrule(rl){7-9} \cmidrule(rl){11-13} 
     $\sigma$      &  $n$     & $\text{SPIC}$       & $\text{BIC}_{n}$      & $\text{BIC}_{t}$ &  & $\text{SPIC}$      & $\text{BIC}_{n}$      & $\text{BIC}_{t}$  &  & $\text{SPIC}$      & $\text{BIC}_{n}$      & $\text{BIC}_{t}$       \\  
     \midrule
    1     & 125   & 2.00  & 1.99  & 2.01  &           & 2.00  & 1.99  & 2.00  &           & 3.00  & 3.04  & 2.97 \\
                    & 250   & 2.00  & 2.00  & 1.98  &           & 2.00  & 2.00  & 2.00  &           & 3.00  & 3.00  & 3.00 \\
                    & 500   & 2.00  & 2.00  & 2.00  &           & 2.00  & 2.00  & 2.00  &           & 3.00  & 3.00  & 3.00 \\
                    & 750   & 2.00  & 2.00  & 2.00  &           & 2.00  & 2.00  & 2.00  &           & 3.00  & 3.00  & 3.00 \\
                    & 1000  & 2.00  & 2.00  & 2.00  &           & 2.00  & 2.00  & 2.00  &           & 3.00  & 3.00  & 3.00 \\
                    \midrule
    1.2   & 125   & 2.00  & 1.98  & 2.00  &           & 1.96  & 1.95  & 2.00  &           & 2.98  & 3.00  & 2.67 \\
                    & 250   & 2.01  & 1.99  & 2.00  &           & 2.00  & 2.00  & 2.00  &           & 3.01  & 3.00  & 3.00 \\
                    & 500   & 2.00  & 2.00  & 2.00  &           & 2.00  & 2.00  & 2.00  &           & 3.00  & 3.00  & 3.00 \\
                    & 750   & 2.00  & 2.00  & 2.00  &           & 2.00  & 2.00  & 2.00  &           & 3.00  & 3.00  & 3.00 \\
                    & 1000  & 2.00  & 2.00  & 2.00  &           & 2.00  & 2.00  & 2.00  &           & 3.00  & 3.00  & 3.00 \\
                    \midrule
    1.4   & 125   & 1.85  & 1.96  & 1.86  &           & 1.87  & 1.94  & 1.97  &           & 2.67  & 2.94  & 2.10 \\
                    & 250   & 2.00  & 1.98  & 1.90  &           & 2.00  & 1.98  & 2.00  &           & 2.98  & 3.00  & 2.93 \\
                    & 500   & 2.01  & 1.99  & 2.00  &           & 2.00  & 1.99  & 2.00  &           & 3.00  & 3.00  & 3.00 \\
                    & 750   & 2.00  & 1.98  & 2.00  &           & 2.00  & 2.00  & 2.00  &           & 3.00  & 3.00  & 3.00 \\
                    & 1000  & 2.00  & 2.00  & 2.00  &           & 2.00  & 2.00  & 2.00  &           & 3.00  & 3.00  & 3.00 \\
                    \midrule
    1.6   & 125   & 1.62  & 1.95  & 1.44  &           & 1.52  & 1.93  & 1.69  &           & 2.20  & 2.45  & 2.00 \\
                    & 250   & 1.99  & 1.96  & 1.59  &           & 1.96  & 1.96  & 1.99  &           & 2.78  & 2.94  & 2.29 \\
                    & 500   & 2.00  & 1.96  & 2.00  &           & 2.00  & 1.98  & 2.00  &           & 3.00  & 3.00  & 3.00 \\
                    & 750   & 2.00  & 1.97  & 2.00  &           & 2.00  & 1.99  & 2.00  &           & 3.00  & 3.00  & 3.00 \\
                    & 1000  & 2.00  & 2.00  & 2.00  &           & 2.00  & 2.00  & 2.00  &           & 3.00  & 3.00  & 3.01 \\
          \bottomrule
    \end{tabular}% 
        \end{adjustbox}
  \label{tab:SPIC_S}%
\end{table}%

\begin{figure}[htbp]
\centering
 \includegraphics[width=15cm,height=15cm]{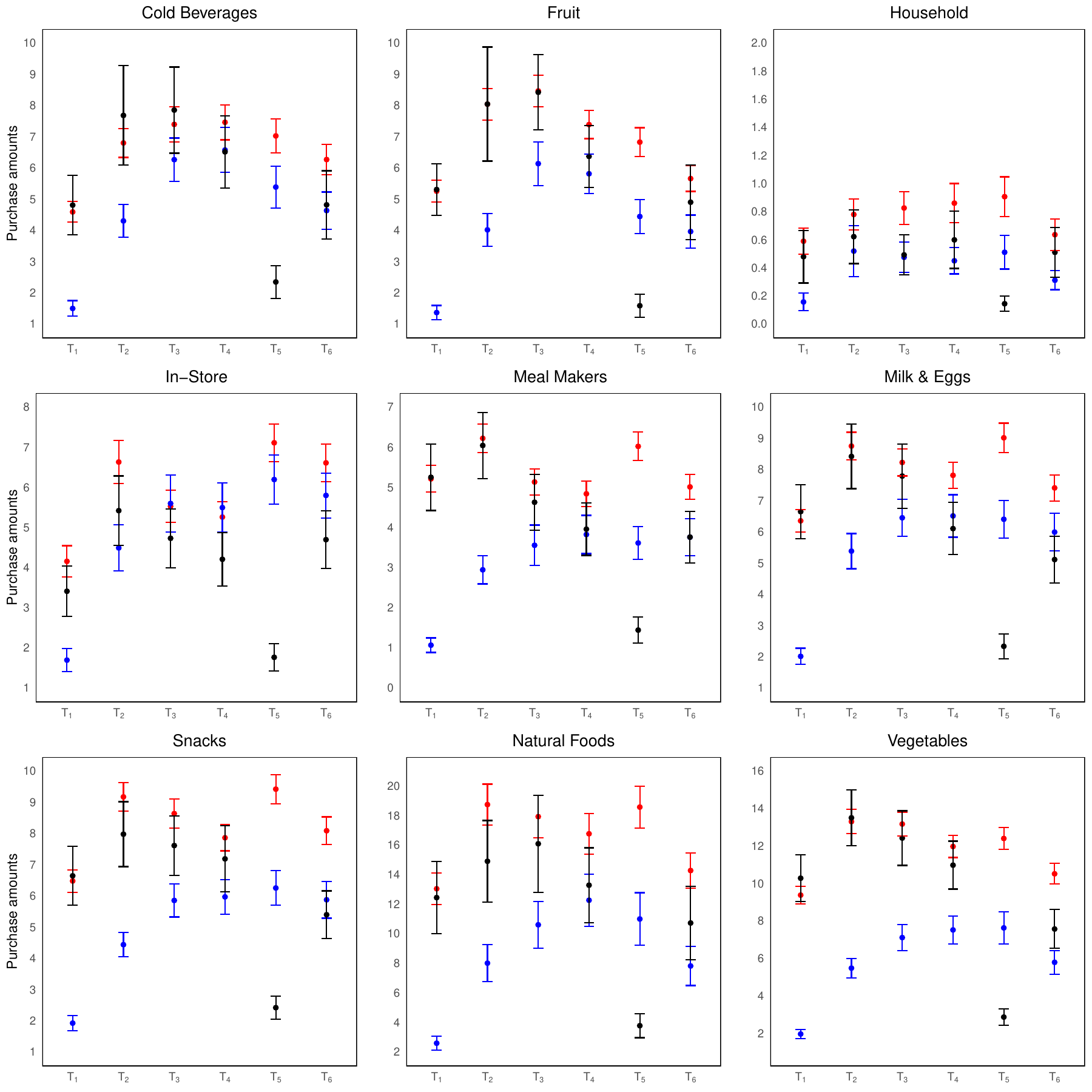}
 \caption{{\footnotesize Average purchase amounts and approximate 95\% confidence intervals for the nine major product categories, stratified by cluster. Results for Clusters 1, 2, and 3 are shown in red, blue, and black, respectively. 
 }} \label{fig:Sindian_mean_D} 
\end{figure}

\begin{table}[htbp]
  \centering
  \caption{Average weekly purchase amounts (AWPA) and average purchase amounts (APA), with standard errors in parentheses, for the nine major product categories, stratified by cluster.}
  \begin{adjustbox}{max width=0.47\linewidth}
      \begin{tabular}{ccccccc}
            \toprule
& \multicolumn{3}{c}{AWPA} & \multicolumn{3}{c}{APA}  \\
\cmidrule(rl){2-4} \cmidrule(rl){5-7}
          & \multicolumn{3}{c}{Cluster} & \multicolumn{3}{c}{Cluster}  \\
\cmidrule(rl){2-4} \cmidrule(rl){5-7}
    Product category & 1     & 2     & 3 & 1     & 2     & 3 \\
\midrule
    Cold Beverages & 0.67  & 0.44  & 0.56  & 1.21  & 0.80   & 1.03 \\
          & (0.020) & (0.021) & (0.048) & (0.036) & (0.039) & (0.091) \\
    Fruit & 0.71  & 0.38  & 0.57  & 1.27  & 0.70   & 1.06 \\
          & (0.017) & (0.017) & (0.042) & (0.030) & (0.031) & (0.080) \\
    Household & 0.08  & 0.04  & 0.05  & 0.15  & 0.07  & 0.09 \\
          & (0.004) & (0.003) & (0.004) & (0.008) & (0.005) & (0.008) \\
    In-Store & 0.58  & 0.46  & 0.39  & 1.04  & 0.84  & 0.73 \\
          & (0.017) & (0.021) & (0.022) & (0.030) & (0.038) & (0.042) \\
    Meal Makers & 0.56  & 0.27  & 0.41  & 1.01  & 0.51  & 0.76 \\
          & (0.012) & (0.012) & (0.022) & (0.022) & (0.023) & (0.041) \\
    Milk \& Eggs & 0.84  & 0.51  & 0.62  & 1.51  & 0.94  & 1.15 \\
          & (0.018) & (0.021) & (0.032) & (0.031) & (0.038) & (0.060) \\
    Snacks & 0.90   & 0.52  & 0.67  & 1.62  & 0.96  & 1.25 \\
          & (0.017) & (0.018) & (0.031) & (0.031) & (0.033) & (0.057) \\
    Natural Foods & 1.47  & 0.61  & 1.04  & 2.67  & 1.12  & 1.94 \\
          & (0.050) & (0.041) & (0.091) & (0.091) & (0.075) & (0.169) \\
    Vegetables & 1.32  & 0.60   & 1.03  & 2.38  & 1.11  & 1.92 \\
          & (0.024) & (0.023) & (0.047) & (0.043) & (0.042) & (0.086) \\
\bottomrule
    \end{tabular}%
  \label{tab:addlabel}%
  \end{adjustbox}
  \end{table}%

% Table generated by Excel2LaTeX from sheet 'Sheet 1'
\begin{table}[htbp]
  \centering
  \caption{Sample means (standard errors) of six biomarkers and $\textit{age}$, stratified by diabetes status, gravidity status, and $(\mathit{ds}, \mathit{gs})$-based grouping.}
   \begin{adjustbox}{max width=0.47\linewidth}
    \begin{tabular}{cccccccccc}
\toprule
          &    \multicolumn{2}{c}{\textit{ds}}        & \multicolumn{2}{c}{\textit{gs}}       & \multicolumn{4}{c}{(\textit{ds},\textit{gs})}    \\
\cmidrule(rl){2-3} \cmidrule(rl){4-5} \cmidrule(rl){6-9}
     Variable     &  0     & 1     & 0     & 1     & (0,0) & (1,0) & (0,1) & (1,1) \\
\midrule
    $\textit{gtt}$   & -0.36 & 0.72  & -0.20 & 0.24  & -0.47 & 0.63  & -0.19 & 0.78 \\
          & (0.054) & (0.075) & (0.066) & (0.074) & (0.066) & (0.123) & (0.09) & (0.095) \\
    $\textit{dpb}$   & -0.14 & 0.27  & -0.16 & 0.18  & -0.25 & 0.14  & 0.04  & 0.36 \\
          & (0.06) & (0.088) & (0.071) & (0.07) & (0.077) & (0.166) & (0.096) & (0.098) \\
    $\textit{tsft}$  & -0.18 & 0.37  & -0.04 & 0.05  & -0.21 & 0.50  & -0.13 & 0.28 \\
          & (0.064) & (0.072) & (0.072) & (0.071) & (0.083) & (0.112) & (0.101) & (0.094) \\
    $\textit{si}$    & -0.24 & 0.49  & -0.12 & 0.14  & -0.29 & 0.41  & -0.18 & 0.54 \\
          & (0.06) & (0.077) & (0.071) & (0.071) & (0.077) & (0.145) & (0.096) & (0.087) \\
    $\textit{bmi}$   & -0.19 & 0.39  & 0.01  & -0.01 & -0.20 & 0.69  & -0.19 & 0.20 \\
          & (0.064) & (0.073) & (0.077) & (0.063) & (0.086) & (0.133) & (0.091) & (0.077) \\
    $\textit{dpf}$   & -0.14 & 0.29  & 0.02  & -0.02 & -0.08 & 0.32  & -0.25 & 0.27 \\
          & (0.06) & (0.088) & (0.069) & (0.075) & (0.078) & (0.145) & (0.096) & (0.11) \\
    $\textit{age}$   & -0.26 & 0.53  & -0.49 & 0.58  & -0.62 & -0.06 & 0.32  & 0.90 \\
          & (0.053) & (0.094) & (0.044) & (0.078) & (0.039) & (0.118) & (0.098) & (0.117) \\
\bottomrule
    \end{tabular}%
      \end{adjustbox}
        \label{tab:label2}%
\end{table}%

\begin{table}[htbp]
  \centering
   \caption{Sample means (standard errors) of six biomarkers and $\textit{age}$, stratified by diabetes status within each cluster (top panel) and by gravidity status within each cluster (bottom panel).}
     \begin{adjustbox}{max width=0.47\linewidth}
     \begin{tabular}{ccccccccc}
\toprule
        Cluster  & \multicolumn{2}{c}{1} & \multicolumn{2}{c}{2} & \multicolumn{2}{c}{3} & \multicolumn{2}{c}{4}  \\
\cmidrule(rl){1-1} \cmidrule(rl){2-3} \cmidrule(rl){4-5} \cmidrule(rl){6-7} \cmidrule(rl){8-9}  
        Variable  & $\textit{ds}=0$ & $\textit{ds}=1$ & $\textit{ds}=0$ & $\textit{ds}=1$ &$\textit{ds}=0$ & $\textit{ds}=1$ &$\textit{ds}=0$ & $\textit{ds}=1$  \\
\midrule
    $\textit{gtt}$   & -0.90 & -0.61 & 0.49  & 0.95  & -0.39 & 0.57  & -0.03 & 0.84 \\
          & (0.066) & (0.164) & (0.082) & (0.09) & (0.084) & (0.276) & (0.154) & (0.11) \\
    $\textit{dpb}$   & -0.29 & -0.04 & 0.34  & 0.17  & -0.40 & -0.11 & 0.25  & 0.59 \\
          & (0.097) & (0.257) & (0.123) & (0.151) & (0.109) & (0.219) & (0.134) & (0.12) \\
    $\textit{tsft}$  & 0.38  & 0.41  & 0.57  & 0.74  & -1.33 & -1.27 & 0.24  & 0.31 \\
          & (0.063) & (0.084) & (0.076) & (0.074) & (0.061) & (0.229) & (0.146) & (0.098) \\
    $\textit{si}$    & -0.66 & -0.34 & 0.72  & 0.65  & -0.39 & 0.02  & -0.12 & 0.65 \\
          & (0.084) & (0.178) & (0.109) & (0.116) & (0.097) & (0.193) & (0.149) & (0.121) \\
    $\textit{bmi}$   & 0.24  & 0.42  & 0.43  & 0.71  & -1.01 & -0.62 & -0.11 & 0.26 \\
          & (0.085) & (0.172) & (0.132) & (0.106) & (0.082) & (0.16) & (0.159) & (0.107) \\
    $\textit{dpf}$   & -0.25 & -0.12 & 0.45  & 0.48  & -0.30 & 0.43  & -0.30 & 0.13 \\
          & (0.099) & (0.303) & (0.132) & (0.126) & (0.098) & (0.27) & (0.172) & (0.147) \\
    $\textit{age}$   & -0.58 & -0.34 & -0.36 & -0.15 & -0.58 & -0.07 & 1.52  & 1.75 \\
          & (0.041) & (0.125) & (0.072) & (0.064) & (0.046) & (0.174) & (0.128) & (0.091) \\
\bottomrule
    \end{tabular}%
      \end{adjustbox}

  \centering
  \caption*{}
    \begin{adjustbox}{max width=0.47\linewidth}
    \begin{tabular}{ccccccccc}
\toprule
      Cluster    & \multicolumn{2}{c}{1} & \multicolumn{2}{c}{2} & \multicolumn{2}{c}{3} & \multicolumn{2}{c}{4}  \\
\cmidrule(rl){1-1} \cmidrule(rl){2-3} \cmidrule(rl){4-5} \cmidrule(rl){6-7} \cmidrule(rl){8-9}  
       Vairable   & $\textit{gs}=0$ & $\textit{gs}=1$ & $\textit{gs}=0$ & $\textit{gs}=1$ &$\textit{gs}=0$ & $\textit{gs}=1$ &$\textit{gs}=0$ & $\textit{gs}=1$  \\
\midrule
    $\textit{gtt}$   & -0.86 & -0.87 & 0.59  & 0.93  & -0.37 & -0.11 & 0.79  & 0.43 \\
          & (0.072) & (0.121) & (0.085) & (0.097) & (0.107) & (0.143) & (0.303) & (0.108) \\
    $\textit{dpb}$   & -0.27 & -0.23 & 0.21  & 0.3   & -0.47 & -0.16 & 0.45  & 0.45 \\
          & (0.105) & (0.184) & (0.148) & (0.122) & (0.118) & (0.179) & (0.276) & (0.097) \\
    $\textit{tsft}$  & 0.41  & 0.33  & 0.72  & 0.59  & -1.31 & -1.35 & 0.36  & 0.27 \\
          & (0.069) & (0.096) & (0.075) & (0.075) & (0.068) & (0.119) & (0.187) & (0.092) \\
    $\textit{si}$    & -0.62 & -0.62 & 0.64  & 0.73  & -0.39 & -0.24 & 0.71  & 0.27 \\
          & (0.091) & (0.147) & (0.114) & (0.11) & (0.113) & (0.146) & (0.302) & (0.108) \\
    $\textit{bmi}$   & 0.31  & 0.13  & 0.75  & 0.35  & -1.04 & -0.8  & 0.31  & 0.07 \\
          & (0.099) & (0.108) & (0.122) & (0.1) & (0.095) & (0.121) & (0.225) & (0.101) \\
    $\textit{dpf}$   & -0.24 & -0.23 & 0.57  & 0.34  & -0.15 & -0.34 & -0.34 & 0 \\
          & (0.114) & (0.17) & (0.117) & (0.142) & (0.121) & (0.152) & (0.24) & (0.125) \\
    $\textit{age}$   & -0.66 & -0.28 & -0.43 & -0.01 & -0.71 & -0.14 & 1.63  & 1.65 \\
          & (0.038) & (0.083) & (0.055) & (0.072) & (0.037) & (0.094) & (0.217) & (0.082) \\
\bottomrule
    \end{tabular}%
      \end{adjustbox}
           \label{tab:label3}%
\end{table}%

\end{document}